\newif\iffull 
\def\BibTeX{{\rm B\kern-.05em{\sc i\kern-.025em b}\kern-.08em
		T\kern-.1667em\lower.7ex\hbox{E}\kern-.125emX}}
\newcommand{\nextrule}{\hspace{2em}}
\renewcommand{\ALG@name}{Program}
\newtheorem{definition}{Definition}
\newtheorem{globaltheorem}{Theorem}
\newtheorem{globallemma}{Lemma}
\theoremstyle{definition}
\newtheorem{example}{Example}
\theoremstyle{plain}
\definecolor{blueC}{rgb}{0.0, 0.44, 1.0}
\definecolor{redC}{rgb}{0.8, 0.2, 0.2}
\newcommand{\benchmarkNumber}[0]{20\xspace}
\newcommand{\eval}[2]{{\llbracket{#1}\rrbracket^{#2}}}
\newcommand{\evalTable}[2]{{\llbracket{#1}\rrbracket^{#2}}}
\newcommand{\tp}[0]{tp}
\newcommand{\vl}[0]{vl}
\newcommand{\vw}[0]{v}
\newcommand{\tb}[0]{t}
\newcommand{\fv}[0]{fv}
\newcommand{\ie}[0]{i.e., }
\newcommand{\tuple}[1]{\langle #1 \rangle}
\newcommand{\powerset}[1]{\mathcal{P}{(#1)}}
\newcommand{\sep}[0]{ \ \mid \ }
\newcommand{\initM}[0]{m_{0}}
\newcommand{\users}[0]{\mathcal{U}}
\newcommand{\queries}[0]{\mathcal{Q}}
\newcommand{\mix}[0]{\mathrm{mix}}
\newcommand{\prg}[0]{\mathrm{prg}}
\newcommand{\stv}[0]{V}
\newcommand{\con}[0]{\mathrm{con}}
\newcommand{\dis}[0]{\mathrm{dis}}
\newcommand{\clEmpty}[0]{{\downarrow}}
\newcommand{\cl}[1]{{\downarrow}#1}
\newcommand{\clSet}[1]{{\downarrow}\{#1\}}
\newcommand{\tc}[0]{\mathrm{tc}}
\newcommand{\eq}[1]{{#1}_{\sim}}
\newcommand{\EQ}[1]{\llbracket{#1}\rrbracket}
\newcommand\Subseteq{\mathrel{\ooalign{$\subseteq$\cr\hidewidth\raise.225ex\hbox{$\Subset\mkern.5mu$}\cr}}}
\newcommand{\R}[0]{\mathbb{R}}
\renewcommand{\P}[0]{\mathbb{P}}
\renewcommand{\S}[0]{\mathbb{S}}
\newcommand{\Q}[0]{\mathbb{Q}}
\newcommand{\T}[0]{\mathbb{T}}
\newcommand{\proj}[2]{#1\negthickspace\downharpoonright_{#2}}
\newcommand{\mulStepEval}[2]{\xRightarrow{#1}\negthickspace_{#2}}
\newcommand{\toolname}[0]{\mbox{\textsc{DiVerT}}\xspace}
\newcommand{\tightpar}[1]{{\smallskip \noindent\bf #1. }}
\newcommand{\squote}[1]{\text{`}{\mathrm{#1}}\text{'}}
\newcommand{\stsubset}[0]{\sqsubseteq_{\mathrm{st}}}
\newcommand{\st}[0]{\mathrm{st}}
\newcommand{\itr}[0]{\mathrm{itr}}
\newcommand{\col}[0]{\mathrm{col}}
\newcommand{\dep}[0]{\mathrm{dep}}
\newcommand{\basicCodeStyle}{\ttfamily\small}
\newcommand{\keywordCodeStyle}{\bfseries}
\lstdefinelanguage{IMP}{
	keywords={if, while, then, else, input, out, do, skip, watch, write},
	sensitive=true,
	commentstyle=\small\itshape\ttfamily\textcolor{gray},
	keywordstyle=\textcolor{blue},
	identifierstyle=\ttfamily,
	basewidth={0.5em,0.5em},
	columns=fixed,
	comment=[l]{//},
	fontadjust=true,
	literate={=>}{{$\implies$}}3 {===}{{$\equiv$}}1 {=/=}{{$\not\equiv$}}1 {|>}{{$\triangleright$}}3 
	{\\/}{{$\vee$}}2 {/\\}{{$\wedge$}}2 {^}{{$\uparrow$}}1,
	morecomment=[s]{/*}{*/}
}
\ttfamily\textcolor{gray},
\begin{document}

\title{Disjunctive Policies for Database-Backed Programs}

\author{
	\IEEEauthorblockN{Amir M. Ahmadian}
	\IEEEauthorblockA{%
		\textit{KTH Royal Institute of Technology}\\
	}
	\and
	\IEEEauthorblockN{Matvey Soloviev}
	\IEEEauthorblockA{%
		\textit{KTH Royal Institute of Technology}\\
	}
	\and
	\IEEEauthorblockN{Musard Balliu}
	\IEEEauthorblockA{%
		\textit{KTH Royal Institute of Technology}\\
	}
}

\maketitle

\begin{abstract}
When specifying security policies for databases, 
it is often natural to formulate \emph{disjunctive} dependencies,
where a piece of information may depend on at most one of two dependencies
$P_1$ or $P_2$, but not both.
A formal semantic model of such disjunctive dependencies, the Quantale of Information,
was recently introduced by
Hunt and Sands
as a generalization of the Lattice of Information.
In this paper, we seek to contribute to the understanding of disjunctive dependencies
in database-backed programs and introduce a practical framework to statically
enforce disjunctive security policies. To that end, we 
introduce the \emph{Determinacy Quantale}, a new query-based structure 
which captures the ordering of disjunctive information in databases.
This structure can be understood as a query-based counterpart to the Quantale of Information.
Based on this structure, we design a sound enforcement mechanism to check
disjunctive policies for database-backed programs.
This mechanism is based on a type-based analysis
for a simple imperative language with database queries, which is
precise enough to accommodate a variety of row- and column-level database policies flexibly
while keeping track of disjunctions due to control flow.
We validate our mechanism by implementing it in a tool, \toolname, and demonstrate
its feasibility on a number of use cases.
\end{abstract}

\section{Introduction}\label{sec:introduction}

Database security and information flow security have largely evolved as two disparate areas~\cite{DBLP:journals/tdsc/BertinoS05,DBLP:dblp_journals/jsac/SabelfeldM03}, while sharing closely-related foundations and mechanisms to enforce security. Modern applications commonly rely on shared database backends to provide rich functionality to a multitude of mutually distrusting users. In response to frontend demands, database query languages, with features such as triggers, store procedures, and user-defined functions, have increasingly come to resemble
full-fledged programming languages, thus calling into question the adequacy of the underlying access control models~\cite{guarnieri2016strong,bender2014explainable}. 
A \emph{security policy} describes the totality of expectations that we have of a computer system in the face of adversaries that seek to satisfy objectives that may differ from ours. 
In the context of database systems, whose purpose is to retain and provide information, the security policies of interest constrain who is allowed to learn what parts of that information. 
A class of such security policies which has proven particularly challenging to enforce with the methods of database security are \emph{disjunctive policies}, which states that given two pieces of information, some entity may either learn one \emph{or} the other, but not both.

A common example of disjunctive policies are databases which contain personally identifiable information, such as medical trial data.
Biometric parameters of participants are important confounders that must be considered when drawing conclusions from the data, but at the same time releasing too many parameters of any one participant (such as their height, age and weight) might be sufficient to deanonymize them with high confidence~\cite{sweeney2002kanonymity}.
Hence, a security policy for such a database may specify that the user may learn height and age, or height and weight, or age and weight, but not all three.
Other examples of scenarios where disjunctive policies are useful include differential privacy~\cite{dwork2006differential} and secret sharing.

In this paper, we combine insights from database security and information flow research to develop a formal model for reasoning about disjunctive information in database-backed programs, and thus take a step towards reconciling the two fields.
Our model makes it possible to reason about the semantic information dependencies in a program that performs queries, and compare them against a disjunctive policy.
Building upon this, we propose a provably sound static enforcement mechanism that ensure that the policy is satisfied.

It is customary in information flow models to represent information as an equivalence relation on states, with the refinement order of equivalence relations corresponding to having more information.
This representation can be used for both the actual information conveyed by a computational process and the bound imposed on it as part of a simple, non-disjunctive security policy.
The possible equivalence relations on a given universe of states form a structure called the \emph{Lattice of Information} (LoI)~\cite{landauer1993lattice}, in which security-relevant questions can be answered, such as whether a program reveals no more information than is allowed by the security policy, or what information is revealed by the combination of two programs. Similar questions have been addressed in the database community using an analogous object called the Disclosure Lattice~\cite{bender2013fine}.
We observe that this definition is actually insufficient to characterize information, which motivates us to introduce a more specific structure based on query determinacy, the \emph{Determinacy Lattice} (DL).
The formal relation between the Disclosure Lattice or our definition and LoI was hitherto unexplored, and more importantly neither of them can be used to represent disjunctions as seen in our motivating example.

Recently, Hunt and Sands~\cite{hunt2021quantale} proposed a new information flow structure called the \emph{Quantale of Information} (QoI), which seeks to address this shortcoming and establish a formal setting for representing, combining and comparing disjunctions of information.
We build upon this work to introduce an analogous structure, the \emph{Determinacy Quantale} (DQ), representing disjunctive dependencies in database-backed programs.
As we show, this structure can be formally related to the QoI, and this relationship is analogous to that between the LoI and the DL. We then use the DQ to design a knowledge-based security condition that relates disjunctive dependencies in database-backed programs to disjunctive policies.

We are the first to address the problem of enforcing disjunctive policies.
Prior works that develop language-based enforcement techniques in database-backed applications do not support disjunctive policies, while database-level dependencies are restricted to coarse approximations that incorrectly reject secure programs, such as our previous example~\cite{chong2007sif,corcoran2009cross,balliu2016jslinq,DBLP:journals/pacmpl/ParkerVH19,guarnieri2019information}.

Perhaps unsurprisingly, path sensitivity of a static analysis is key to capturing disjunctive dependencies. We show how standard flow-sensitive type-based dependency analysis~\cite{delft2015very} can be adapted to a compositional path-sensitive analysis and thus capture disjunctive dependencies in terms of database queries. 
To represent these dependencies in the DQ model, we introduce a sound approximation of the information disclosed by each database query which is precise enough to represent complex combinations of both row- and column-level dependencies.
Finally, in the DQ, the combination of these analyses can be proven sound with respect to our security condition. 
We expect that the overall architecture of the resulting soundness proof, in which we relate a sequence of abstractions of the behaviour of a program to ordered elements of the DQ, can be generalized to many other enforcement mechanisms for our security condition.

To demonstrate the practicality of our approach, we implement this type-based dependency analysis and query approximation for database-backed programs and evaluate it on a test suite and some use cases which effectively illustrate the need for disjunctive dependencies and disjunctive policies.

{\iffull
	{}
\else
	{We refer the readers to the full version of the paper~\cite{fullreport} for the proofs of the lemmas and theorems we present.}
\fi}

\tightpar{Summary of contributions}
\begin{itemize}
\item We introduce a formal model for reasoning about \mbox{disjunctive} dependencies and policies in databases.
In the process, we show how to reconcile perspectives from the database security and information flow communities.

\item We introduce a database-specific model of knowledge, the Determinacy Lattice, and a disjunctive extension, called the Determinacy Quantale, and explore their relationship to established general-purpose semantic models.
\item Using our model, we define an extensional security condition for database-backed programs that
accommodates disjunctive policies.
\item We propose a type-based program analysis to capture disjunctive dependencies in database-backed programs, combine them with a novel abstraction of queries, and prove them sound with respect to our security condition.
This is presented as an instance of a generalizable architecture for such soundness proofs.
	\item We implement a prototype tool that uses type-based dependency analysis and query approximation to verify query-based disjunctive policies for database-backed programs, and demonstrate its feasibility on a test suite and a number of use cases.
\end{itemize}
\vspace{4mm}

The rest of paper is structured as follows. After reviewing preliminaries in Section~\ref{sec:preliminaries}, we give our account of the DL and introduce the DQ in Section~\ref{sec:determinacy_quantale}. In Section~\ref{sec:security_model}, we formalize our model of database-backed programs and the security policies we impose on them, culminating in a formal security condition.
We present enforcement mechanisms in Section~\ref{sec:enforcement}, and their implementation and evaluation in Section~\ref{sec:implementation}. In Section~\ref{sec:related_work}, we contextualize our contributions with a discussion of related work, and finally summarize conclusions in Section~\ref{sec:conclusion}.

\section{Background}\label{sec:preliminaries}

\subsection{Lattice of Information}\label{sec:LoI}
An equivalence relation ${\sim} \subseteq A\times A$  on a set $A$ is a binary relation that is reflexive, symmetric, and transitive. For example, the equivalence relation \texttt{parity} on the set $A = \{ 0, 1 , 2 , 3 \}$ is defined as $\{ (x,y) \mid x,y \in A \wedge x \ mod \ 2 = y \ mod \ 2\}$.
An equivalence relation partitions its underlying domain into disjoint equivalence classes. Given an equivalence relation $P$ on a set $A$ and $a \in A$, $[a]_P$ denotes the unique equivalence class induced by $P$ that $a$ belongs to. We write $[P]$ to denote the set of all equi\-valence classes induced by $P$.
We call $[P]$ a \emph{partition} of $A$ and hereafter we may also refer to each element, i.e. equivalence class, of the partition $[P]$ as a \emph{cell}. For example, \texttt{parity} partitions $A$ into cells $\{ 0, 2\}$ and $\{1,3\}$.

Equivalence relations over states are commonly used to represent an agent's knowledge, by relating two states whenever the agent cannot distinguish between them.
When an equivalence relation models knowledge, we also call the cells induced by it \emph{knowledge sets}. These have a distinct intuitive interpretation when we consider functions $f$ that take in some state and return an agent's \emph{view} of it.
We will write the equivalence relation induced by the output of $f$ as $\sim_f = \{ (x,y) \mid f(x)=f(y) \}$.
In that case, in a state $a$, the knowledge set $[a]_{\sim_f}$ represents the agent's remaining uncertainty about the state, in the sense of all the states that the agent still considers possible, after observing the output of $f$. The agent \emph{knows} anything that is true in all states in the knowledge set.
In this paper, we use the terms knowledge and information interchangeably.

A complete lattice is a set equipped with a partial ordering (reflexive, antisymmetric, and transitive) relation, maximal and minimal elements $\top$ and $\bot$ for this relation and a join (least upper bound) for any subset of elements. The meet (greatest lower bound) of a subset can be defined as the join of the set of all lower bounds of that subset~\cite{kaplansky2001set}.
The \emph{Lattice of Information (LoI)} \cite{landauer1993lattice} is a structure for representing the ordering of information with equivalence relations. Let $\mathcal{L}(A)$ be the set of all equivalence relations defined on a given domain $A$.
The LoI ranks these equivalence relations based on the information they reveal about the underlying domain. Given two equivalence relations $P, Q \in \mathcal{L}(A)$, this ordering can be defined as follows:
\begin{align*}
	P \sqsubseteq Q \rightarrow \forall a,a' \in A \ \ (a \ Q \ a' \Rightarrow a \ P \ a')
\end{align*} 

For any set $S \subseteq \mathcal{L}(A)$, the least upper bound of $S$ is the equivalence relation $R$ defined as:
\begin{align*}
	\forall x,y \in A \ (x \ R \ y \leftrightarrow \forall P \in S. \ x \ P \ y).
\end{align*}

Formally, $LoI(A) = \tuple{\mathcal{L}(A), \sqsubseteq, \bigsqcup}$ denotes  the LoI on do\-main $A$, with ordering relation $\sqsubseteq$ and join $\bigsqcup$. %
The top element $\top$ in the lattice is the most precise equivalence relation \texttt{id} such that $\texttt{id} = \{ (x,y) \mid x, y \in A \wedge x = y\}$, and the bottom element $\bot$ is the least precise equivalence relation $\texttt{all} = \{ (x,y) \mid x, y \in A\}$. %

The join of any two equivalence relations $P\sqcup Q$ , being their least upper bound, is the \emph{least} informative equivalence relation that is at least as informative as either of $P$ and $Q$ (i.e. is an \emph{upper bound} on both), and thus represents the information that is conveyed from learning both $P$ \emph{and} $Q$. We refer to this as the \emph{conjunction} of the information in $P$ and $Q$.

\subsection{Quantale of Information}\label{sec:QoI}
The LoI captures the conjunction of any two information sources $P$ and $Q$ as the join of their respective equivalence relations.
However, it does not offer an operator that would yield a representation of their \emph{disjunction}, that is, the information that can be obtained from having access to one of them, but not both. In fact, the disjunction can not in general be represented as a single equivalence relation, and thus an element of the LoI, at all.
To address this limitation, Hunt and Sands~\cite{hunt2021quantale} propose a generalization of the LoI called the \emph{Quantale of Information} (QoI).
A quantale is a complete lattice with an additional binary ``tensor'' operator $\otimes$. In the QoI, the tensor is used to represent conjunction, while the lattice join represents \emph{dis}junction.

The core idea behind the quantale structure is to interpret the disjunction $P_1\vee\ldots\vee P_n$ of several knowledge relations as describing all knowledge relations $R$ in which the knowledge always comes from one of the $P_i$. More concretely, in any possible state $a\in A$, the agent's knowledge $[a]_R$ should equal its knowledge in the same state in one of the disjuncts, $[a]_{P_i}$. \emph{Which} disjunct it is may depend on the state, so the agent may have knowledge from $P_i$ in the state $a$ but knowledge from $P_j$ in some other state $a'$.
Relations $R$ that satisfy this condition are called \emph{tilings}, based on a picture of covering (since every state needs to be in some equivalence class) the space of possible states $A$ with knowledge sets drawn from any of the disjuncts. Following Hunt and Sands, we define the set of all tilings
\begin{align*}
	\mix(\P) = \{ R \in LoI(A) \mid x \in [R] \Rightarrow (\exists P \in \P . \, x \in [P]) \},
\end{align*}
where $\P$ is a set of equivalence relations.

We would like to think of a relation $R'$ as describing no more knowledge than a disjunction $\bigvee\P$ if it's bounded above by \emph{some} $R\in \mix(\P)$ in the LoI, and more generally define the quantale ordering $\S \sqsubseteq \T$ for $\S, \T \subseteq \mathcal{L}(A)$ as $\forall S \in \S, \ \exists T \in \T. \ S \sqsubseteq T$. 
The resulting relation is not antisymmetric on general sets of relations or even $\mix$es of general sets, reflecting the circumstance that there may be multiple $\mix$es representing the same knowledge.
As it is standard in lattice theory~\cite{davey2002introduction}, we use the downwards closure operator $\Downarrow$ to obtain canonical representations of the order cycles of $\sqsubseteq$ and hence construct a partial order.
\begin{align*}
	{\Downarrow} \P = \{ Q \in LoI(A) \mid Q \sqsubseteq \P \}
\end{align*}
The \emph{tiling closure} of a set of equivalence relations $\P$,
\begin{align*}
	\tc(\P) = {\Downarrow} \mix(\P),
\end{align*}
then canonically represents the knowledge permitted by the disjunction $\bigvee \P$. The set $\tc(\P)$ can still be interpreted as a list of possible equivalence relations, now including any equivalence relation that does not reveal more information than the disjunction.

We then take the elements of the QoI on a state set $A$ to be all tiling closures of subsets of $A$, with the ordering $\sqsubseteq$ being set inclusion.
For the tensor $\P \otimes \Q = \tc(\{ P \sqcup Q \mid P \in \P, Q \in \Q \})$, we rely on the join operator of the LoI $\sqcup$ to calculate the least upper bound of any possible pair of equivalence relations in $\P$ and $\Q$ and then canonicalise the result. Since the sets are interpreted disjunctively, the join $\bigvee_i \P_i$ can simply be defined as $\tc(\bigcup_i \P_i)$.

\begin{example}
    Program~\ref{prog:mix_example} operates on a secret integer $x$ between -2 and 3,
    outputting to user $u$ whether it is greater than zero, and \emph{either} (if it isn't) whether it is
    even, \emph{or} (if it is) whether it equals 0 or 1 (by dividing by 2, rounding down and testing for 0).
    We expect the information released by the program ($\sim_{\mathrm{\prg}}$ in Fig.~\ref{fig:mix_example}) to be bounded by the disjunction of the knowledge relations capturing the two possible branches (resp. $Q$, $P$).
\begin{figure}
	\begin{lstlisting}[label=prog:mix_example]
if (x <= 0) then
	out(-1  ,u);
	out(x mod 2 == 0, u);
else
	out(1, u);
	out(x div 2 == 0, u);
	\end{lstlisting}
\end{figure}
	
    This could not be accurately expressed with LoI operations, since $Q$, $P$ and $\sim_{\mathrm{\prg}}$ are all incomparable, but the join of $Q$ and $P$ (as the only available nontrivial way of combining them) is equal to $\top$ and so would equally bound a program that directly releases $x$.
	However, $\sim_{\mathrm{\prg}}$ can be tiled with equivalence classes from $Q$ and $P$, and we in fact have $\mix(\{Q,P\})=\{Q,P,R,\sim_{\mathrm{\prg}}\}$.
So in the QoI, $\tc(\{\sim_{\mathrm{\prg}}\}) \sqsubseteq \tc(\{Q,P\})$, and hence ${\sim_{\mathrm{\prg}}} \sqsubseteq Q \vee P$.
\def\cll#1{>{\arrayrulecolor{#1}} - >{\arrayrulecolor{black}}}
\def\hhhline#1{\expandafter\hhline\expandonce{#1}}
	\begin{figure}
		\centering
		\begin{tikzpicture}
			\setlength\arrayrulewidth{0.7pt}
			\begin{scope}[xscale=1.65]
				\node (all) [inner sep=0pt] at (0,0) {
					\begin{tabular}{|c c|}
						\hline
						-2 & -1 \\
						0 & 1 \\
						2 & 3 \\
						\hline
					\end{tabular}
				};
				\node[above] at (all.north) {$\mathrm{all}$};
				
				\node (Q) [inner sep=0pt] at (1,0) {
					\begin{tabular}{|c|c|}
						\hline
						\rowcolor{blueC!70} -2 & -1 \\
                        \hhline{>{\arrayrulecolor{blueC!70}} - >{\arrayrulecolor{black}}|-}
                        \rowcolor{blueC!70} 0 & 1 \\
                        \hhline{-|>{\arrayrulecolor{blueC!70}} - >{\arrayrulecolor{black}}}
                        \rowcolor{blueC!70}
						2 & 3 \\
						\hline
					\end{tabular}
				};
				\node[above] at (Q.north) {$Q$};
				
				\node (P) [inner sep=0pt] at (2,0) {
					\begin{tabular}{|c|c|}
						\hline
						\rowcolor{redC!20}
						\multicolumn{1}{|c}{-2} & -1 \\
						\hline
						\rowcolor{redC!20} 0 & 1 \\
						\hline
						\rowcolor{redC!20} \multicolumn{1}{|c}{2} & 3 \\
						\hline
					\end{tabular}
				};
				\node[above] at (P.north) {$P$};
				
				\node (E) [inner sep=0pt] at (3,0) {
					\begin{tabular}{|c|c|}
						\hline
						\cellcolor{blueC!70}{-2} & \cellcolor{blueC!70}{-1} \\
                        \hhline{>{\arrayrulecolor{blueC!70}} - >{\arrayrulecolor{black}}|-}
						\cellcolor{blueC!70}{0} & \cellcolor{redC!20}{1} \\
						\hline
						\multicolumn{1}{|c}{\cellcolor{redC!20}{2}} & \cellcolor{redC!20}{3} \\
						\hline
					\end{tabular}
				};
				\node[above] at (E.north) {$\sim_{\mathrm{\prg}}$};
				
				\node (R) [inner sep=0pt] at (4,0) {
					\begin{tabular}{|c|c|}
						\hline
                        \rowcolor{redC!20}
						\multicolumn{1}{|c}{-2} & -1 \\
						\hline
						\cellcolor{redC!20}0 & \cellcolor{blueC!70}1 \\
                        \hhline{-|>{\arrayrulecolor{blueC!70}} - >{\arrayrulecolor{black}}}
						\cellcolor{blueC!70}2 & \cellcolor{blueC!70}3 \\
						\hline
					\end{tabular}
				};
				\node[above] at (R.north) {$R$};
			\end{scope}
		\end{tikzpicture}
		\caption{Some equivalence relations on $\{-2,-1 ,0, 1 , 2 , 3 \}$}
		\label{fig:mix_example}
	\end{figure}
\end{example}

\section{Information Ordering in Databases}\label{sec:intro_to_databases}
Our goal is to introduce our semantic model for the information revealed by database queries,
the \emph{Determinacy Lattice}, and its extension to disjunctive dependencies, the \emph{Determinacy Quantale}.
To this end, we first review a standard formalism for reasoning about databases that we will employ.

\subsection{A Primer on Relational Database Models}\label{subsec:intro_to_databases}
We use the relational model to formally define databases \cite{abiteboul1995foundations}. 
In this model, we distinguish between the database schema $D$, which specifies the structure of the database, and the database state $db$, which specifies its actual content.

A database schema $D$ is a (nonempty) finite set of relation schemas $\tb$, written as $D = \{\tb_1, ..., \tb_n\}$.
A relation schema (table) $\tb$ is defined as a set of attributes paired with a set of constraints, where an attribute is a name paired with a domain.
The number of attributes in $\tb$ (written as $|\tb|$) is referred to as its arity. 
A tuple is a set of data representing a single record within a relation schema. Each tuple contains values for each attribute defined in the relation schema.

A \emph{database state} $db$ is a snapshot of the database schema $D$ at a particular point in time. It represents the actual data stored in the database, consisting of a collection of tables and their respective tuples. We write $\evalTable{\tb}{db}$ to represent the tuples of table $\tb$ under database state $db$.

We write $\mathrm{states}(D)$ to denote the set of all database states of $D$.
A database configuration is $\tuple{D, \Gamma}$ where $D$ is the database schema and $\Gamma$ is a set of integrity constraints. 
We denote $\Omega_D = \{ db \mid db \in \mathrm{states}(D) \ \wedge \vdash db : \Gamma \}$ where $\vdash$ is an appropriate notion of constraint $\Gamma$ being satisfied.
An integrity constraint is an assertion about a database that must be satisfied for a database state to be considered valid. Various classes of integrity constraints exist, for instance functional dependencies which capture primary-key constraints, and inclusion dependencies which are used in foreign-key constraints~\cite{abiteboul1995foundations}.

\tightpar{Relational calculus}
We rely on the Domain Relational Calculus (DRC) for our query language.  
In the DRC, a (non-boolean) query $q$ over a database schema $D$ has the form $\{\overline{x} \mid \phi\}$, where $\overline{x}$ is a sequence of variables, $\phi$ is a first order formula over $D$, and the free variables of $\phi$ are those in $\overline{x}$. The \emph{evaluation} of a query $q$, denoted by $\eval{q}{db}$, is the set of tuples that satisfy the formula $\phi$ with respect to $db$. A \emph{boolean query} is written as $\{ \ \mid{\phi}\}$, and its evaluation $\eval{q}{db}$ is defined to be the boolean value $\mathsf{true}$ if and only if some tuple in $db$ satisfies $\phi$.
We use $\queries$ to indicate the universe of all possible queries.

The domain relational calculus employed here follows the standard convention, and we refer the reader to the relevant literature for a more comprehensive description of DRC~\cite{abiteboul1995foundations}.

\begin{figure}[t]
	\centering
	\begin{align*}
		\mathrm{emp} : \ &\begin{tabular}{ |c|c|c| } 
			\hline
			\texttt{\underline{\textbf{n}}ame} & \texttt{\underline{\textbf{r}}ole} & \texttt{\underline{\textbf{s}}alary} \ \ \\ 
			\hline
		\end{tabular} \\ 
		\mathrm{mng} : \ &\begin{tabular}{ |c|c| } 
			\hline
			\texttt{\underline{\textbf{d}}ivision} & \texttt{\underline{\textbf{m}}anager} \\
			\hline
		\end{tabular}
	\end{align*}
	\caption{Database schema for employees and managers}
	\label{fig:relations_emp_man}
\end{figure}

\begin{example}
The database schema in Fig.~\ref{fig:relations_emp_man} contains relations for employees $\mathrm{emp}$ and managers $\mathrm{mng}$. A query returning the set of tuples containing the division names and the salary of the managers of each division can be written as:
\begin{align*}
	\{ (d, s) \mid \ &\exists n,r. \ \mathrm{emp}(n, r, s)  \wedge \exists m. \  \mathrm{mng}(d, m)  \wedge  n = m\}.
\end{align*}
\end{example}

\tightpar{Views}
In DRC, a database view is a relation defined by the result of a non-boolean query. Database views act as virtual tables and, as we will see, are useful when defining security policies. Formally, a view $\vw$ defined over database schema $D$ is a tuple $\tuple{id, q}$, where $id$ is the view identifier and $q$ is the non-boolean query over schema $D$ defining the view. The query $q$ may refer to other views, but we assume that views do not have cyclic dependencies.

The materialization of a view $\vw$ in a database state $db$ is the evaluation of its defining query $q$ in that state, \ie $\eval{q}{db}$. We use $\vw.q$ to refer to the defining query of view $\vw$. We extend relational calculus in the standard way to work with views~\cite{guarnieri2016strong}.

\subsection{Determinacy Lattice}\label{sec:determinacy_lattice}
Given query sets $Q, Q' \in \powerset{\queries}$, query determinacy~\cite{nash2010views} captures whether results of the queries in $Q$ are always sufficient to determine the result of the queries in $Q'$. %
\begin{definition}\label{def:query_determinacy}
	$Q$ determines $Q'$ (denoted by $Q \twoheadrightarrow Q'$) iff for all database states $db_1$, $db_2$, if $\eval{q}{db_1}$ = $\eval{q}{db_2}$ for all $q \in Q$, then $\eval{q'}{db_1}$ = $\eval{q'}{db_2}$ for all $q' \in Q'$.
\end{definition} 

Intuitively, $Q \twoheadrightarrow Q'$ means that pairs of databases for which all queries in $Q$ return the same result also give the same result under any query in $Q'$. 
This is in fact equivalent to the initial gloss that the results of queries in $Q'$ can be computed from the results of queries in $Q$, as we show in detail in {\iffull Appendix~\ref{sec:app:proof:determinacy_def_equiv}. \else the full version of the paper~\cite{fullreport}.\fi} 

Query determinacy allows us to define an ordering on sets of queries based on the information they reveal. We call this ordering \emph{determinacy order}, denote it by $\preceq$, and define it as $\forall Q, Q' \in \powerset{\queries}$, $Q \preceq Q'$ iff $Q' \twoheadrightarrow Q$. 

\begin{example}
	Consider queries $q_1 = \{(n,r) \mid \exists s. \ \mathrm{emp}(n,r,s)\}$ and $q_2 = \{(r) \mid \exists n,s. \ \mathrm{emp}(n,r,s)\}$
	defined on the relations of Fig.~\ref{fig:relations_emp_man}.
	Query $q_1$ discloses the $\mathrm{name}$ and the $\mathrm{role}$ of the employees while $q_2$ only returns their $\mathrm{role}$. Intuitively, $q_1$ reveals more information than $q_2$, which means $q_2 \preceq q_1$.
\end{example}

This definition of determinacy order is a preorder (reflexive and transitive), but not necessarily a partial order, as it is not anti-symmetric. In other words, $q_1 \preceq q_2$ and $q_2 \preceq q_1$ does not necessarily mean that $q_1 = q_2$. 
As in Section~\ref{sec:LoI}, this essentially means that query sets are not canonical representations of the information revealed by them. To rectify this, we form the closure $\clEmpty$ under the determinacy order, so the determinacy order becomes set inclusion.
Intuitively, $\cl{Q}$ will contain all the queries in $\queries$ whose answers can be inferred by the set of queries $Q$. Formally,  $\cl{Q}$ is defined as:
\begin{align*}
	\cl{Q} = \{ q \in \queries \mid \{q\} \preceq Q \}
\end{align*}

Using the definitions of determinacy order and closure $\clEmpty$, we can then define the Determinacy Lattice as follows:
\begin{definition}\label{def:determinacy_lattice}
	Given a universe of queries $\queries$, the Determinacy Lattice $DL(\queries)$ is a complete lattice $\tuple{\mathcal{L}, \sqsubseteq, \bigsqcup, \bot, \top}$ such that:
	\begin{itemize}
		\item $\mathcal{L} = \{ \cl{Q} \mid Q \subseteq \queries \}$
		\item $\cl{Q_1} \sqsubseteq \cl{Q_2}$ iff $Q_1 \preceq Q_2$
		\item $\bigsqcup_i \cl{Q_i}  = \cl{\bigcup_i Q_i}$
		\item $\bot = \cl{\varnothing}$, $\top = \cl{\queries}$,
	\end{itemize}
where $\preceq$ is the determinacy order on $\queries$.
\end{definition}

\tightpar{Disclosure order and information flow properties}
Our definition of the Determinacy Lattice is similar to the definition of the Disclosure Lattice introduced by Bender et al.~\cite{bender2013fine}. A Disclosure Lattice is a lattice built upon a disclosure order, which is a partial order on sets of queries satisfying additional conditions that are expected of an ordering according to the amount of information disclosed by each set of queries. Bender et al.~\cite{bender2013fine} define the disclosure order as follows: 
\begin{definition}\label{def:disclosure_order}
	Given a universe of queries $\queries$, a disclosure order $\preceq$ is a preorder on $\powerset{\queries}$ that satisfies the following properties:
	\begin{enumerate}
		\item For all $Q_1, Q_2 \in \powerset{\queries}$, if $Q_1 \subseteq Q_2$ then $Q_1 \preceq Q_2$
		\item If $\P \subseteq \powerset{\queries}$ and $\forall P \in \P, \ P \preceq Q$ then $\bigcup \P \preceq Q$
	\end{enumerate}
\end{definition}
The first property in this definition ensures that adding new elements to a set of queries only increases the amount of disclosed information and the second property allows us to derive a meaningful upper bound on the information disclosure.

The intended use of disclosure order was to order sets of queries based on the amount of information they reveal about the underlying database. However, we make the observation that this definition is not specific enough to characterize information disclosure in the information flow sense. For example, consider query containment~\cite{abiteboul1995foundations}, defined as:
\begin{definition}\label{def:query_containment}
	Given queries $q_1, q_2 \in \queries$, we say that $q_1$ is contained in $q_2$, denoted by $q_1 \subseteq q_2$, if for every database states $db \in \Omega_D$, we have $\eval{q_1}{db} \subseteq \eval{q_2}{db}$.
\end{definition}

Query containment satisfies all of the requirements of a disclosure order (Def.~\ref{def:disclosure_order}), but it is not enough to guarantee security. To illustrate this, consider a database with a single table $\tb$ given in Fig.~\ref{fig:table_exp_disclosure_dis}. 
\begin{figure}[H]
	\centering
	\begin{tabular}{ |c|} 
		\hline
		$\vl$ \\
		\hline
		\hline
		\texttt{$0$} \\
		\hline
		\texttt{$1$} \\
		\hline
		\texttt{$100+s$} \\
		\hline
	\end{tabular}
	\caption{Table $t$}
	\label{fig:table_exp_disclosure_dis}
\end{figure}

Table $\tb$ has a single column $\vl$, and contains values $0$, $1$, and $100+s$, where $s$ is a secret value that can be either $0$ or $1$. We thus consider two possible instances of this database, one where $\tb$ contains values $0$, $1$, and $100$ and another where it contains $0$, $1$, and $101$.
Now, consider the following queries:
\begin{align*}
	q_1 : \{ (\vl_1) \mid  \exists \vl_2. \ \tb1(\vl_1) \wedge \tb_2(\vl_2) &\wedge \vl_1 < 100 \} \\
	q_2 : \{ (\vl_1) \mid  \exists \vl_2. \ \tb1(\vl_1) \wedge \tb_2(\vl_2) &\wedge \vl_1 < 100 \\ &\wedge \vl_1 = \vl_2 - 100 \}
\end{align*}
\noindent
where $\tb_1$ and $\tb_2$ are just logical copies of table $\tb$. It is common practice to make logical copies of relation and use them in queries with
self-joins~\cite{wang2022conjunctive}. The result of query $q_1$ is always $0$ and $1$. The result of query $q_1$ is $1$ if the secret $s$ is $1$ and $0$ if $s$ is $0$. As it is evident, for these queries, query containment holds and the result of query $q_2$ is contained in the results of $q_1$. However, an observer seeing the result of query $q_2$ can learn the value of secret $s$.

This example illustrates that query containment (a disclosure order) is not sufficient to guarantee the confidentiality of the secret $s$ in an information flow setting. To ensure information flow security, we require a stronger condition, such as the notion of query determinacy order (Def.~\ref{def:query_determinacy}) that we chose to rely on in this paper.

\tightpar{Relation between the DL and the LoI} There exists a close relationship between the DL and the LoI. Specifically, a query $q$ defined over a database schema $D$ induces an equivalence relation $\eq{q}$ on database states $db$. We can formally define this equivalence relation as:
\begin{align*}
	\eq{q} = \{ (db_1 , db_2) \mid db_1, db_2 \in \Omega_D \wedge \eval{q}{db_1} = \eval{q}{db_2} \}
\end{align*}

We write $[\eq{q}]$ to denote the set of all equivalence classes induced by $q$. Given an equivalence relation $\eq{q}$ on set $\Omega_D$ and $db \in \Omega_D$, $[db]_{\eq{q}}$ denotes the equivalence class induced by $\eq{q}$ to which the database state $db$ belongs. 
We further lift this definition to sets of queries $Q = \{q_1,q_2,...,q_n\}$:
\begin{align*}
	\eq{Q} = \{ (db_1 , db_2) \mid db_1, db_2 \in \Omega_D &\bigwedge_{1 \leq i \leq n} \eval{q_i}{db_1} = \eval{q_i}{db_2} \}
\end{align*}

This interpretation of database queries as equivalence relations provides a direct connection between the DL and the LoI, where the lattice elements correspond to $\eq{Q}$, the ordering $\sqsubseteq$ to the determinacy order $\preceq$, and join and meet follow the definitions of the DL.

\begin{restatable}{globallemma}{determinacyLoILattice}\label{lemma:DL_imply_LoI}
	For all $\queries$, there is a complete lattice homomorphism from the Determinacy Lattice $DL(\queries)$ to the Lattice of In\-formation defined on $\{\eq{Q} \mid Q \in DL(\queries)\}$.
\end{restatable}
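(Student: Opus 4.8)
The plan is to exhibit the homomorphism explicitly as the map $h(\cl{Q}) = \eq{Q}$ that sends each determinacy-closed query set to the equivalence relation it induces on database states, and to show that $h$ is in fact an order-isomorphism onto its image $\{\eq{Q} \mid Q \in DL(\queries)\}$. The conceptual heart of the argument is a single bridging identity connecting query determinacy to relation inclusion: for all $Q, Q' \subseteq \queries$, we have $Q \twoheadrightarrow Q'$ iff $\eq{Q} \subseteq \eq{Q'}$ as sets of pairs. This is immediate from unfolding Definition~\ref{def:query_determinacy}, since the hypothesis that all queries in $Q$ agree on $(db_1, db_2)$ is exactly the statement $(db_1,db_2) \in \eq{Q}$, and likewise for $Q'$. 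I would first extend the definition of $\eq{\cdot}$ from finite sets to arbitrary query sets in the obvious way, $\eq{Q} = \{(db_1,db_2) \mid \forall q \in Q.\ \eval{q}{db_1}=\eval{q}{db_2}\}$, which is needed since closures $\cl{Q}$ and the top element $\cl{\queries}$ may be infinite.

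From this bridge everything else falls out mechanically. Unwinding the definitions of the determinacy order and the LoI order gives $Q_1 \preceq Q_2$ iff $\eq{Q_1} \sqsubseteq_{LoI} \eq{Q_2}$, and hence $\cl{Q_1} \sqsubseteq_{DL} \cl{Q_2}$ iff $\eq{Q_1} \sqsubseteq_{LoI} \eq{Q_2}$. This single equivalence simultaneously delivers monotonicity and order-reflection of $h$; well-definedness (if $\cl{Q_1}=\cl{Q_2}$ then each set determines the other, so the two relations are mutually below one another and thus equal by antisymmetry of the LoI, which also shows $\eq{\cl{Q}}=\eq{Q}$); and injectivity by the same antisymmetry argument. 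Since $h$ is surjective onto $\{\eq{Q}\}$ by construction, it is an order-isomorphism between $DL(\queries)$ and its image.

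Because $DL(\queries)$ is a complete lattice and $h$ is an order-isomorphism onto the image, the image inherits a complete-lattice structure under which $h$ preserves all joins and all meets; it therefore remains only to check that these transported operations coincide with the intended Lattice-of-Information operations on the carrier set. For joins this is a direct computation: $h(\bigsqcup_i \cl{Q_i}) = \eq{\bigcup_i Q_i} = \bigcap_i \eq{Q_i}$, and the right-hand side is exactly the LoI join (intersection of relations) of the $\eq{Q_i}$, so $h$ preserves arbitrary joins and sends $\bot_{DL}=\cl{\varnothing}$ to the all-relation. Meets are handled by the order-isomorphism, which is where the one genuine subtlety lies and which I expect to be the main obstacle: the ambient LoI meet of a family is the equivalence closure of the union of the relations, and this need not be of the form $\eq{Q}$ for any query set, so the image is not a priori closed under ambient LoI meets. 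The correct reading, and what the order-isomorphism supplies, is that the meet in the target is the one induced on the image as a complete lattice (the join of all lower bounds lying in the image), and $h$ preserves precisely this. I would make this explicit, both to keep the homomorphism claim honest and because this mismatch between the ambient and the induced meet is exactly the disjunction phenomenon that later motivates passing from the lattice to the Determinacy Quantale.
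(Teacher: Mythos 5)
Your proof is correct, and its core is the same as the paper's: the homomorphism is the same explicit map $\cl{Q} \mapsto \eq{Q}$, and every verification reduces to unfolding Definition~\ref{def:query_determinacy} against the DL and LoI orders. The difference lies in organization and in how much is actually established. The paper's proof (Appendix~\ref{sec:app:proof:determinacy_loi_lattice}) consists of two one-directional lemmas---DL order implies LoI order (Lemma~\ref{lemma:app:ordering_loi_determinacy_lattice}) and DL join implies LoI join (Lemma~\ref{lemma:app:join_loi_determinacy_lattice})---plus a trivial check of $\top$ and $\bot$, and stops there. You instead isolate the two-sided bridge $Q \twoheadrightarrow Q' \Leftrightarrow \eq{Q} \subseteq \eq{Q'}$, which buys strictly more: order-reflection as well as monotonicity, hence injectivity, well-definedness ($\eq{\cl{Q}} = \eq{Q}$), and an order-isomorphism onto the image, none of which the paper makes explicit (and your handling of the order directions is cleaner than the paper's, whose Lemma~\ref{lemma:app:ordering_loi_determinacy_lattice} is somewhat muddled about which implication is being unfolded). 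Most substantively, you are the only one to address meets at all: the paper's proof is silent on them even though a complete lattice homomorphism should preserve them, and you correctly identify that the carrier $\{\eq{Q} \mid Q \in DL(\queries)\}$ is not closed under ambient LoI meets, so the target's meet must be read as the induced one (the join of lower bounds within the carrier), which is exactly what your order-isomorphism transports; this reading is also the only one consistent with the lemma's phrasing of the LoI being \emph{defined on} that carrier. Your closing observation that this ambient-versus-induced mismatch is the disjunction phenomenon motivating the Determinacy Quantale is apt. The only other delta is your explicit extension of $\eq{\cdot}$ to infinite query sets, which is indeed necessary (closures $\cl{Q}$ are infinite) and which the paper uses tacitly.
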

{\iffull We prove this Lemma in Appendix~\ref{sec:app:proof:determinacy_loi_lattice}.\fi}
To the extent that we believe $\eq{Q}$ to accurately represent the information conveyed by the queries in $Q$, this lemma implies that joins and order comparisons can be performed in the DL without explicit reference to the LoI.

\subsection{Determinacy Quantale}\label{sec:determinacy_quantale}
We introduce a generalization of the Determinacy Lattice, called the \emph{Determinacy Quantale} (DQ),  to represent disjunctive dependencies. 
Our definition of the DQ is intended as a counterpart to the QoI~\cite{hunt2021quantale}, analogously to how the DL corresponds to the LoI. To achieve this, we define a query-set counterpart of the tiling closure operator to capture the disjunction of sets of queries. Since \emph{sets} of queries correspond to LoI elements (equivalence relations), disjunctive QoI elements (sets of equivalence relations) will be represented as \emph{sets of sets} of queries. Each set of queries in the outer set represents a possible combination of queries that does not reveal more information than is allowed by the disjunction.

Analogously to the QoI, the tiling closure of a set of sets of queries is defined by forming the downward closure under $\sqsubseteq$ (from the DL) of their \emph{mix}.
The query-set equivalent of the \emph{mix} operator is defined on a set of sets of queries $\Q = \{Q_1,...,Q_n\}$ such that $Q_i \in DL(\queries)$ for $\ i=1,...n$ as follows:
\begin{align*}
	\nonumber
	\mix(\Q) = \{ P \in DL(\queries) \mid x \in [\eq{P}] \Rightarrow (\exists Q \in \Q. x \in [\eq{Q}]) \}
\end{align*}
where $[\eq{Q}]$ denotes the equivalence classes of $Q$ as defined previously.
We then define the tiling closure for a set $\Q$ of elements of the DL as 
$	\tc(\Q) = {\Downarrow} \mix(\Q)$.

We then formally define the Determinacy Quantale $DQ(\queries)$ as follows.
\begin{definition}\label{def:determinacy_quantale}
Given a universe of queries $\queries$, 
	let $DL(\queries)$ be the Determinacy Lattice defined on $\queries$. The Determinacy Quantale $DQ(\queries)$ is the quantale $\tuple{\mathcal{I}, \sqsubseteq, \bigvee, \otimes, 1}$, with:
	\begin{itemize}
		\item $\mathcal{I} = \{\tc(\Q) \mid \Q \subseteq DL(\queries) \}$
		\item $\bigvee_i \P_i = \tc(\bigcup_i \P_i)$
		\item $\P \otimes \Q = \tc\Big( \bigcup_{P \in \P, Q \in \Q} (P \sqcup Q)\Big)$
		\item $\sqsubseteq = \subseteq $
		\item $\top = DL(\queries)$, $\bot = \varnothing$, $1 = \varnothing$,
	\end{itemize}
 where $\P, \Q \subseteq DL(\queries)$.
\end{definition}

In {\iffull Appendix~\ref{sec:app:proof:determinacyQuantale} \else the full version of the paper~\cite{fullreport},\fi} we show that Def.~\ref{def:determinacy_quantale} satisfies the usual quantale axioms~\cite{hunt2021quantale}.
As with the DL and LoI, the DQ embeds into a QoI by a quantale homomorphism. This QoI is defined
on sets of equivalence relations derived from sets of sets of queries by the following map:
\begin{definition}\label{def:set_query_to_set_eq}
	Given a set of sets of queries $\Q$, $$\EQ{\Q} = \{ \eq{Q} \mid Q \in \Q\}.$$
\end{definition}
We can then formally state the relationship between the DQ and this quantale as follows.
\begin{restatable}{globallemma}{DQimplyQoI}\label{lemma:DQ_imply_QoI}
	For all $\queries$, there is a quantale homomorphism from the Determinacy Quantale $DQ(\queries)$ to the Quantale of Information defined on $\{\EQ{\Q} \mid \Q \subseteq DL(\queries)\}$.
\end{restatable}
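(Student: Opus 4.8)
The plan is to transport the structure of $DQ(\queries)$ into the target QoI along the complete lattice homomorphism $\mu = \eq{\cdot} : DL(\queries) \to LoI$ supplied by Lemma~\ref{lemma:DL_imply_LoI}, extended to subsets by taking images, so that $\EQ{\Q} = \{\eq{Q} \mid Q \in \Q\} = \mu[\Q]$. Concretely I would define the candidate map on a DQ element by $h(\tc(\Q)) = \tc(\EQ{\Q})$, where the left $\tc$ is the tiling closure of the DL and the right $\tc$ is that of the QoI. Everything then rests on one elementary but pivotal observation: since the DQ mix operator is itself phrased in terms of the LoI equivalence classes $[\eq{P}]$, the two mix operators correspond exactly, namely
\[ P \in \mix(\Q) \iff \eq{P} \in \mix(\EQ{\Q}), \qquad (\star) \]
for every $\Q \subseteq DL(\queries)$ and $P \in DL(\queries)$, where the right-hand $\mix$ is the QoI mix. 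Both sides assert that every equivalence class of $\eq{P}$ is a class of some disjunct $\eq{Q}$ with $Q \in \Q$, so $(\star)$ is immediate from comparing the two definitions.

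The technical workhorse is then the identity $\tc(\EQ{\tc(\Q)}) = \tc(\EQ{\Q})$, which simultaneously yields well-definedness of $h$: if $\tc(\Q) = \tc(\Q')$ as sets, then $\tc(\EQ{\Q}) = \tc(\EQ{\tc(\Q)}) = \tc(\EQ{\tc(\Q')}) = \tc(\EQ{\Q'})$. The inclusion $\supseteq$ is monotonicity of $\tc$ together with $\Q \subseteq \tc(\Q)$. For $\subseteq$ I would take $P \in \tc(\Q) = {\Downarrow}\,\mix(\Q)$, so $P \sqsubseteq P'$ for some $P' \in \mix(\Q)$; by $(\star)$, $\eq{P'} \in \mix(\EQ{\Q})$, and since $\mu$ is order-preserving (Lemma~\ref{lemma:DL_imply_LoI}) we get $\eq{P} \sqsubseteq \eq{P'}$, so that $\eq{P}$ lies in the downward-closed set $\tc(\EQ{\Q})$. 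Hence $\EQ{\tc(\Q)} \subseteq \tc(\EQ{\Q})$, and applying $\tc$ and using idempotence closes the argument.

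With $h$ in hand, order preservation is trivial since $\sqsubseteq$ is $\subseteq$ on both sides and $\EQ{\cdot}$ and $\tc$ are monotone. For joins, I would compute $h(\bigvee_i \P_i) = \tc(\EQ{\tc(\bigcup_i \P_i)}) = \tc(\bigcup_i \EQ{\P_i}) = \bigvee_i \tc(\EQ{\P_i})$, using the workhorse identity, the distributivity of the image map $\EQ{\cdot}$ over unions, and the closure-operator properties of the QoI $\tc$. For the tensor I would use that $\mu$ preserves binary joins, i.e.\ $\eq{P \sqcup Q} = \eq{P} \sqcup \eq{Q}$, to rewrite $h(\P \otimes \Q) = \tc(\{\eq{P} \sqcup \eq{Q} \mid P \in \P,\, Q \in \Q\})$, and then match this against $h(\P) \otimes h(\Q) = \tc(\EQ{\P}) \otimes \tc(\EQ{\Q})$. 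Preservation of the designated unit follows by direct computation, using that $\mu$ preserves the bottom element $\cl{\varnothing}$.

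I expect the tensor case to be the main obstacle: unlike the join, the DQ tensor combines generators pairwise \emph{before} closing, whereas $h(\P) \otimes h(\Q)$ combines the already-closed images $\tc(\EQ{\P})$ and $\tc(\EQ{\Q})$, so equality genuinely relies on the QoI tensor being independent of the tiling-closure representatives of its arguments (applied with the generating sets $\EQ{\P}, \EQ{\Q}$ versus their closures), a property established by Hunt and Sands~\cite{hunt2021quantale}, together with join-preservation of $\mu$. The remaining steps are bookkeeping with the two closure operators, all of which is controlled by $(\star)$ and the workhorse identity.
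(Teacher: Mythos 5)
Your proof is correct, and it follows the same underlying route as the paper's: transport the DQ structure along the map induced by $\eq{\cdot}$ (Def.~\ref{def:set_query_to_set_eq}) and check preservation of order, join, and tensor componentwise, with the tensor case reduced to the fact that $\eq{\cdot}$ preserves DL joins (the paper's Lemma~\ref{lemma:app:join_loi_determinacy_lattice}). The difference is one of rigor rather than of route, and it is worth recording. The paper takes the map to be the direct image $\P \mapsto \EQ{\P}$ of an already-closed DQ element and declares each preservation step ``trivial'' from the definition; it never checks that this assignment is well defined on DQ elements (which admit many generating sets), nor does it address the mismatch between the image $\EQ{\tc(\Q)}$, which contains only relations of the form $\eq{P}$, and the genuine QoI tiling closure $\tc(\EQ{\Q})$, which is downward closed in the full LoI. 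Your formulation $h(\tc(\Q)) = \tc(\EQ{\Q})$, the mix correspondence $(\star)$, and the identity $\tc(\EQ{\tc(\Q)}) = \tc(\EQ{\Q})$ supply exactly the content hidden behind those ``trivial'' claims, and your well-definedness argument fills a gap the paper does not acknowledge. You are also right that the tensor is the delicate step: the representative-independence of the QoI tensor that you invoke is the QoI analogue of the paper's Lemma~\ref{lemma:app:tensor_week_commute_tc}. Two minor points: the DQ unit is the \emph{empty} set of query sets, so $h(1) = \tc(\EQ{\varnothing}) = \tc(\varnothing) = \varnothing$ by direct computation, with no need for $\mu$ to preserve $\cl{\varnothing}$; and the paper additionally records preservation of $\top$, which your argument yields for free from join preservation, given that the target carrier is exactly the image of $h$.
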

{\iffull The proof of Lemma~\ref{lemma:DQ_imply_QoI} is presented in Appendix~\ref{sec:app:proof:DQ_imply_QoI}.\fi}

\begin{example}
To illustrate the Determinacy Quantale in practice, consider Program~\ref{prog:DQ_example}, which issues either query 
$q1 = \{(r,\vl) \mid \exists s,n. \, \mathrm{emp}(n,r,s) \wedge r=\mathrm{Intern} \wedge \vl = s\}$ 
or 
$q2 = \{(r,\vl) \mid \exists s,n. \, \mathrm{emp}(n,r,s) \wedge r=\mathrm{CEO} \wedge \vl = n )\}$ 
to the database. Query q1 returns the $\mathrm{role}$ and $\mathrm{salary}$ columns of the entry in table $\mathrm{emp}$ if the role of that entry is $\mathrm{Intern}$. Similarly, query q2 returns the $\mathrm{role}$ and $\mathrm{name}$ columns if the role of the entry in $\mathrm{emp}$ is $\mathrm{CEO}$.

\begin{figure}
\begin{lstlisting}[label=prog:DQ_example]
if (y > 0) then
	x $\leftarrow$ q1
else
	x $\leftarrow$ q2
out(x, u);
\end{lstlisting}
\end{figure}

Consider a policy defined on queries $\vw1 = \{ (r,n) \mid \exists s. \, \mathrm{emp}(n,r,s) \}$ and $\vw2 = \{ (r,s) \mid \exists n. \, \mathrm{emp}(n,r,s) \}$. %
\vw1 and \vw2, which respectively project on the $\mathrm{name}$ and $\mathrm{role}$, and the $\mathrm{role}$ and $\mathrm{salary}$ columns of $\mathrm{emp}$, are used in defining the disjunctive security policy $\vw1 \vee \vw2$. 

For this example, we assume a database that has only one row in the $\mathrm{emp}$ table, and we also limit the domain of possible roles to $\{\mathrm{CEO}, \mathrm{Intern}\}$. These limitations are necessary in order to have a finite representation of the potential query sets and enables us to effectively depict the sets produced by the $\mix$ and $\tc$ operators.

Program~\ref{prog:DQ_example} depicts a disjunction that -- ignoring variable \texttt{y} -- depends either on $q1$ or $q2$ (\ie $q1 \vee q2$), which on the DQ can be represented as a point $\tc(\cl{\{q1\}}) \vee \tc(\cl{\{q2\}})$. Similarly, the policy $\vw1 \vee \vw2$ can be represented on the DQ by $\tc(\cl{\{\vw1\}}) \vee \tc(\cl{\{\vw2\}})$. 

Illustrating this point requires calculating the $\mix$ set of \vw1 and \vw2, which includes all sets of queries whose equivalence relation can be constructed from the equivalence classes of $\eq{\cl{\{\vw1\}}}$ and $\eq{\cl{\{\vw2\}}}$. 
Unfortunately, for any sufficiently rich query language, our definition of $\mix$ inevitably yields an infinite set, as infinitely many queries that are ``morally equivalent'' or even the same up to renaming variables represent the same knowledge set. 
To compactly represent such infinite sets, we will pick just one representative, and define
\begin{align*}
	hc(\Q) = \{ Q' \mid \exists Q \in \Q . \ \eq{Q} = \eq{Q'} \}
\end{align*}
as a closure operator that adds all equivalent queries.
Then $\mix\big( \{\clSet{\vw1}, \clSet{\vw2}\} \big)$ will be the set $hc(\{\clSet{\vw1}$, $\clSet{\vw2}$, $\clSet{p1}$, $ \clSet{p2}\})$, where
$p1 = \{(r,\vl) \mid \big(\exists s,n. \, \mathrm{emp}(n,r,s) \wedge r=\mathrm{Intern} \wedge \vl = s \big) \vee \big(\exists s,n. \, \mathrm{emp}(n,r,s) \wedge r=\mathrm{CEO} \wedge \vl = n \big)\}$
and 
$p2 = \{(r,\vl) \mid \big(\exists s,n. \, \mathrm{emp}(n,r,s) \wedge r=\mathrm{CEO} \wedge \vl = s\big) \vee \big(\exists s,n. \, \mathrm{emp}(n,r,s) \wedge r=\mathrm{Intern} \wedge \vl = n \big) \}$.

Therefore, we can depict the policy as the point ${\Downarrow} (hc(\{\clSet{\vw1}$, $\clSet{\vw2}$, $\clSet{p1}$, $ \clSet{p2}\}))$ on the DQ.
Similarly, the DQ point of the Program~\ref{prog:DQ_example} (\ie $\tc(\cl{\{q1\}}) \vee \tc(\cl{\{q2\}})$), can also be depicted by the point ${\Downarrow}hc(\{\clSet{p1}\})$ on the DQ. 
We illustrate the part of the DQ which includes these points in Fig.~\ref{fig:DQ_example}, and as it is evident from the figure, conclude that Program~\ref{prog:DQ_example} is inline with the policy.

\begin{figure}[H]
	\centering
	\begin{tikzpicture}
		\begin{scope}[xscale=2, yscale=1.5]
			\node (vw1) at (-1.9,0) {$\tc(\cl{\{\vw1\}})$};
			\node (vw2) at (-0.9,0) {$\tc(\cl{\{\vw2\}})$};
			\node (q1) at (0.2,0) {$\tc(\cl{\{q1\}})$};
			\node (q2) at (1.2,0) {$\tc(\cl{\{q2\}})$};
			
			\node (q1vq2) at (0.7,1) {${\Downarrow}hc(\{\clSet{p1}\})$};
			\node (vw1vvw2) at (-0.7,2) {${\Downarrow}hc(\{\clSet{\vw1}, \clSet{\vw2}, \clSet{p1}, \clSet{p2}\})$};
			
			\draw[-] (q1)--(q1vq2);
			\draw[-] (q2)--(q1vq2);
			\draw[-] (vw1)--(vw1vvw2);
			\draw[-] (vw2)--(vw1vvw2);
			
			\draw[-] (q1vq2)--(vw1vvw2);
		\end{scope}
	\end{tikzpicture}
	\caption{A portion of the DQ for queries q1, q2, \vw1, \vw2}
	\label{fig:DQ_example}
\end{figure}

\end{example}

\section{Security Framework}\label{sec:system_model}
Drawing on the quantale model of dependencies for programs and databases, we develop an extensional condition that defines security for programs that interact with databases and support disjunctive security policies. We will later use the security condition to prove soundness of enforcement mechanisms  in Section \ref{sec:enforcement}.  
Specifically, we formalize the syntax and semantics of a simple imperative language with database queries. Programs read the input from the database via queries, while users receive the output through predefined output channels. We define (disjunctive) security policies as views over the database and interpret them end-to-end.   We then use this model to define a knowledge-based security condition for our setting.

\subsection{Language}\label{sec:language}
\tightpar{Syntax}
The syntax for the commands of our language as depicted in Fig.~\ref{fig:syntax-commands}, primarily consists of standard commands such as assignment, conditionals, and loops.
The command $\texttt{out}(e,u)$  outputs the result of evaluating expression $e$ to user $u \in \users$.
The command $x \leftarrow q$ issues the query $q$ to the database and stores the result  in variable $x$. For modeling the queries, we rely on conjunctive queries with comparison introduced in Section~\ref{sec:CQC}.

Expressions $e$ can be variables $x \in \mathrm{Vars}$, values (integers) $n \in \mathrm{Val}$, binary operations $e_1 \oplus e_2$, single tuples $\tp \in \mathrm{Val}$, and set of tuples $\overline{\tp} \in \mathrm{Val}$. For simplicity, we do not provide de-constructors for database tuples.

\begin{figure}
	\centering
	{
		\setstretch{1.5}
		$
		\begin{array}{ll}
			c :=&\texttt{skip} \sep \texttt{if} \ e \ \texttt{then} \ c_1 \ \texttt{else} \ c_2 \sep\\
			&x \leftarrow q \sep x := e \sep c_1;c_2 \sep\\
			&\texttt{while} \ e \ \texttt{do} \ c \sep \texttt{out}(e,u) \\
		\end{array}
		$
	}
	\caption{Language syntax}
	\label{fig:syntax-commands}
\end{figure}

\tightpar{Semantics}
As discussed in  Section~\ref{sec:determinacy_quantale}, a database state (or simply state) $db \in \Omega_D$ is defined with respect to a  schema  $D$ and  a finite set of integrity constraints.
A configuration $\tuple{c, m , db}$ consists of a command $c$, a memory $m = \mathrm{Var} \rightarrow \mathrm{Val}$ mapping variables to values, and a state $db$.

The semantics of expressions is mostly standard and its rules are presented in Fig.~\ref{fig:op_sem_exp_full}.
We use judgments of the form $\tuple{e, m, db}  \downarrow \vl$ to denote that an expression $e$ evaluates to value $\vl$ in memory $m$ and state $db$. For simplicity, we refrain from defining binary operations on tuples, unless the underlying database query is boolean.

We use judgments of the form $\tuple{c, m, db} \xrightarrow{\alpha} \tuple{c', m', db'}$ to denote that a configuration $\tuple{c, m, db}$ in one step evaluates to memory $m'$ and state $db'$ and (possibly) produces an observation $\alpha \in \mathrm{Obs}$; we write $\epsilon$ whenever a command produces no observation. We write $m[x \mapsto \vl]$ to denote a memory $m$ with variable $x$ assigned the value $\vl$. 

Fig.~\ref{fig:op_sem_cmds_full} provides the semantic rules for commands.
The query evaluation rule \textsc{QueryEval} is similar to assignment as it evaluates a query $q$ into state $db$ and stores the result in the variable $x$. 
We use the command $\texttt{out}(e,u)$ to produce an observation. Formally, an observation $\alpha \in \mathrm{Obs}$ is a tuple $\tuple{o, u}$, where $u \in \users$ is the identifier of the user observing the output and $o$ is the result of evaluating expression $e$, which is either a simple value or the result set of a non-boolean query. %

We write $\tuple{c, m, db} \mulStepEval{\tau}{u} \tuple{c', m', db'}$ to denote when $\tuple{c, m, db}$ takes one or more steps to reach configuration $\tuple{c', m', db'}$ while producing the trace (sequence of observations) $\tau \in \mathrm{Obs}^\ast $. We omit the final configuration whenever it is irrelevant and write $\tuple{c, m, db}  \mulStepEval{\tau}{u}$.

\begin{figure*}[h]
	{
		\setstretch{1.4}
		\footnotesize
		\centering
		$
		\inferrule*[before=\textsc{Int}]
		{
			\\
		}
		{
			\tuple{n, m, db}  \downarrow n
		}
		$
		\nextrule
		$
		\inferrule*[before=\textsc{Tuple}]
		{
			\\
		}
		{
			\tuple{\tp, m, db}  \downarrow \tp
		}
		$
		\nextrule
		$
		\inferrule*[before=\textsc{TupleSet}]
		{
			\\
		}
		{
			\tuple{\overline{\tp}, m, db}  \downarrow \overline{\tp}
		}
		$
		\nextrule
		$
		\inferrule*[before=\textsc{Var}]
		{
			\vl = m(x)
		}
		{
			\tuple{x, m, db}  \downarrow \vl
		}
		$
		
		\nextrule
		
		$
		\inferrule*[before=\textsc{Op}]
		{
			\tuple{e_1, m, db}  \downarrow n_1 \\
			\tuple{e_1, m, db}  \downarrow n_2 \\
			n = n_1 \oplus n_2
		}
		{
			\tuple{e_1 \oplus e_2, m, db}  \downarrow n
		}
		$
		
	}
	\caption{Semantic rules for expressions}
	\label{fig:op_sem_exp_full}
\end{figure*}
\begin{figure*}[h]
	{
		\setstretch{1.4}
		\footnotesize
		\centering
		$
		\inferrule*[before=\textsc{Skip}]
		{
			\\
		}
		{
			\tuple{\texttt{skip}, m, db} \xrightarrow{\epsilon} \tuple{\epsilon, m, db}
		}
		$
		\nextrule
		$
		\inferrule*[before=\textsc{Assign}]
		{
			\tuple{e, m, db}  \downarrow \vl \\
			m' = m[x \mapsto \vl]
		}
		{
			\tuple{x := e, m, db} \xrightarrow{\epsilon} \tuple{\epsilon, m', db}
		}
		$
		\nextrule
		$
		\inferrule*[before=\textsc{QueryEval}]
		{
			\vl = \eval{q}{db} \\
			m' = m[x \mapsto \vl]
		}
		{
			\tuple{x \leftarrow q, m, db} \xrightarrow{\epsilon} \tuple{\epsilon, m', db}
		}
		$
		
		\nextrule
		
		$
		\inferrule*[before=\textsc{IfTrue}]
		{
			\tuple{e, m, db} \downarrow n \\
			n \not= 0
		}
		{
			\tuple{\texttt{if} \ e \ \texttt{then} \ c_1 \ \texttt{else} \ c_2, m, db} \xrightarrow{\epsilon} \tuple{c_1, m, db}
		}
		$
		\nextrule
		$
		\inferrule*[before=\textsc{IfFalse}]
		{
			\tuple{e, m, db} \downarrow n \\
			n = 0
		}
		{
			\tuple{\texttt{if} \ e \ \texttt{then} \ c_1 \ \texttt{else} \ c_2, m, db} \xrightarrow{\epsilon} \tuple{c_2, m, db}
		}
		$
		
		\nextrule
		
		$
		\inferrule*[before=\textsc{WhileTrue}]
		{
			\tuple{e, m, db}  \downarrow n \\
			n \not= 0
		}
		{
			\tuple{\texttt{while} \ e \ \texttt{do} \ c, m, db} \xrightarrow{\epsilon} \tuple{c;\texttt{while} \ e \ \texttt{do} \ c, m, db}
		}
		$		
		\nextrule
		$
		\inferrule*[before=\textsc{WhileFalse}]
		{
			\tuple{e, m, db} \downarrow n \\
			n = 0
		}
		{
			\tuple{\texttt{while} \ e \ \texttt{do} \ c, m, db} \xrightarrow{\epsilon} \tuple{\epsilon, m, db}
		}
		$	
		
		\nextrule
		
		$
		\inferrule*[before=\textsc{Seq}]
		{
			\tuple{c_1, m, db} \xrightarrow{\alpha} \tuple{c_{1}', m', db'} \\
		}
		{
			\tuple{c_1;c_2, m, db} \xrightarrow{\alpha} \tuple{c_{1}';c_2, m', db'}
		}
		$
		\nextrule
		$
		\inferrule*[before=\textsc{SeqEmpty}]
		{
			\\
		}
		{
			\tuple{\epsilon;c, m, db} \xrightarrow{\epsilon} \tuple{c, m, db}
		}
		$
		\nextrule
		$
		\inferrule*[before=\textsc{Output}]
		{
			\tuple{e, m, db}  \downarrow \vl
		}
		{
			\tuple{\texttt{out}(e,u), m, db} \xrightarrow{\tuple{\vl, u}} \tuple{\epsilon, m, db}
		}
		$
		
	}
	\caption{Semantics rules for commands}
	\label{fig:op_sem_cmds_full}
\end{figure*}

\subsection{Security Model}\label{sec:security_model}
We now introduce our knowledge-based security model for disjunctive security policies. For simplicity, we denote the initial program memory by $\initM$ and assume it is fixed and public to all users, hence the only way to input sensitive information is through database queries. Users make observations  through output channels, hence their knowledge of the database is determined by what they can infer based on these observations. This model induces standard equivalence relations for database states and observation traces.

\tightpar{Database state equivalence}
Two states $db$ and $db'$ are equivalent with respect to a set of tables and views $\stv$, written as $db \approx_{\stv} db'$, iff all tables and views in $\stv$ have identical contents in $db$ and $db'$. Formally, states $db$ and $db'$ are equivalent with respect to $\stv$ iff for all view $\vw \in \stv, \ \eval{\vw.q}{db} = \eval{\vw.q}{db'}$ and for all table $\tb \in \stv, \ \evalTable{\tb}{db} = \evalTable{\tb}{db'}$.
A set of tables and views $\stv$ induces an equivalence relation, and for a state $db$, the equivalence class $[db]_{\stv}$ contains all states that are equivalent to $db$ with respect to $\stv$.

\tightpar{Trace equivalence}
We use trace projection to define trace equivalence. The projection of a trace $\tau$ for user $u$ written as $\proj{\tau}{u}$ is the sequence of all observations in $\tau$ that $u$ can observe. Traces $\tau_1$ and $\tau_2$ are equivalent with respect to user $u$, written as $\tau_1 \approx_{u} \tau_2$, iff the projection of one of them to $u$ is the prefix of the other, \ie \ $\proj{\tau_1}{u} \ \preceq \proj{\tau_2}{u}$ or $\proj{\tau_1}{u} \ \succeq \proj{\tau_2}{u}$. 

Equivalence of trace prefixes is a standard  technicality  needed to ignore leaks due to program's progress/termination~\cite{askarov2008termination}, and here  we adapt a definition of trace equivalence which does not differentiate between program divergence and termination~\cite{guarnieri2019information}. 

\tightpar{User knowledge}
When executing a program $\prg$, we assume memory is always initially in the all-zero state $m_0$. Thus, we can
view a program's execution for any user as a function from database $db$ to user-observable output traces, $\tau_{\prg,u}(db) = \proj{\tau}{u}$ when $\tuple{\prg, m_0, db} \mulStepEval{\tau}{u}$. This function induces an equivalence relation on databases, $\llbracket \prg\rrbracket_u = {\sim_{\tau_{\prg,u}}}$, which characterizes the knowledge of $db$ conveyed by the output of $\prg$ to $u$.

\tightpar{Security policy}\label{sec:security_policy_syntax}
A security policy is a list of user policies (written as $P_u$) for each user $u \in \users$. User policies are defined as views and table identifiers over a database schema, and determine what a user $u$ is allowed to observe. Fig.~\ref{fig:policy_grammer} presents the syntax of disjunctive policies for our model. 
They are defined as a set of sets in order to represent a disjunction of conjunctions of simpler policies. A conjunction $\con$ is  a set of view $\vw$ and table $\tb$ identifiers, and a disjunction $\dis$ is a set of conjunctions. 
For example, the policy $P_u$  for user $u$ who is allowed to see table $\tb_1$ and view $\vw_1$, or view $\vw_2$ but not both, is defined as  $P_u = \{ \{\tb_1, \vw_1\} , \{\vw_2\} \}$. 

The overall policy of the system, written as $P$, is the list of user policies. %
Per Def.~\ref{def:set_query_to_set_eq}, the policy $P_u$ can be represented semantically as an element $\llbracket P_u\rrbracket$ of the Quantale of Information.
Thus, we can formulate our security condition as the assertion that the knowledge of the database
that the execution of the program $\prg$ conveys to $u$ is bounded above by the disjunctive knowledge
allowed by the policy, $\llbracket P_u\rrbracket$.

\begin{definition}\label{def:security_condition}
The program $\prg$ is secure for the user $u$ and policy $P_u$ if $\llbracket \prg \rrbracket_u \sqsubseteq \llbracket P_u \rrbracket$.
\end{definition}

\begin{figure}
	\centering
	{
		\setstretch{1.5}
		\centering
		$
		\begin{array}{rcl}
				\con &:=& \{\vw\} \mid \{\tb\} \mid \con_1 \cup \con_2 \\
				\dis &:=& \{\con\} \mid \dis_1 \cup \dis_2 \\
				P_u &:=& \dis \\
			\end{array}
		$
	}
	\caption{Syntax of user policy}
	\label{fig:policy_grammer}
\end{figure}

\section{Enforcement of Disjunctive Policies}\label{sec:enforcement}
Having formulated the security condition, we would like to prove
that useful programs satisfy it. To this end, we introduce
a sound static enforcement mechanism, which imposes some
structural limitations on the policy and trades off some completeness
for the sake of efficiency and ease of analysis.

Fig.~\ref{fig:enforcement_highlevel} illustrates how our
mechanism functions at a high level. We assume as input a program and policy
in the format described in Fig.~\ref{fig:syntax-commands} and Fig.~\ref{fig:policy_grammer}
respectively. The program is then subjected to a static \emph{dependency analysis} (Section~\ref{sec:dep_analysis}),
which computes an overapproximate set of possible paths of control flow
through the program, along with the queries (dependencies) retrieved for each path, 
giving an element of the DQ, that is a (disjunctive) set of (conjunctive) sets of queries.
Per Fig.~\ref{fig:policy_grammer}, the policy is also already given in this format.

We would like to verify that the program dependencies are bounded by the policy in the DQ,
as by Lemma \ref{lemma:DQ_imply_QoI}, this entails the security condition (Def.~\ref{def:security_condition}) that the disjunctive information
that is revealed by the program is bounded above by the QoI interpretation of the policy.
However, checking DQ ordering on general queries may be computationally costly.
We therefore \emph{abstract} (Section~\ref{sec:symbolic_tupes}) both the policy and the path dependencies into a more tractable format (symbolic tuples), which again overapproximates the
information they can retrieve. 
To guarantee soundness, we require that the views in the policy are
such that this abstraction is lossless for them. 
Finally, as the \emph{security check} (Section~\ref{sec:enforcement_security_check}),
we compute a tractable comparison on sets of sets of symbolic tuples that 
can be shown to imply DQ ordering.

\begin{figure}
	\centering
	\begin{tikzpicture}
		\begin{scope}[xscale=2.2, yscale=1.5]
			\node[draw, ellipse, minimum width=1.5cm] (pol) at (0,1) {\scriptsize Policy};
			\node[draw, ellipse, minimum width=1.5cm] (prg) at (0,0) {\scriptsize Program};
			\node[draw, rounded corners] (da) at (1,0) {\scriptsize \begin{tabular}{c} Dependency \\ Analysis\end{tabular}};
			\node[draw, rounded corners] (qaPol) at (2,1) {\scriptsize \begin{tabular}{c} Query \\ Abstraction\end{tabular}};
			\node[draw, rounded corners] (qaPrg) at (2,0) {\scriptsize \begin{tabular}{c} Query \\ Abstraction\end{tabular}};
			\node[draw, circle] (cnd) at (3,0.5) {\scriptsize \begin{tabular}{c} Security \\ Check\end{tabular}};
			
			\draw[->] (prg)--(da);
			\draw[->] (pol)--(qaPol);
			\draw[->] (da)--(qaPrg);
			\draw[->] (qaPol)--(cnd);
			\draw[->] (qaPrg)--(cnd);
		\end{scope}
	\end{tikzpicture}
	\caption{Enforcement steps}
	\label{fig:enforcement_highlevel}
\end{figure}

\subsection{Conjunctive Queries}\label{sec:CQC}
While our theoretical definitions are based on the fully-general
domain relational calculus as a query language, to avoid complexity, our enforcement mechanism
will work with a restricted subset called \emph{conjunctive queries with comparisons} (CQCs).
This language is a subset of relational calculus that only employs conjunction ($\wedge$) and existential quantification ($\exists$) and omits disjunction ($\vee$), negation ($\neg$), and universal quantification ($\forall$). CQCs can model \texttt{SELECT}-\texttt{FROM}-\texttt{WHERE} portion of SQL, where there are only \texttt{AND} and comparisons in the \texttt{WHERE} clause.

Our language for (non-boolean) CQC $q$ over a database schema $D$ employs the standard notation~\cite{abiteboul1995foundations,wang2022conjunctive}, and has the form $\emph{heading} \leftarrow \emph{body}$:
\begin{align*}
	\mathrm{ans}(\overline{y}) \leftarrow R_1(\overline{x}_1), ..., R_n(\overline{x}_n), C_1, ..., C_m 
\end{align*}
where $R_1, ..., R_n$ are relations in $D$, and $\overline{x}_1, ..., \overline{x}_n$ are their variables. We use $\mathrm{Var}(q) = \overline{x}_1 \cup ... \cup \overline{x}_n$ to denote the set of variables appearing in the body of the query $q$. $C_1, ..., C_m $ are formulae of the form $x_i \oplus x_j$ where $\oplus$ is the comparison operator which could be anything from $<, \le, =, \not =, >, \ge$ and $x_i$ and $x_j$ are either variables in $\mathrm{Var}(q)$ or constants. 

We require that $\overline{y} \subseteq \mathrm{Var}(q)$. %
Without loss of generality, we assume that there are no self-joins in the query. In case of queries with self-joins, we can make logical copies of the relations to accommodate them~\cite{wang2022conjunctive}.
The body of a CQC $q$ comprises two parts, namely the relation identifiers $R_1, ..., R_n$ referred to as $\mathrm{ids}(q)$, and the conditions $C_1, ..., C_m$ denoted by $\mathrm{cnd}(q)$.

Similarly to Section~\ref{subsec:intro_to_databases}, the evaluation of $q$ on the database state $db$ (denoted by $\eval{q}{db}$) 
is defined by taking all tuples in the cartesian product of $\mathrm{ids}(q)$ in $db$ that satisfy $\mathrm{cnd}(q)$, and projecting to the column set $\overline{y}$.
\begin{example}
	Consider the database schema in Fig.~\ref{fig:relations_emp_man}. The following query returns a set of tuples containing the names of divisions whose managers have a salary of more than $50$:
	\begin{align*}
		\mathrm{ans}(d) \leftarrow \mathrm{emp}(n, r, s), \mathrm{mng}(d, m), n = m, s > 50
	\end{align*}
\end{example}

\subsection{Type-based Dependency Analysis}\label{sec:dep_analysis}
Our static dependency analysis builds on the generic type system of van Delft et al.~\cite{delft2015very} and extends it with support for disjunctive dependencies. We intuitively expect that a disjunctive dependency analysis must be path-sensitive, so as to distinguish between different executions and also keep track of the history of observations. Both of these requirements are often challenging for type-based analyses, which do not naturally align with the execution order. We will first illustrate these challenges with examples and then present our analysis.

\begin{figure}
\minipage[b]{0.49\columnwidth}
\begin{lstlisting}[label=prog:type_analysis_1]
if (y > 0) then
	x := w + z;
else
	x := x + 1;
out(x,u);
\end{lstlisting}
\endminipage\hfill
\minipage[b]{0.49\columnwidth}
\begin{lstlisting}[label=prog:type_analysis_2]
if (z == 0) then
	x $\leftarrow$ q1;
else
	x $\leftarrow$ q2;
out(x,u);
if (z != 0) then
	x $\leftarrow$ q1;
else
	x $\leftarrow$ q2;
out(x,u);
\end{lstlisting}
\endminipage\hfill
\end{figure}

Program~\ref{prog:type_analysis_1} illustrates the need for path sensitivity. The analysis should distinguish between the \emph{then} branch, where variable $x$ depends on the set $\{y,w,z\}$, and  the \emph{else} branch where $x$ depends on $\{y,x\}$. Our reference analysis \cite{delft2015very} would join these two sets at the end of the if statement, ultimately yielding the dependency set $\{x,y,w,z\}$. In our analysis, these sets are never joined, but instead combined to form a set of sets, namely, $\{\{y,w,z\}, \{y,x\}\}$, where the outer set represents a disjunctive dependency and the inner sets represent conjunctive dependency. 

Program~\ref{prog:type_analysis_2} illustrates the need to keep track of the observation history. It outputs $x$ at lines $5$ and $10$, and the dependency set of $x$ in both places is $\{\{q1,z\}, \{q2,z\}\}$. However, this program will always output both $q1$ and $q2$. Now, if a policy only allows user $u$ to see either query $q1$ or $q2$, the outputs at lines $5$ and $10$ will be incorrectly accepted. Hence, the analysis should account for all outputs to user $u$. 

Fig.~\ref{fig:type_analysis_rules} depicts the rules of our disjunctive dependency analysis. We use judgments of the form $\vdash c : \Gamma$, where $\Gamma$ is an environment mapping variables $\mathrm{Var}$ to set of sets of dependencies $\mathrm{Dep}$. The set of variables is  $\mathrm{Var} = PV \ \cup \ \users \ \cup \ \{pc\}$, where $PV$ are program variables, $\users$ are users, and $pc$ is the program context. The dependencies $\mathrm{Dep}$ are $\mathrm{Dep} = \mathrm{Var} \ \cup \ \queries$, where $\mathrm{Var}$ are variables and $\queries$ are queries that can be issued to a database. We use $u \in \users$ to indicate the dependencies of all outputs to user $u$.

We start by introducing the operators and auxiliary functions employed within the rules, and then proceed to explain the rules themselves. The operator $\otimes$ is used to join two (or more) sets of sets, defined as:
\begin{align*}
	\Gamma_1(x_1) \otimes ... \otimes \Gamma_n(x_n) = \{S_1 \cup ... \cup S_n \mid \ & S_i \in \Gamma_{i}(x_i)\\ &i=1, \dots,n \}
\end{align*}
For example, the join of $\Gamma_1(x) = \{\{x,y\}, \{z,y\}\}$ and $\Gamma_2(y) = \{\{w\}, \{x,z\}\}$ is:
\begin{align}
	\nonumber
	\Gamma_1(x) \otimes \Gamma_2(y) = \{\{x,y,w\}, \{x,y,z\}, \{z,y,w\}\}
\end{align}
Intuitively, the result of the join operator is a set of sets capturing the product of the original sets of sets under the set union operation.
We use this operator to calculate all the possible combinations of two environments.

$\Gamma_2;\Gamma_1$ represents the sequential composition of two environments. Intuitively, $\Gamma_2;\Gamma_1$ is the same as $\Gamma_2$ but updated with all of the dependencies that have been previously established in $\Gamma_1$. Formally:
\begin{align*}
	\nonumber
	\Gamma_2;\Gamma_1(x) = \bigcup\limits_{S_2 \in \Gamma_2(x)} \ \bigotimes\limits_{y \in S_2} \ \Gamma_1(y)
\end{align*}
For example, the sequential composition of the environments
\begin{align*}
	\Gamma_1 = [&x \mapsto \{\{x\},\{y\}\}, y \mapsto \{\{y\}\}, pc \mapsto \{\{y , pc\}\}] \\
	\Gamma_2 = [&x \mapsto \{\{pc,x\}\}, y \mapsto \{\{pc,y\}\}, pc \mapsto \{\{pc\}\}]
\end{align*}
evaluates to
\begin{align*}
	\Gamma_2;\Gamma_1 = [&x \mapsto \{\{x,y,pc\}, \{y,pc\}\}, y \mapsto \{\{pc,y\}\}, \\ &pc \mapsto \{\{y ,pc\}\}]
\end{align*}

Finally, the operator $\Cup$ calculates the union of two environments: $\Gamma_1 \Cup \Gamma_2 = \forall x \in \mathrm{Var}, \ \Gamma_1(x) \cup \Gamma_2(x)$.
This operator is used in conditionals to capture the disjunctive join of the two branches. For example, in line 5 in Program~\ref{prog:type_analysis_1}, $\Gamma_1(x) = \{\{y,w,z\}\}$ and $\Gamma_2(x) = \{\{y,x\}\}$, and the result of $(\Gamma_1 \Cup \Gamma_2)(x)$ would be $\{\{y,w,z\}, \{y,x\}\}$.

For loops, we rely on the fixed-point of $\Gamma$, denoted by $\Gamma^*$, which we define as:
\begin{align*}
	\Gamma^* = \bigcup\limits_{n > 0} \Gamma^{n}
\end{align*}
where $\Gamma^0 = \Gamma_{id}$ and $\Gamma^{n+1} = \Gamma^{n};\Gamma$.

In these rules, $\Gamma_{id}$ is the identity environment, defined as $\forall x \in \mathrm{Var}, \ \Gamma_{id}(x) = \{\{x\}\}$, and $\fv(e)$ denotes the free variables of expression $e$.

\begin{figure*}
	{
		\setstretch{1.2}
		\footnotesize
		\centering
		$
		\inferrule*[before=\textsc{T-Skip}]
		{
			\\
		}
		{
			\vdash \texttt{skip} : \Gamma_{id}
		}
		$
		\nextrule
		$
		\inferrule*[before=\textsc{T-Assign}]
		{
			\Gamma = \Gamma_{id}[x \mapsto \{\fv(e) \cup \{pc\}\}]
		}
		{
			\vdash x := e : \Gamma
		}
		$
		\nextrule
		$
		\inferrule*[before=\textsc{T-Output}]
		{
			\Gamma' = \Gamma_{id}[u \mapsto \{\fv(e) \cup \{pc , u\}\}]
		}
		{
			\vdash \texttt{out}(e,u) : \Gamma'
		}
		$
		
		\nextrule
		
		$
		\inferrule*[before=\textsc{T-QueryEval}]
		{
			\Gamma = \Gamma_{id}[x \mapsto \{\{q , pc\}\}]
		}
		{
			\vdash x \leftarrow q : \Gamma
		}
		$
		\nextrule
		$
		\inferrule*[before=\textsc{T-If}]
		{
			\vdash c_i : \Gamma_i \\
			\Gamma'_i = \Gamma_i;\Gamma_{id}[pc \mapsto \{\fv(e) \cup \{pc\}\}] \ i = 1,2 \\
			\Gamma' = (\Gamma'_1 \Cup \Gamma'_2)[pc \mapsto \{\{pc\}\}]
		}
		{
			\vdash \texttt{if} \ e \ \texttt{then} \ c_1 \ \texttt{else} \ c_2 : \Gamma'
		}
		$
		
		\nextrule
		
		$
		\inferrule*[before=\textsc{T-While}]
		{
			\vdash c : \Gamma_{c} \\
			\Gamma_{f} = (\Gamma_{c};\Gamma_{id}[pc \mapsto \{\fv(e) \cup \{pc\}\}])^* \\
			\Gamma' = \Gamma_{f}[pc \mapsto \{\{pc\}\}]
		}
		{
			\vdash \texttt{while} \ e \ \texttt{do} \ c : \Gamma'
		}
		$
		\nextrule
		$
		\inferrule*[before=\textsc{T-Seq}]
		{
			\vdash c_1 : \Gamma_1 \\
			\vdash c_2 : \Gamma_2 \\
			\Gamma' = \Gamma_2;\Gamma_1
		}
		{
			\vdash c_1;c_2 : \Gamma'
		}
		$
		
	}
	\caption{Type-based dependency analysis rules}
	\label{fig:type_analysis_rules}
\end{figure*}

\noindent\textsc{T-Assign} updates the dependency set of the assigned variable $x$ to the set of the free variables of expression $e$ and $pc$, otherwise it matches the identity environment. Rule \textsc{T-QueryEval} is similar to assignment, except that instead of $\fv(e)$, it adds query $q$ to the dependency set. 

\noindent\textsc{T-If} sequentially composes the dependency sets of each branch with the environment $\Gamma_{id}[pc \mapsto \{\fv(e) \cup \{pc\}\}]$, thus adding variables of the branch condition to the dependency set of each branch. Finally, these environments ($\Gamma_1$ and $\Gamma_2$) are joined disjunctively using the $\Cup$ operator.

\noindent\textsc{T-While} uses the fixed-point operator to calculate the dependency set of the loop. To do so, it first calculates the dependency set of the loop body, which is sequentially composed with $\Gamma_{id}[pc \mapsto \{\fv(e) \cup \{pc\}\}]$ to account for the dependencies to the loop condition. Finally, the fixed-point operator computes the dependency set of the while loop.

\noindent\textsc{T-Output} relies on the dependency set including $\fv(e)$, $\{pc\}$ and $\{u\}$, where $\fv(e)$ includes all the variables of the expression outputted to user $u$, $\{pc\}$ captures the implicit dependencies to the path conditions, and $\{u\}$ is the dependency set of user $u$ and captures the history of dependencies that user $u$ might have observed up to this point. Observe that by the definition of sequential composition, all the dependencies of the previous outputs will be added to $u$.

This analysis yields a final environment $\Gamma_{\mathrm{fin}}$. The result of the analysis is the value of this environment for the user identifier $u$, which includes both queries and program variables. 
Since program variables do not contain sensitive information, and we are primarily concerned with queries, we refine the result of $\Gamma_{\mathrm{fin}}(u)$ to only include queries. This refined outcome defines the ultimate result of our analysis, denoted as $\mathrm{QL}_u$:
\begin{align*}
	\mathrm{QL}_u \triangleq \bigcup_{S \in \Gamma_{\mathrm{fin}}(u)} \{ S \cap \queries  \}
\end{align*}

The soundness proof of our enforcement %
relies on the circumstance that, if the set of queries on which the $u$-outputs of $\prg$ depend when running on a database state $db$ are denoted by $Q_{\prg,u}(db)$, then this set is guaranteed to be found in the set $\mathrm{QL}_u$ produced by the dependency analysis.
We show how to define $Q_{\prg,u}(db)$ using a taint-tracking semantics presented in
{\iffull Appendix \ref{app:dep_analysis}. \else the full version of the paper~\cite{fullreport}. \fi} 
Formally, this gives rise to the following soundness condition for the dependency analysis.
\begin{globallemma}\label{lemma:type_analysis_soundness_proof}
For all $db\in \Omega_D$, $Q_{\prg,u}(db) \in \mathrm{QL}_u(\prg)$.
\end{globallemma}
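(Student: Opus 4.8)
The plan is to prove the statement by first making the informal phrase ``the queries on which the $u$-outputs of $\prg$ depend when running on $db$'' precise through the instrumented taint-tracking semantics (defined in the appendix), and then showing by induction that this dynamic dependency is always one of the disjuncts recorded by the static judgement $\vdash \prg : \Gamma$. Concretely, a taint state is a map $t : \mathrm{Var} \to \powerset{\queries}$ assigning to each variable, to $pc$, and to each user the set of queries its current value depends on; the taint semantics threads $t$ alongside the operational semantics, so that a run of $\prg$ from the all-empty taint $t_0$ (every variable depends on no queries, matching the all-zero initial memory) yields a final taint whose $u$-component is exactly $Q_{\prg,u}(db)$. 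For an inner dependency set $S \subseteq \mathrm{Dep}$ I write $\hat t(S) = \bigcup_{y \in S \cap \mathrm{Var}} t(y) \cup (S \cap \queries)$ for the concrete query set obtained by resolving each program variable in $S$ through $t$ and keeping each query in $S$ literally.

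The key lemma I would establish is: \emph{if $\vdash c : \Gamma$ and the taint semantics takes $\langle c, t \rangle$ to final taint $t'$, then for every $x \in \mathrm{Var}$ there is a set $S \in \Gamma(x)$ with $t'(x) = \hat t(S)$.} That is, each environment $\Gamma$ is a sound, path-indexed transfer function: any single concrete execution realises one disjunct $S \in \Gamma(x)$, with the variables in $S$ standing for their incoming taints. The proof goes by structural induction on $c$, with an inner induction on the number of iterations for loops. The decisive case is \textsc{T-Seq}: running $c_1;c_2$ takes $t$ to $t''$ via $c_1$ and then to $t'$ via $c_2$; the induction hypotheses give $t'(x) = \hat{t''}(S_2)$ for some $S_2 \in \Gamma_2(x)$ and $t''(y) = \hat t(S_1^y)$ for some $S_1^y \in \Gamma_1(y)$ for each $y$, and substitution yields $t'(x) = \hat t(\tilde S)$ with $\tilde S = \bigcup_{y \in S_2 \cap \mathrm{Var}} S_1^y \cup (S_2 \cap \queries)$. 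This $\tilde S$ lies in $\Gamma_2;\Gamma_1(x)$ precisely because sequential composition takes the $\otimes$-product over the choices $S_1^y$, one per variable $y \in S_2$; this product structure is exactly what lets independent per-variable choices compose, so the per-variable invariant is self-supporting.

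The remaining cases pin down path- and history-sensitivity. For \textsc{T-If} a run follows one branch $c_i$: evaluating the guard taints $pc$ with $\fv(e)$ (modelled by pre-composing $\Gamma_{id}[pc \mapsto \{\fv(e) \cup \{pc\}\}]$), so the branch contributes a disjunct of $\Gamma'_i$, and since $\Gamma' = (\Gamma'_1 \Cup \Gamma'_2)[pc \mapsto \{\{pc\}\}]$ unions the branches, the taken branch's disjunct is present in $\Gamma'(x)$; one checks that restoring $pc$ to its pre-conditional taint on exit matches the final reset. The \textsc{T-Output} case is covered because $u \in \mathrm{Var}$: each output folds $\fv(e) \cup \{pc, u\}$ into the taint of $u$, so the invariant applied to $u$ tracks the accumulated observation history, exactly as the $\{u\}$ self-dependency demands. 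For \textsc{T-While} a terminating run unrolls the loop some finite number $n$ of times, which the inner induction shows is captured by $\Gamma^n$ (using \textsc{T-Seq} for the body and $\Gamma^0 = \Gamma_{id}$ for guard-false exit); since $\Gamma^* = \bigcup_n \Gamma^n$, the realised disjunct survives into $\Gamma_f$.

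I expect the main obstacle to be the loop case together with the $pc$ bookkeeping across nested control flow: making the finite-unrolling argument interact cleanly with the fixed point $\Gamma^*$ and with the save/restore discipline for $pc$ requires that the instrumented semantics be set up so that each guard evaluation and branch/iteration choice corresponds to exactly one static transfer step, and that divergent runs (which emit only finitely many $u$-observations before any given prefix) be handled through the prefix-based trace equivalence rather than a single terminal taint. Once the key lemma holds, the theorem follows by instantiating it at $t_0$: for the final environment we obtain some $S \in \Gamma_{\mathrm{fin}}(u)$ with $Q_{\prg,u}(db) = \hat{t_0}(S) = S \cap \queries$, since $t_0$ maps every variable to $\varnothing$, and $S \cap \queries \in \mathrm{QL}_u(\prg)$ by the definition of $\mathrm{QL}_u$.
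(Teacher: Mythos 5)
Your proposal is correct, and it follows the same route the paper takes: the paper likewise makes the statement precise through the taint-tracking semantics of its appendix and then offers, as its entire proof, the single sentence that the lemma ``can then proceed by a straightforward induction on the semantics.'' What you supply is exactly the content that sentence omits, and it is the right content: the invariant that a terminated taint run from state $t$ realises, for each $x \in \mathrm{Var}$, precisely one disjunct $S \in \Gamma(x)$ with $t'(x) = \hat{t}(S)$, proved compositionally with the \textsc{T-Seq}/$\otimes$ case as the crux, and instantiated at the empty initial taint to yield $Q_{\prg,u}(db) = S \cap \queries \in \mathrm{QL}_u(\prg)$; your accumulation of output dependencies into $t(u)$ is an equivalent reformulation of the paper's trace-level collection of the $\beta$ sets, and it matches the static $u$ self-dependency in \textsc{T-Output}. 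Two of your choices deserve explicit note. First, your loop case uses $\Gamma^* = \bigcup_{n \geq 0} \Gamma^n$, i.e.\ it includes $\Gamma^0 = \Gamma_{id}$ for the zero-iteration exit, whereas the paper literally defines $\Gamma^* = \bigcup_{n > 0} \Gamma^n$; your reading is not a liberty but a necessity, since under the literal definition the lemma itself is false (a variable bound to a query result before a loop that may overwrite it, and output after the loop, loses its pre-loop dependency in $\mathrm{QL}_u$ when the loop happens to run zero times), so the identity disjunct must be in the fixed point. Second, the obstacle you flag about divergent runs is real and is not addressed by the paper either: the inductive invariant speaks of final taints, while $Q_{\prg,u}(db)$ is defined from a possibly infinite trace, and for a program whose only path to a later output passes through a diverging loop the membership claim requires a prefix-aware reading of the traces (or a restriction to terminating runs); your instinct to route this through the prefix-based trace equivalence is the right repair, but it does need to be carried out rather than assumed.
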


\subsection{Query Abstraction}\label{sec:symbolic_tupes}
Even for CQCs, comparing the information revealed by sets of queries is hard in general.
To define a well-behaved and more tractable determinacy order on which to build our DQ, we introduce another
overapproximating abstraction, which we will use to soundly
\emph{label} queries and policies.

We define a \emph{symbolic tuple} as $\tuple{T,\phi, \pi}$, where $T = \{\tb_1, \tb_2 ...,\tb_n\} $ is a set of table identifiers, $\phi$ is a boolean combination of equality, inequality, and comparisons over the columns of the tables in $T$, and $\pi$ is a subset of the columns of the tables in $T$. In a symbolic tuple, $\pi$ denotes the query's projection on the columns of the tables in $T$, and $\phi$ defines the constraints over the rows.

\begin{example}
	The symbolic tuple of query $\mathrm{ans}(d) \leftarrow \mathrm{emp}(n,r,s), \mathrm{mng}(d,m), n = m, s > 50$ defined on the relations of Fig.~\ref{fig:relations_emp_man} would be $\tuple{ \{\mathrm{emp}, \mathrm{mng}\}, s > 50 \wedge n = m, \{d\} }$.
\end{example}

While calculating the exact set of symbolic tuples of a relational calculus query is intractable for many classes of queries, it is tractable for conjunctive queries with comparison (CQC). 
Given a conjunctive query $q = \mathrm{ans}(\overline{y}) \leftarrow R_1(\overline{x}_1), ..., R_n(\overline{x}_n), C_1, ..., C_m $, the function $\mathrm{sts}$ computes a symbolic tuple from $q$ as follows:
\begin{align*}
	\mathrm{sts}(q) = \tuple{\mathrm{ids}(q'), \big( \bigwedge_{C \in \mathrm{cnd}(q')} C \ \big) , \overline{y}}
\end{align*}
where $\mathrm{ids}(q')$ and $\mathrm{cnd}(q')$ defined in Section~\ref{sec:CQC} return the relation identifiers and conditionals of $q'$, respectively. Here, $q'$ is the query obtained by recursively replacing views with their definitions. We lift this definition to sets of queries $Q$, and define $\mathrm{sts}(Q)$ as $\{ \bigcup_{q \in Q} \mathrm{sts}(q) \}$. 

Using $\mathrm{sts}$, we define the function $\sigma_{\st}$ for a set of sets of queries $\Q$ as follows:
\begin{align*}
	\sigma_{\st}(\Q) = \{ \mathrm{sts}(Q) \mid Q \in \Q \}
\end{align*}

\tightpar{Policy Analysis}
The function $\sigma_{st}$ can also be used to map a disjunctive security policy to a set of labels. 
However, in order to ensure soundness and avoid approximation, we place some constraints on policies. (1) To make computing the set of symbolic tuples tractable we only support policies with views in the CQC form. (2) We require that the symbolic tuples of views be \mbox{\emph{well-formed}}, which we define as:
\begin{definition}\label{def:st_well_formedness}
	The symbolic tuple $\tuple{T,\phi, \pi}$ is said to be well-formed if it satisfies $\dep(\phi) \subseteq \pi$.
\end{definition}
\noindent where $\phi = C_1 \wedge ... \wedge C_n$ and $\dep(\phi) = \bigcup_{i \in \{1,...,n\}} \fv(C_i)$ returns the column dependency set of $\phi$. 

Well-formedness ensures that the symbolic tuples are precise, at the expense of limiting a view to only applying constrains on the columns which it projects on. 

Furthermore, we treat the table identifiers used in policies as special views that return the whole table. For instance, a policy which allows access to table $\mathrm{emp}$ can be rewritten as view $\mathrm{ans}(n,r,s) \leftarrow \mathrm{emp}(n,r,s)$. 

As discussed in Section~\ref{sec:system_model}, the disjunctive security policy of user $u$ (written as $P_u$) is a set of conjunctions $\con$, interpreted as a disjunction of conjunctions of table and view identifiers. For a policy $P_u$ that adheres to the constraints mentioned earlier, $\sigma_{\st}$ is defined as follows:
\begin{align*}
	\sigma_{\st}(P_u) = \{ \mathrm{sts}(\con) \mid \con \in P_u \}
\end{align*}

\tightpar{Labels} 
In our model, a security label $\ell$ is defined as a set of symbolic tuples, and we define the ordering relation of two labels, written as $\ell_1 \stsubset \ell_2$, as follows: 
\begin{definition}\label{def:symbolic_tuples_ordering}
	$\ell_1 \stsubset \ell_2$ iff for all symbolic tuples $\tuple{T,\phi, \pi} \in \ell_1$, there are well-formed symbolic tuples $\tuple{T_1,\phi_1, \pi_1},...,\tuple{T_n,\phi_n, \pi_n}$ in $\ell_2$ such that $T \subseteq (T_1 \cup ... \cup T_n)$, $T_1,..., T_n$ are disjoint, $\phi \models (\phi_1 \wedge ... \wedge \phi_n)$, and $dep(\phi) \cup \pi \subseteq (\pi_1 \cup ... \cup \pi_n)$.
\end{definition}
To ensure soundness, we assume that all of the symbolic tuples in the right hand side of $\stsubset$ are well-formed.  
This definition relies on entailment to check the ordering of $\phi$, and write $\phi_1 \models \phi_2$ which means that any assignment that satisfies $\phi_1$ also satisfies $\phi_2$. 

\begin{example} Consider symbolic tuples $\ell_1 = \{\tuple{ \{\mathrm{emp}\}, s = 10, \{r\} }\} $ and $\ell_2 = \{ \tuple{ \{\mathrm{emp}, \mathrm{mng}\}, s > 5, \{r,s,m\} } \} $. We have $\ell_1 \stsubset \ell_2$ since $\{\mathrm{emp}\} \subseteq \{\mathrm{emp}, \mathrm{mng}\}$, $\{r\} \subseteq \{r,s,m\}$, $s = 10 \models s > 5$ and $\{s\}\cup\{r\} \subseteq \{r,s,m\}$. 
\end{example}

\subsection{Enforcement} \label{sec:enforcement_security_check}
The dependency analysis of Section~\ref{sec:dep_analysis} extracts the dependencies of program $\prg$'s outputs to user $u$ and produces $\mathrm{QL}_u$. Applying $\sigma_{\st}$ to $\mathrm{QL}_u$ yields a set of labels, each bounding the information revealed in some path, the $u$-knowledge of $\prg$ (denoted by $k(\prg)_u$). We interpret this as a disjunction, as any execution follows along one particular path.

Similarly, applying $\sigma_{\st}$ to the disjunctive security policy of user $u$ (\ie $P_u$) results in a set of labels. Each label faithfully captures one conjunction, and so the policy is also represented as a set of labels $ak(P_u)$, interpreted disjunctively.

By Lemma~\ref{lemma:DQ_imply_QoI}, to verify that the security condition is satisfied, it is sufficient to establish that $\mathrm{QL}_u \sqsubseteq P_u$ in the
DQ. However, checking $\sqsubseteq$ in the DQ is not generally tractable.
For the security check, we therefore instead perform a twofold approximation: we check ordering for the conjunctive inner sets using the approximate ordering $\stsubset$, and approximate the mix-based ordering on the disjunctive outer sets in a way that loses little relative to our analysis:
\begin{definition} \label{def:secuirty_check}
	We say that $k(\prg)_u \sqsubseteq_* ak(P_u)$ iff
	\begin{align*}
		\forall \ell_{k} \in k(\prg)_u, \ \exists \ell_{ak} \in ak(P_u) . \ \ \ell_{k} \stsubset \ell_{ak}
	\end{align*}
\end{definition}
\noindent where $\ell_{ak} $ and $\ell_{k}$ are labels, and $\stsubset$ is the symbolic tuple ordering of Def.~\ref{def:symbolic_tuples_ordering}. To ensure faithful labeling of policies, we assume all of the symbolic tuples in $\ell_{ak}$ are well-formed as defined in Def.~\ref{def:st_well_formedness}. 
We can then formalize the relationship between $\sqsubseteq_*$ and $\sqsubseteq$ as follows.
\begin{restatable}{globallemma}{unfoldingAndQuantaleProof}\label{lemma:unfolding_and_quantale}
	If $\sigma_{\st}(\{Q_1, ..., Q_n\}) \sqsubseteq_* \sigma_{\st}(\{P_1, ..., P_m\})$, then in the DQ,
$		(Q_1 \vee ... \vee Q_n) \sqsubseteq (P_1 \vee ... \vee P_m)$.
\end{restatable}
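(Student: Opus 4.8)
The plan is to factor the implication into two independent soundness arguments: a \emph{query-level} step that turns the symbolic-tuple ordering into determinacy order between the underlying query sets, and a \emph{quantale-level} step that lifts this pointwise determinacy to an inclusion of tiling closures. I would first unfold the hypothesis. Since $\sigma_{\st}(\{Q_1,\dots,Q_n\}) = \{\mathrm{sts}(Q_1),\dots,\mathrm{sts}(Q_n)\}$ and likewise on the right, Def.~\ref{def:secuirty_check} says precisely that for every $i$ there is some $j(i)$ with $\mathrm{sts}(Q_i) \stsubset \mathrm{sts}(P_{j(i)})$.

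The crux is the query-level step: establishing that $\mathrm{sts}(Q_i) \stsubset \mathrm{sts}(P_{j(i)})$ implies $\cl{Q_i} \sqsubseteq \cl{P_{j(i)}}$ in the DL, equivalently $P_{j(i)} \twoheadrightarrow Q_i$ (Def.~\ref{def:query_determinacy}). I would prove this by exhibiting a recomputation: from the result of $P_{j(i)}$ on any database state one reconstructs the result of $Q_i$ on that state. Writing $\tuple{T,\phi,\pi}$ for the symbolic tuple of $Q_i$ and $\tuple{T_1,\phi_1,\pi_1},\dots$ for the disjoint well-formed covering family in $\mathrm{sts}(P_{j(i)})$, the table condition $T \subseteq \bigcup_k T_k$ guarantees every relation read by $Q_i$ is visible; the entailment $\phi \models \bigwedge_k \phi_k$ guarantees the rows retained by $P_{j(i)}$ are a superset of those retained by $Q_i$; and $\dep(\phi)\cup\pi \subseteq \bigcup_k \pi_k$ guarantees that every column needed to re-impose $Q_i$'s filter $\phi$ and to perform its projection $\pi$ is present in the output of $P_{j(i)}$. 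Re-applying $\phi$ and projecting onto $\pi$ is then a pure function of the $P_{j(i)}$-output, which is exactly determinacy. Two points need care and constitute the main obstacle. First, well-formedness of the right-hand symbolic tuples (Def.~\ref{def:st_well_formedness}, assumed in Def.~\ref{def:symbolic_tuples_ordering}) is essential: it makes the policy-side abstraction \emph{lossless}, so that $\mathrm{sts}(P_{j(i)})$ reveals no more than $P_{j(i)}$ itself; without it the comparison would bound $Q_i$ against something strictly above $P_{j(i)}$ and the determinacy conclusion would fail. Second, the disjointness of the covering family lets the reconstruction treat each table's contribution independently and recombine by join, which is where the no-self-join convention of Section~\ref{sec:CQC} enters.

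With the pointwise determinacy $\cl{Q_i} \sqsubseteq \cl{P_{j(i)}}$ in hand, I would finish with the quantale-level step, showing $\tc(\{\cl{Q_1},\dots,\cl{Q_n}\}) \subseteq \tc(\{\cl{P_1},\dots,\cl{P_m}\})$, which is exactly $(Q_1 \vee \dots \vee Q_n) \sqsubseteq (P_1 \vee \dots \vee P_m)$ by Def.~\ref{def:determinacy_quantale}. Since $\tc = {\Downarrow}\,\mix$ and the right-hand side is a downset, it suffices to check $\mix(\{\cl{Q_i}\}) \subseteq \tc(\{\cl{P_j}\})$. Take any tiling $R \in \mix(\{\cl{Q_i}\})$; each of its cells is a cell of some $\eq{\cl{Q_i}}$. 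By order-preservation of the DL-to-LoI homomorphism (Lemma~\ref{lemma:DL_imply_LoI}), $\cl{Q_i} \sqsubseteq \cl{P_{j(i)}}$ means $\eq{\cl{P_{j(i)}}}$ refines $\eq{\cl{Q_i}}$, so each such cell is a union of cells of $\eq{\cl{P_{j(i)}}}$. Refining every cell of $R$ in this manner yields a partition $R'$ that refines $R$ (hence $R \sqsubseteq R'$) and all of whose cells are drawn from the $\eq{\cl{P_j}}$, so $R' \in \mix(\{\cl{P_j}\})$ and therefore $R \in {\Downarrow}\,\mix(\{\cl{P_j}\}) = \tc(\{\cl{P_j}\})$. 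This yields the desired inclusion, and combined with Lemma~\ref{lemma:DQ_imply_QoI} it delivers the security condition. I expect the quantale-level step to be routine and the query-level recomputation---particularly the interplay of well-formedness and table disjointness---to be where the real work lies.
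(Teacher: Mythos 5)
Your query-level step is essentially the paper's own: the implication from $\mathrm{sts}(Q_i) \stsubset \mathrm{sts}(P_{j(i)})$ to $Q_i \preceq P_{j(i)}$ is exactly Lemma~\ref{lemma:symbolic_tuple_ordering_determinacy_proof}, which the paper proves by the same recomputation idea (an intermediate Cartesian-product tuple, with well-formedness and table-disjointness playing precisely the roles you assign them), and the paper's proof of the present lemma simply cites that lemma rather than re-deriving it. So the real divergence, and the problem, is in your quantale-level step.

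The DQ's mix is defined as $\mix(\Q) = \{ P \in DL(\queries) \mid x \in [\eq{P}] \Rightarrow (\exists Q \in \Q .\ x \in [\eq{Q}]) \}$: its members must be elements of $DL(\queries)$, i.e.\ determinacy-closed sets of \emph{queries}. Your $R'$ is built as a patchwork partition of $\Omega_D$, gluing cells of $\eq{\cl{P_{j_1}}}$ inside one $R$-cell to cells of $\eq{\cl{P_{j_2}}}$ inside another. In the QoI this move is unproblematic, because there mix ranges over all equivalence relations; in the DQ, however, nothing guarantees that this patchwork is induced by any set of queries in $\queries$, so you cannot conclude $R' \in \mix(\{\cl{P_1},\ldots,\cl{P_m}\})$ --- indeed even the assertion ``$R \sqsubseteq R'$'' is ill-typed for the purpose of applying ${\Downarrow}$, since the DL order and the DQ's downward closure are defined on DL elements only. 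Query-definability of knowledge is the very feature distinguishing the DQ from the QoI, so this cannot be waved away. The paper's proof avoids constructing any new element: from $\cl{Q_i} \sqsubseteq \cl{P_{j(i)}}$ it obtains $\tc(\{\cl{Q_i}\}) \subseteq \tc(\{\cl{P_{j(i)}}\})$ (the mix witness being $\cl{P_{j(i)}}$ itself), takes unions over $i$ and $j$, and collapses $\tc\big(\bigcup_i \tc(\cdot)\big)$ using the closure algebra of Lemma~\ref{lemma:app:cup_week_commute_tc}. If you want to keep an element-wise argument, it can be repaired without the patchwork: any $R \in \mix(\{\cl{Q_1},\ldots,\cl{Q_n}\})$ already lies in $\mix\big(\bigcup_j \tc(\{\cl{P_j}\})\big)$, because each cell of $\eq{R}$ is a cell of some $\eq{\cl{Q_i}}$ and $\cl{Q_i} \in \tc(\{\cl{P_{j(i)}}\})$; then Lemma~\ref{lemma:app:cup_week_commute_tc} gives $\tc\big(\bigcup_j \tc(\{\cl{P_j}\})\big) = \tc(\{\cl{P_1},\ldots,\cl{P_m}\})$, and downward-closedness of the right-hand side finishes the proof. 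The moral of the fix: keep $R$ as its own witness in a mix over a larger family of DL elements, instead of refining it into something that may no longer be query-definable.
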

{\iffull
We refer the readers to Appendix~\ref{sec:app:proof:symbolic_tuple_DQ_ordering} for the proof of this Lemma.
\fi}

\subsection{Soundness Proof}\label{sec:enforcement_proof}
Fig.~\ref{fig:proof_steps} outlines the overall architecture of our enforcement mechanism and the correctness assertion that we make of it.

\begin{figure}
	\centering
	\begin{tikzpicture}
		\begin{scope}[xscale=2.2, yscale=1.4] %
			\node (lp1) at (0,-0) {$ak(P_u)$};
			\node (lk1) at (0,-1) {$k(\prg)_u$};
			\node (lp2) at (1,-0) {$P_u$};
			\node (lk2) at (1,-1) {$\text{QL}_u$};
			\node (lp3) at (2,-0) {$\llbracket P_u \rrbracket$};
			\node (lk3) at (2,-1) {$\llbracket \text{QL}_u \rrbracket$};
			\node (prog) at (1,-2) {$\prg$};
			\node (progi) at (2,-2) {$\llbracket \prg \rrbracket_u$};

            \node (impl1) at (0.5,-0.5) {$\Rightarrow$};            
	        \node (impl2) at (1.5,-0.5) {$\Rightarrow$};            
	        \node (impl3) at (1.5,-1.5) {$\Rightarrow$};            
	
			\begin{scope}[every node/.style={rotate=90}]
				\node (cmp) at ($(progi)!0.5!(lk3)$) {$\sqsubseteq$};
				\node (cmp) at ($(lk3)!0.5!(lp3)$) {$\sqsubseteq$};
				\node (cmp) at ($(lk2)!0.5!(lp2)$) {$\sqsubseteq$};
				\node (cmp) at ($(lk1)!0.5!(lp1)$) {$\sqsubseteq_*$};
			\end{scope}
			
			\draw[|->] (prog)--(progi);
			\draw[|->] (lk2)--(lk3);
			\draw[|->] (lp2)--(lp3);
			\draw[|->] (prog)--node[left] {\footnotesize D.A.}(lk2);
			\draw[|->] (lk2)--node[below] {\footnotesize $\sigma_{\st}$}(lk1);
			\draw[|->] (lp2)--node[above] {\footnotesize $\sigma_{\st}$}(lp1);
			\draw[<-|] (lk1)--(lk2);
			
			\node[draw, thick, dotted, rounded corners, inner xsep=0.5em, inner ysep=0.6em, fit=(lp3) (progi)] (box) {};
			\node[fill=white] at (box.north) {QoI};
			
			\node[draw, thick, dotted, rounded corners, inner xsep=0.5em, yshift=0em, inner ysep=0.6em, fit=(lp2) (lk2)] (box) {};
			\node[fill=white] at (box.north) {DQ};
			
		\end{scope}
	\end{tikzpicture}
	
	\caption{Overall architecture of our proof}
	\label{fig:proof_steps}
\end{figure}
The rightmost column of Fig.~\ref{fig:proof_steps} represents a chain of information order relations in the QoI, which we establish for each enforcement step. Following the chain from bottom to top, we obtain the security condition of Def.~\ref{def:security_condition}.
At the same time, the ``left boundary'' of the figure, comprising the D.A., $\sigma_{\st}$ abstractions
and $\sqsubseteq_*$ check, represents the computations that are actually performed to check a program.

\begin{restatable}{globaltheorem}{enforcementSoundnessProofHighLevel}\label{thrm:enforcement_soundness_proof_highlevel}
If a program $\prg$ satisfies Def.~\ref{def:secuirty_check}, then it is secure in the sense of Def.~\ref{def:security_condition}.
\end{restatable}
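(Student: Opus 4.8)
The plan is to traverse the chain of order relations sketched in Fig.~\ref{fig:proof_steps} from the assumption at the bottom-left to the security condition at the top-right, composing three implications by transitivity of $\sqsubseteq$ in the QoI. The assumption is precisely that $\prg$ satisfies Def.~\ref{def:secuirty_check}, \ie $k(\prg)_u \sqsubseteq_* ak(P_u)$, and the goal is the QoI inequality $\llbracket \prg \rrbracket_u \sqsubseteq \llbracket P_u \rrbracket$ of Def.~\ref{def:security_condition}. Since $k(\prg)_u = \sigma_{\st}(\mathrm{QL}_u)$ and $ak(P_u) = \sigma_{\st}(P_u)$, I would establish this QoI inequality by inserting the intermediate point $\llbracket \mathrm{QL}_u \rrbracket$ and proving $\llbracket \prg \rrbracket_u \sqsubseteq \llbracket \mathrm{QL}_u \rrbracket$ and $\llbracket \mathrm{QL}_u \rrbracket \sqsubseteq \llbracket P_u \rrbracket$ separately.

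The policy side is almost immediate from the lemmas already proved. Writing $\mathrm{QL}_u = \{Q_1,\dots,Q_n\}$ and $P_u = \{P_1,\dots,P_m\}$, the assumption unfolds, via the definitions of $k$ and $ak$, to $\sigma_{\st}(\{Q_1,\dots,Q_n\}) \sqsubseteq_* \sigma_{\st}(\{P_1,\dots,P_m\})$, so Lemma~\ref{lemma:unfolding_and_quantale} yields the DQ inequality $(Q_1 \vee \dots \vee Q_n) \sqsubseteq (P_1 \vee \dots \vee P_m)$, \ie $\mathrm{QL}_u \sqsubseteq P_u$ in the DQ. Applying the quantale homomorphism of Lemma~\ref{lemma:DQ_imply_QoI} transports this order relation into the QoI, giving $\llbracket \mathrm{QL}_u \rrbracket \sqsubseteq \llbracket P_u \rrbracket$. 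This is the ``impl1 then impl2'' portion of the figure and requires no new work beyond invoking the two restated lemmas.

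The program side, $\llbracket \prg \rrbracket_u \sqsubseteq \llbracket \mathrm{QL}_u \rrbracket$ in the QoI, is where the real argument lies, and I expect it to be the main obstacle. Interpreting $\llbracket \prg \rrbracket_u$ as the QoI element it generates, the task is to show that the single equivalence relation $\llbracket \prg \rrbracket_u$ lies in $\tc(\EQ{\mathrm{QL}_u}) = {\Downarrow}\mix(\EQ{\mathrm{QL}_u})$, \ie that it is bounded above in the LoI by some tiling assembled from the classes of the $\eq{Q}$ with $Q \in \mathrm{QL}_u$. The key ingredient is Lemma~\ref{lemma:type_analysis_soundness_proof}: for every $db$, the set $Q_{\prg,u}(db)$ of queries on which the $u$-output actually depends satisfies $Q_{\prg,u}(db) \in \mathrm{QL}_u$. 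I would build the candidate tiling $R$ by assigning to each state $db$ the knowledge set $[db]_{\eq{Q_{\prg,u}(db)}}$; the soundness of the taint-tracking semantics guarantees that any $db'$ agreeing with $db$ on all queries in $Q_{\prg,u}(db)$ produces the same $u$-trace as $db$, so each such class refines the $\llbracket \prg \rrbracket_u$-class of $db$, whence $\llbracket \prg \rrbracket_u \sqsubseteq R$. The delicate point is checking that these locally-defined classes actually form a partition, \ie that $Q_{\prg,u}(\cdot)$ is constant on each class $[db]_{\eq{Q_{\prg,u}(db)}}$; this should hold because agreement on $Q_{\prg,u}(db)$ fixes the observable trace, which in turn fixes the control-flow path and thus the retrieved queries. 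Granting this, every class of $R$ is by construction a class of some $\eq{Q}$ with $Q \in \mathrm{QL}_u$, so $R \in \mix(\EQ{\mathrm{QL}_u})$ and $\llbracket \prg \rrbracket_u \in \tc(\EQ{\mathrm{QL}_u})$ as required.

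Finally, transitivity of $\sqsubseteq$, which is set inclusion in the QoI, composes the two halves: $\llbracket \prg \rrbracket_u \sqsubseteq \llbracket \mathrm{QL}_u \rrbracket \sqsubseteq \llbracket P_u \rrbracket$, which is exactly the security condition of Def.~\ref{def:security_condition}. The only genuinely new reasoning is the tiling construction of the previous paragraph; the remainder is bookkeeping that chains the previously established lemmas along the right-hand column of Fig.~\ref{fig:proof_steps}.
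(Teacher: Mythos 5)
Your proposal is correct and takes essentially the same route as the paper: the same three-cell decomposition of Fig.~\ref{fig:proof_steps}, with Lemma~\ref{lemma:unfolding_and_quantale} and Lemma~\ref{lemma:DQ_imply_QoI} handling the policy side and the dependency-analysis correctness argument handling the program side. Your candidate tiling $R$ is precisely the paper's relation $k(\mathrm{QL}_u, \prg)$ from Appendix~\ref{app:dep_analysis}, and the ``delicate point'' you flag (constancy of $Q_{\prg,u}(\cdot)$ on each class) is exactly what the paper's Lemma~\ref{lem:taintclosed} establishes, while your trace-agreement step corresponds to its Lemma~\ref{lem:taintsoutputs}.
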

{\iffull
\begin{proof} The statement follows from establishing the implications in the diagram of Fig.~\ref{fig:proof_steps}. The top left cell is Lemma~\ref{lemma:unfolding_and_quantale}; the top right cell is Lemma \ref{lemma:DQ_imply_QoI}; and the bottom cell (dependency analysis) is Appendix \ref{app:dep_analysis}.
\end{proof}
\fi}

\section{Implementation and Evaluation}\label{sec:implementation}
In this section, we describe our prototype \toolname~\cite{divertTool}, which implements the type-based dependency analysis of Section~\ref{sec:dep_analysis} and query abstraction of Section~\ref{sec:symbolic_tupes} to verify the security of database-backed programs. We then  evaluate \toolname's effectiveness using functional tests and an assortment of real-world-inspired use cases.

\subsection{Implementation}
To evaluate the feasibility and security of our approach in practice, we implemented the type-based dependency analysis of Section~\ref{sec:dep_analysis}. 
For the sake of practicality, instead of CQC, \toolname uses the \texttt{SELECT}-\texttt{FROM}-\texttt{WHERE} portion of SQL, which is analogous to CQC as described in Section~\ref{sec:CQC}. Following the query analysis of Section~\ref{sec:symbolic_tupes}, these SQL queries are then converted into symbolic tuples. For the security check, the symbolic tuples with the result of the program analysis must be compared to those representing the policy; to perform this comparison following Def.~\ref{def:symbolic_tuples_ordering}, we use the Z3 SMT solver~\cite{demoura2008z3}.
Our implementation operates on programs in the language presented in Section~\ref{sec:language}, with the addition of two macros \texttt{@Table@} and \texttt{@Policy@} for defining the tables' schema and the security policy.

\subsection{Test suite}
To validate our implementation, we use a functional test suite consisting of \benchmarkNumber programs, designed to capture a broad variety of examples of disjunctive dependencies.
This suite includes programs with row- and column-level policies of varying granularity levels, and those necessitating the use of SMT solvers for verification. Furthermore, the tests verify the behaviour of the dependency analysis by incorporating complex conditionals, loops, and implicit and explicit outputs. 
The tests can be found in the implementation repository~\cite{divertTool}.

\subsection{Use cases}
We evaluate \toolname on four use cases inspired by real-world problems in which disjunctive policies naturally arise. %
The purpose of this evaluation is to validate the security analysis of \toolname on realistic scenarios involving disjunctive policies, and ensure that its behaviour is consistent with the definitions of Section~\ref{sec:security_model}. Rather than analysing complete applications for each example, we therefore focus on smaller kernels that capture the core security-critical behaviour of the respective problem.

\tightpar{Privacy-preserving location service}
Multilateration is a technique to determine the location of a user by measuring their distance to known reference points~\cite{murphy1995determination}. Two distances are sufficient to narrow a user's location down to one of two points on a map, and three identify the location unambiguously.
Consider a location service provider which tracks, for some number of users, not only their precise location but also their distances to certain points of interest (PoI) such as restaurants or shops. An advertiser wants to query this service to provide location-based ads. For example, if the user is close to a shop $A$, and $A$ has a sale going on, the user may be enticed by this information.

Privacy and business considerations make it desirable to not reveal the precise location of the user to the advertisement company accessing the database, while still allowing for some location-based services in this vein.
If the advertiser were to learn the distance of a single user to two or more PoIs at a specific time, the user's location could be inferred. However, we may still want to release the user's distance to any one PoI which they are currently closest to. 
This can be interpreted as a disjunctive policy, in which the information revealed for each user is bounded by the disjunction of that user's distances to some \emph{single} PoI.

The database schema consists of a single table \texttt{Distance(id, poi, dis, loc)}, which stores the ID of each user, the name of the PoI, their distance, and the user's precise location.
We implement a small example with two PoIs $\{ \text{`restaurant'}, \text{`mall'} \}$ and two users $\{ 1, 2\}$.
Let the view $\vw_{i,j}$ for each user $i$ and PoI $j$ be defined as the query \texttt{SELECT id, poi FROM Distance WHERE id = $i$ AND poi = $j$}. The disjunctive policy then covers every combination of user and PoI as a possibility: $\{ \{\vw_{1,\text{`restaurant'}}, \vw_{2,\text{`restaurant'}}\}, \{\vw_{1,\text{`restaurant'}}, \vw_{2,\text{`mall'}}\},$ $\{\vw_{1,\text{`mall'}}, \vw_{2,\text{`restaurant'}}\}, \{\vw_{1,\text{`mall'}} ,\vw_{2,\text{`mall'}}\}\}$.

We test two programs against this policy. In one, the advertiser uses internal parameters identifying a target user and interest, and issues a single query requesting that user's distance from the relevant point of interest. In the other, the advertiser still targets a particular user, but queries all of that user's distances. As expected, \toolname accepts the former program, but rejects the latter.

\tightpar{Privacy-preserving data publishing} 
Expanding upon the motivating example in the introduction, we consider the case of programs querying a database with personally identifiable information (\ie quasi-identifiers).
As discussed before, revealing too many quasi-identifiers may make it possible to identify an individual. 
We consider the example of a medical database~\cite{sweeney2002kanonymity} with a table \texttt{Patients(zip, gen, dis)} storing the ZIP code of residence, gender and disease of patients.
An agent querying the database should not learn more than two of these at a time. For simplicity's sake, we only consider queries that retrieve the same data from each patient.
Defining
$\vw_1=$ \texttt{SELECT dis, gen FROM Patients},
$\vw_2=$ \texttt{SELECT zip, gen FROM Patients}, and
$\vw_3=$ \texttt{SELECT zip, dis FROM Patients},
the disjunctive policy can then be written as $\{ \{\vw_1\}, \{\vw_2\}, \{\vw_3\} \}$.

Once again, we validate two programs against this policy. Branching
on an internal parameter, the client will issue one query to select data for either male or female patients.
In the first program, all queries take the form of \texttt{SELECT dis FROM Patients WHERE gen = $\squote{F}$},
whereas in the second one, one of the queries additionally filters on the ZIP code:
\texttt{SELECT dis FROM Patients WHERE gen = $\squote{F}$ AND zip = $10001$}. Again, only the latter
program is rejected by \toolname. This reveals a potential subtlety, as data dependency and hence release of information may arise not only from what columns are selected, but also from conditions restricting the set of rows.

\tightpar{Secret sharing}
We implement a $(t,n)$ secret sharing schema that splits a secret value $s$ into $n$ shares $s_1, s_2, ..., s_n$. These shares are then distributed among $n$ parties $p_1, p_2, ..., p_n$, each  receiving a unique share. A secure secret sharing schema requires that the secret $s$ can only be reconstructed if $t$ or more participants combine their shares. If the number of combined shares is less than $t$, no information about the secret should be revealed.
This requirement naturally translates to a disjunctive policy  $s_1 \vee s_2 \vee ... \vee s_n$, stipulating that participants can each only learn \emph{one} share. 

We assume that the shares $s_1, s_2, ..., s_n$ are created by a secure secret sharing schema and are then stored in a database. The database schema consists of the table \texttt{Shares(shareID, shareVal)} which stores the ID of each share and their corresponding value. 

The policy only allows a user to read one of the shares (\ie only one row of the table). We define the view $\vw_i$ for each share as \texttt{SELECT shareVal, shareID FROM Shares WHERE shareID = $i$} where $i=1,...,n$. The corresponding disjunctive policy is going to look like $\{ \{\vw_1\}, \{\vw_2\}, ..., \{\vw_n\} \}$.

We implement a program that executes a subroutine for each user, issuing a database query to retrieve the user's share. For example the query for a user to retrieve the share number $5$ is \texttt{SELECT shareVal FROM Shares WHERE shareID = $5$} and it is correctly accepted by \toolname. If the same user issues another query to retrieve share number $6$, it violates the policy and hence the program is rejected. This scenario shows that \toolname is able  to correctly enforce row-level policies precisely.

\tightpar{Online shop}
This use case models an online shop and a user with a gift card can only use it to ``buy" items that match the value of the gift card.
Here we consider a scenario with an online shop that only provides digital items and they are stored in a database. The database schema consists of the items table \texttt{Items(id, name, data)} which stores the ID and name of each digital item. We define a view $\vw_n$ for each item as \texttt{SELECT data, name FROM Items WHERE name = $n$} where $n$ is the item's name. 

Assume a database that has the  items \emph{Movie}, \emph{CinemaTicket}, \emph{Audiobook}, \emph{Ebook}, and \emph{GymMem}. A policy  should only allow the user to access a certain amount of items whose value adds up to value of gift card. For instance a disjunctive policy may look like: $\{ \{\vw_{\mathrm{Movie}}, \vw_{\mathrm{CinemaTicket}}\}, \{\vw_{\mathrm{Audiobook}}, \vw_{\mathrm{Ebook}}\}, \{\vw_{\mathrm{GymMem}}\},$ $\{\vw_{\mathrm{CinemaTicket}}, \vw_{\mathrm{Ebook}}\}  \}$.

We model a user program that issues queries to select items, e.g., \texttt{SELECT data FROM Items WHERE name = $\squote{Movie}$}.

\toolname accepts this query because view $\vw_{\mathrm{Movie}}$ allows the user to access \emph{Movie}. We create two different scenarios; in one the user issues another query asking for \emph{Audiobook}, which \toolname rejects. In the second scenario, the user asks for \emph{CinemaTicket} which is allowed by the policy, and hence \toolname accepts it.

\section{Related Work}\label{sec:related_work}
This section puts our contributions in the context of related works in the areas of information flow security and database security, discussing security models of dependencies and tractable enforcement mechanisms. To our knowledge, we are the first to explore enforcement mechanisms for disjunctive policies, as well as to reconcile semantic models of (disjunctive) dependencies across the areas of information flow control and database access control.

\tightpar{Security models} Semantic models of dependencies have a long history since the introduction of the Lattice of Information (LoI) by Landauer and Redmond \cite{landauer1993lattice}. These models define a lattice structure to represent information as equivalence relations ordered by refinement and  serve as  cornerstone to justify soundness of various dependency analysis at the heart of enforcement mechanisms for security. 
For example, the universal lattice by Hunt and Sands \cite{DBLP:conf/popl/HuntS06} models dependencies between program variables such that the lattice elements are sets of variables ordered by set containment, and uses it to justify soundness against baseline security conditions, e.g., noninterference \cite{GoguenM82}. 

Within the database community, Bender et al. \cite{bender2013fine,bender2014explainable} define the notion of Disclosure Lattice to represent the information disclosed by sets of database queries. 
Disclosure Lattice has been further developed by Guarnieri et al. \cite{guarnieri2019information} to enforce conjunctive information-flow policies for database-backed programs. We point out that not all disclosure orders are suitable to represent information disclosure in the context of information flow control: By studying its relation to LoI, we show that query determinacy and the stronger notion of equivalent query rewriting \cite{nash2010views} provide sound abstraction, while query containment does not. 

Our work builds on recent work by Hunt and Sands \cite{hunt2021quantale}, which provides a semantic model for disjunctive dependencies, under the notion of the Quantale of Information. %
We study quantale structures in the context of databases, providing support for disjunctive policies in database-backed programs. 
While these policies are rooted in the area of access control, cf. ethical wall policies \cite{nash89}, the work of  Hunt and Sands \cite{hunt2021quantale} is the first to provide an extensional characterization as information-flow policies. Drawing on our new notion of Determinacy Quantale, we develop a security condition to capture the security of database-backed programs in presence of disjunctive database policies.

\tightpar{Enforcement mechanisms} The problem of enforcing disjunctive policies for programs and/or databases is completely unexplored. We study how a standard type-based program analysis \cite{delft2015very}, equipped the notion of path sensitivity, can be adapted to statically capture disjunctive program dependencies.  

At the core of our analysis is a new abstraction of database queries which enables flexible enforcement of disjunctive policies by means of SMT solvers, as witnessed by our use cases. An immediate benefit of our Determinacy Quantale is that we can prove soundness of the enforcement with respect to a solid semantic baseline for disjunctive dependencies. 

There exists a wide array of works enforcing conjunctive policies for database-backed programs.  Guarnieri et al. \cite{guarnieri2019information} propose dynamic monitoring to enforce database policies. Their abstractions are limited to boolean queries and rely on the Disclosure Lattice of  Bender et al. \cite{bender2013fine,bender2014explainable}, which may cause soundness issues when assuming query containment as the underlying lattice order. 
 
Language-integrated queries are supported by a range of works  such as SIF~\cite{chong2007sif} and \textsc{JsLinq}~\cite{balliu2016jslinq}, \textsc{SeLinks}~\cite{corcoran2009cross}, UrFlow~\cite{DBLP:conf/osdi/Chlipala10}, DAISY~\cite{guarnieri2019information}, Jacqueline~\cite{yang2016precise}, and LWeb~\cite{DBLP:journals/pacmpl/ParkerVH19} for  row- and column-level conjunctive policies. These works apply PL-based enforcement techniques such as type systems, dependent types, refinement types, and symbolic execution to database-backed programs~\cite{DBLP:conf/sp/SwamyCH08,DBLP:conf/popl/LourencoC15,DBLP:journals/pacmpl/ParkerVH19,guarnieri2019information}, but lack support for expressing and enforcing disjunctive policies. 

Li and Zhang \cite{DBLP:conf/csfw/LiZ17} explore path-sensitive program analysis to improve precision of information flow analysis, yet they do not consider disjunctive policies.
QAPLA \cite{DBLP:conf/uss/MehtaEH0D17} is a database access control middleware supporting complex security policies, such as linking and aggregation
policies, with focus only on access control.

\section{Conclusions}\label{sec:conclusion}

We presented a case for the significance of disjunctive dependency analysis to the security of database-backed programs. After reviewing recent theoretical developments in representing disjunctive information, we introduced two structures, the Determinacy Lattice and the Determinacy Quantale, as database-oriented counterparts to theoretical structures representing simple and disjunctive knowledge respectively. 

Using these structures, we formulated a security condition which expresses that a database-backed program satisfies a given disjunctive policy. 
In order to enforce this security condition, we developed a type-based static analysis to compute a bound on the disjunctive dependencies of database-backed programs in a model language. By a series of approximations, this bound itself can be tractably compared to the representation of a static policy.

These steps constitute an enforcement mechanism for disjunctive policies, which we proved sound with respect to our security condition.
To showcase this enforcement mechanism, we implemented it in our prototype tool, \toolname. In order to validate this prototype and the overall framework, we verified the tool on a set of functional tests covering a variety of language features and disjunctive information patterns, as well as several use cases representing real-world scenarios in which we want to enforce disjunctive policies.

\section{Acknowledgements}

We are grateful to David Sands and Roberto Guanciale for fruitful discussions, and would also like to thank the anonymous reviewers for their insightful comments and feedback.

This work was partially supported by the Wallenberg AI, Autonomous Systems and Software Program (WASP) funded by the Knut and Alice Wallenberg Foundation, the Swedish Research Council (VR), and the Swedish Foundation for Strategic
Research (SSF).

\bibliographystyle{IEEEtran}
\bibliography{bibliography}

\begin{thebibliography}{10}
\providecommand{\url}[1]{#1}
\csname url@samestyle\endcsname
\providecommand{\newblock}{\relax}
\providecommand{\bibinfo}[2]{#2}
\providecommand{\BIBentrySTDinterwordspacing}{\spaceskip=0pt\relax}
\providecommand{\BIBentryALTinterwordstretchfactor}{4}
\providecommand{\BIBentryALTinterwordspacing}{\spaceskip=\fontdimen2\font plus
\BIBentryALTinterwordstretchfactor\fontdimen3\font minus
  \fontdimen4\font\relax}
\providecommand{\BIBforeignlanguage}[2]{{%
\expandafter\ifx\csname l@#1\endcsname\relax
\typeout{** WARNING: IEEEtran.bst: No hyphenation pattern has been}%
\typeout{** loaded for the language `#1'. Using the pattern for}%
\typeout{** the default language instead.}%
\else
\language=\csname l@#1\endcsname
\fi
#2}}
\providecommand{\BIBdecl}{\relax}
\BIBdecl

\bibitem{DBLP:journals/tdsc/BertinoS05}
E.~Bertino and R.~Sandhu, ``Database security-concepts, approaches, and
  challenges,'' \emph{{IEEE} Transactions on Dependable and Secure Computing},
  vol.~2, no.~1, pp. 2--19, 2005.

\bibitem{DBLP:dblp_journals/jsac/SabelfeldM03}
A.~Sabelfeld and A.~C. Myers, ``Language-based information-flow security,''
  \emph{{IEEE} Journal on Selected Areas in Communications}, vol.~21, no.~1,
  pp. 5--19, 2003.

\bibitem{guarnieri2016strong}
M.~Guarnieri, S.~Marinovic, and D.~Basin, ``Strong and provably secure database
  access control,'' in \emph{{IEEE} European Symposium on Security and Privacy
  (EuroS\&P)}.\hskip 1em plus 0.5em minus 0.4em\relax {IEEE}, 2016, pp.
  163--178.

\bibitem{bender2014explainable}
G.~Bender, L.~Kot, and J.~Gehrke, ``Explainable security for relational
  databases,'' in \emph{{ACM SIGMOD} International Conference on Management of
  data}.\hskip 1em plus 0.5em minus 0.4em\relax ACM, 2014, pp. 1411--1422.

\bibitem{sweeney2002kanonymity}
L.~Sweeney, ``k-anonymity: A model for protecting privacy,''
  \emph{International Journal of Uncertainty, Fuzziness and Knowledge-based
  Systems}, vol.~10, no.~05, pp. 557--570, 2002.

\bibitem{dwork2006differential}
C.~Dwork, ``Differential privacy,'' in \emph{Automata, Languages and
  Programming: 33rd International Colloquium}.\hskip 1em plus 0.5em minus
  0.4em\relax Springer, 2006, pp. 1--12.

\bibitem{landauer1993lattice}
J.~Landauer and T.~Redmond, ``A lattice of information,'' in \emph{Proceedings
  Computer Security Foundations Workshop}.\hskip 1em plus 0.5em minus
  0.4em\relax {IEEE}, 1993, pp. 65--70.

\bibitem{bender2013fine}
G.~M. Bender, L.~Kot, J.~Gehrke, and C.~Koch, ``Fine-grained disclosure control
  for app ecosystems,'' in \emph{{ACM SIGMOD} International Conference on
  Management of Data}.\hskip 1em plus 0.5em minus 0.4em\relax {ACM}, 2013, pp.
  869--880.

\bibitem{hunt2021quantale}
S.~Hunt and D.~Sands, ``A quantale of information,'' in \emph{{IEEE} 34th
  Computer Security Foundations Symposium (CSF)}.\hskip 1em plus 0.5em minus
  0.4em\relax {IEEE}, 2021, pp. 1--15.

\bibitem{chong2007sif}
S.~Chong, K.~Vikram, A.~C. Myers \emph{et~al.}, ``Sif: Enforcing
  confidentiality and integrity in web applications,'' in \emph{16th {USENIX}
  Security Symposium (USENIX Security)}.\hskip 1em plus 0.5em minus 0.4em\relax
  {USENIX} Association, 2007, pp. 1--16.

\bibitem{corcoran2009cross}
B.~J. Corcoran, N.~Swamy, and M.~Hicks, ``Cross-tier, label-based security
  enforcement for web applications,'' in \emph{{ACM} {SIGMOD} International
  Conference on Management of data}.\hskip 1em plus 0.5em minus 0.4em\relax
  ACM, 2009, pp. 269--282.

\bibitem{balliu2016jslinq}
M.~Balliu, B.~Liebe, D.~Schoepe, and A.~Sabelfeld, ``Jslinq: Building secure
  applications across tiers,'' in \emph{6th ACM Conference on Data and
  Application Security and Privacy}.\hskip 1em plus 0.5em minus 0.4em\relax
  ACM, 2016, pp. 307--318.

\bibitem{DBLP:journals/pacmpl/ParkerVH19}
J.~Parker, N.~Vazou, and M.~Hicks, ``Lweb: Information flow security for
  multi-tier web applications,'' \emph{Proceedings of the {ACM} on Programming
  Languages}, vol.~3, no. POPL, pp. 1--30, 2019.

\bibitem{guarnieri2019information}
M.~Guarnieri, M.~Balliu, D.~Schoepe, D.~Basin, and A.~Sabelfeld,
  ``Information-flow control for database-backed applications,'' in
  \emph{{IEEE} European Symposium on Security and Privacy (EuroS\&P)}.\hskip
  1em plus 0.5em minus 0.4em\relax {IEEE}, 2019, pp. 79--94.

\bibitem{delft2015very}
B.~v. Delft, S.~Hunt, and D.~Sands, ``Very static enforcement of dynamic
  policies,'' in \emph{International Conference on Principles of Security and
  Trust}.\hskip 1em plus 0.5em minus 0.4em\relax Springer, 2015, pp. 32--52.

\bibitem{kaplansky2001set}
I.~Kaplansky, \emph{Set Theory and Metric Spaces}.\hskip 1em plus 0.5em minus
  0.4em\relax AMS Chelsea Publishing, 2001.

\bibitem{davey2002introduction}
B.~A. Davey and H.~A. Priestley, \emph{Introduction to lattices and
  order}.\hskip 1em plus 0.5em minus 0.4em\relax Cambridge University Press,
  2002.

\bibitem{abiteboul1995foundations}
S.~Abiteboul, R.~Hull, and V.~Vianu, \emph{Foundations of databases}.\hskip 1em
  plus 0.5em minus 0.4em\relax Addison-Wesley, 1995.

\bibitem{nash2010views}
A.~Nash, L.~Segoufin, and V.~Vianu, ``Views and queries: Determinacy and
  rewriting,'' \emph{{ACM} Transactions on Database Systems (TODS)}, vol.~35,
  no.~3, pp. 1--41, 2010.

\bibitem{wang2022conjunctive}
Q.~Wang and K.~Yi, ``Conjunctive queries with comparisons,'' in
  \emph{International Conference on Management of Data}.\hskip 1em plus 0.5em
  minus 0.4em\relax ACM, 2022, pp. 108--121.

\bibitem{askarov2008termination}
A.~Askarov, S.~Hunt, A.~Sabelfeld, and D.~Sands, ``Termination-insensitive
  noninterference leaks more than just a bit,'' in \emph{13th European
  Symposium on Research in Computer Security}.\hskip 1em plus 0.5em minus
  0.4em\relax Springer, 2008, pp. 333--348.

\bibitem{divertTool}
\BIBentryALTinterwordspacing
A.~M. Ahmadian, M.~Soloviev, and M.~Balliu, ``Divert,'' 2023, software release.
  [Online]. Available: \url{https://github.com/KTH-LangSec/DiVerT}
\BIBentrySTDinterwordspacing

\bibitem{demoura2008z3}
L.~De~Moura and N.~Bj{\o}rner, ``Z3: An efficient smt solver,'' in
  \emph{International conference on Tools and Algorithms for the Construction
  and Analysis of Systems (TACAS)}.\hskip 1em plus 0.5em minus 0.4em\relax
  Springer, 2008, pp. 337--340.

\bibitem{murphy1995determination}
W.~Murphy and W.~Hereman, ``Determination of a position in three dimensions
  using trilateration and approximate distances,'' \emph{Department of
  Mathematical and Computer Sciences, Colorado School of Mines, Golden,
  Colorado, MCS-95}, vol.~7, p.~19, 1995.

\bibitem{DBLP:conf/popl/HuntS06}
S.~Hunt and D.~Sands, ``On flow-sensitive security types,'' in \emph{33rd
  {SIGPLAN-SIGACT} Symposium on Principles of Programming Languages
  ({POPL}}.\hskip 1em plus 0.5em minus 0.4em\relax {ACM}, 2006, pp. 79--90.

\bibitem{GoguenM82}
J.~A. Goguen and J.~Meseguer, ``Security policies and security models,'' in
  \emph{IEEE Symposium on Security and Privacy (S\&P)}.\hskip 1em plus 0.5em
  minus 0.4em\relax {IEEE}, 1982, pp. 11--20.

\bibitem{nash89}
D.~F. Brewer and M.~J. Nash, ``The chinese wall security policy.'' in
  \emph{{IEEE} Symposium on Security and Privacy (S\&P)}.\hskip 1em plus 0.5em
  minus 0.4em\relax {IEEE}, 1989, pp. 206--214.

\bibitem{DBLP:conf/osdi/Chlipala10}
A.~Chlipala, ``Static checking of dynamically-varying security policies in
  database-backed applications,'' in \emph{9th {USENIX} Symposium on Operating
  Systems Design and Implementation}.\hskip 1em plus 0.5em minus 0.4em\relax
  {USENIX} Association, 2010, pp. 105--118.

\bibitem{yang2016precise}
J.~Yang, T.~Hance, T.~H. Austin, A.~Solar-Lezama, C.~Flanagan, and S.~Chong,
  ``Precise, dynamic information flow for database-backed applications,''
  \emph{{ACM SIGPLAN} Notices}, vol.~51, no.~6, pp. 631--647, 2016.

\bibitem{DBLP:conf/sp/SwamyCH08}
N.~Swamy, B.~J. Corcoran, and M.~Hicks, ``Fable: A language for enforcing
  user-defined security policies,'' in \emph{IEEE Symposium on Security and
  Privacy (S\&P)}.\hskip 1em plus 0.5em minus 0.4em\relax IEEE, 2008, pp.
  369--383.

\bibitem{DBLP:conf/popl/LourencoC15}
L.~Louren{\c{c}}o and L.~Caires, ``Dependent information flow types,'' in
  \emph{42nd {SIGPLAN}-{SIGACT} Symposium on Principles of Programming
  Languages ({POPL}}.\hskip 1em plus 0.5em minus 0.4em\relax {ACM}, 2015, pp.
  317--328.

\bibitem{DBLP:conf/csfw/LiZ17}
P.~Li and D.~Zhang, ``Towards a flow- and path-sensitive information flow
  analysis,'' in \emph{30th {IEEE} Computer Security Foundations Symposium
  ({CSF})}.\hskip 1em plus 0.5em minus 0.4em\relax {IEEE}, 2017, pp. 53--67.

\bibitem{DBLP:conf/uss/MehtaEH0D17}
A.~Mehta, E.~Elnikety, K.~Harvey, D.~Garg, and P.~Druschel, ``Qapla: Policy
  compliance for database-backed systems,'' in \emph{26th {USENIX} Security
  Symposium (USENIX Security)}.\hskip 1em plus 0.5em minus 0.4em\relax {USENIX}
  Association, 2017, pp. 1463--1479.

\end{thebibliography}

\iffull
\appendices
\section{Interpretations of Query Determinacy}\label{sec:app:proof:determinacy_def_equiv}
We prove the following technical lemma to show that
the two intuitive interpretations of the definition of query determinacy are equivalent.

\begin{globallemma}\label{lemma:app:determinacy_def_equiv}
	If $A$ is recursively enumerable and $f:A\rightarrow B$ and $g:A \rightarrow C$ are computable, then the following are equivalent:
	\begin{itemize}
		\item[(i)] For all $a,a'\in A$, if $f(a)=f(a')$, then $g(a)=g(a')$.
		\item[(ii)] There exists a computable $h:B\rightarrow C$ such that for all $a\in A$, $g(a)=h(f(a))$.
	\end{itemize}
\end{globallemma}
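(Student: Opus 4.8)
The plan is to prove the two implications separately, with $\text{(ii)}\Rightarrow\text{(i)}$ being immediate and $\text{(i)}\Rightarrow\text{(ii)}$ carrying the real content. For $\text{(ii)}\Rightarrow\text{(i)}$, I would suppose $h$ is as in (ii) and $f(a)=f(a')$; then $g(a)=h(f(a))=h(f(a'))=g(a')$. This direction is pure function composition and uses neither computability nor recursive enumerability.

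For $\text{(i)}\Rightarrow\text{(ii)}$, the idea is to build $h$ by a dovetailed search over $A$. If $A=\varnothing$ the claim is vacuous, so I assume $A$ is nonempty; since it is recursively enumerable I may fix a computable enumeration $a_0,a_1,a_2,\ldots$ of its elements. I then define $h$ on input $b\in B$ to search for the least index $i$ with $f(a_i)=b$ — effective because $f$ is computable and equality on $B$ (finite query results) is decidable — and, using computability of $g$, to output $g(a_i)$ for the first such $i$ found.

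Next I would verify correctness. Fixing any $a\in A$ and setting $b=f(a)$: since $a=a_j$ for some $j$, the search on input $b$ halts, returning $g(a_i)$ for some $i$ with $f(a_i)=b=f(a)$. By hypothesis (i), $f(a_i)=f(a)$ forces $g(a_i)=g(a)$, so $h(f(a))=g(a)$ as required. The same appeal to (i) shows the output is independent of which witness the search locates first, so $h$ is genuinely well-defined on the image $f(A)$.

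The main subtlety — which I would flag explicitly — is that $h$ as constructed is only guaranteed to halt on inputs $b\in f(A)$: outside the image the search runs forever, and since $f(A)$ is recursively enumerable but in general not decidable, there is no way to make $h$ total while keeping it correct on $f(A)$ and computable. I would therefore read ``computable $h:B\rightarrow C$'' as a partial computable function for which the factorization $g=h\circ f$ holds on all of $A$, which is exactly what the intended database reading (``the answers to $Q'$ can be computed from the answers to $Q$'') demands; the values of $h$ off $f(A)$ are irrelevant. This interpretive choice is what lets the construction proceed without assuming $f(A)$ decidable, and it is the only delicate point in the argument.
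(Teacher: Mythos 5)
Your proof is correct and follows essentially the same route as the paper's: the paper's witness-finding map $\hat f:B\rightharpoonup A$ (enumerate $A$, return the first preimage of $b$ under $f$) is exactly your dovetailed search, and its $h = g\circ\hat f$ is your $h$, with the same appeal to (i) for well-definedness. Your explicit remark that $h$ is only partial computable, halting precisely on $f(A)$, makes visible a point the paper handles only implicitly through the partial-function arrow in the type of $\hat f$.
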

\begin{proof}
	\noindent (ii)$\Rightarrow$(i): Suppose $b=f(a)=f(a')$, and $h$ is as in (ii). Then
	$g(a)=h(f(a))=h(b)$, and $g(a')=h(f(a'))=h(b)$.
	
	\noindent (i)$\Rightarrow$(ii): Let $\hat f:B\rightharpoonup A$ be the
	partial function that enumerates $A$ and for a given $b\in B$ returns the first $a\in A$ it finds
	such that $f(a)=b$. This is computable, per the algorithmic description provided.
	This does not necessarily satisfy $\hat f(f(a))=a$, but we do have
	$f (\hat f (f(a)))=f(a)$ by definition (since the enumeration of $A$ will either encounter $a$ or
	another $a'$ such that $f(a')=f(a)$ eventually). Hence
	$g (\hat f (f(a)))=g(a)$ by (i). So defining $h$ by $h(b) = g(\hat f(b))$,
	we find that $h(f(a))=g(a)$ as required.
\end{proof}

Instantiating Lemma~\ref{lemma:app:determinacy_def_equiv} with $A$ as the set of possible databases,
$f$ as the function $r_Q(db) = \{ \eval{q}{db} \mid q\in Q \}$ that computes the results of the queries in $Q$ on $db$, and $g$ as the same for $Q'$, we find that $Q$ determining $Q'$ indeed means that the (results of) queries in $Q$ are always sufficient to determine (compute) the result of the queries in $Q'$.

\section{Relation Between DL and LoI}\label{sec:app:proof:determinacy_loi_lattice}
We first prove some auxiliary lemmas, and then proceed to prove Lemma~\ref{lemma:DL_imply_LoI}.

\begin{globallemma}\label{lemma:app:ordering_loi_determinacy_lattice}
	For sets of queries $Q_1 ,Q_2 \in DL(\queries)$, the ordering $\cl{Q_1} \sqsubseteq \cl{Q_2}$ on the DL implies $\eq{Q_1} \sqsubseteq \eq{Q_2}$ on the LoI defined on $\{\eq{Q} \mid Q \in DL(\queries)\}$:
	\begin{align*}
		\cl{Q_1} \sqsubseteq \cl{Q_2} \rightarrow \eq{Q_1} \sqsubseteq \eq{Q_2}  
	\end{align*}
\end{globallemma}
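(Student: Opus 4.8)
The plan is to prove the statement by unfolding the relevant definitions and observing that they coincide. Starting from the hypothesis $\cl{Q_1} \sqsubseteq \cl{Q_2}$, the DL ordering (Def.~\ref{def:determinacy_lattice}) gives immediately that $Q_1 \preceq Q_2$, and the definition of the determinacy order rewrites this as $Q_2 \twoheadrightarrow Q_1$. Expanding query determinacy (Def.~\ref{def:query_determinacy}), this means that for all $db_1, db_2 \in \Omega_D$, if $\eval{q}{db_1} = \eval{q}{db_2}$ for every $q \in Q_2$, then $\eval{q'}{db_1} = \eval{q'}{db_2}$ for every $q' \in Q_1$. The goal is to show that this is precisely the LoI refinement condition $\eq{Q_1} \sqsubseteq \eq{Q_2}$.

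The key step is to recognize that the induced equivalence relation $\eq{Q}$ relates $db_1$ and $db_2$ exactly when all queries in $Q$ agree on them, i.e. $db_1 \ \eq{Q} \ db_2$ iff $\eval{q}{db_1} = \eval{q}{db_2}$ for all $q \in Q$, by the lifting of $\eq{\cdot}$ to query sets. Substituting this characterization into the LoI ordering, $\eq{Q_1} \sqsubseteq \eq{Q_2}$ unfolds (per the LoI definition in Section~\ref{sec:LoI}) to the statement that for all $db_1, db_2$, whenever $db_1 \ \eq{Q_2} \ db_2$ then $db_1 \ \eq{Q_1} \ db_2$ — which is word-for-word the determinacy condition $Q_2 \twoheadrightarrow Q_1$ obtained above. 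The implication therefore follows, and I would simply chain these equivalences together.

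There is no substantial obstacle here: the entire argument is definitional, and in fact yields an ``iff'' even though only the forward direction is needed. The only point requiring care is bookkeeping of the ordering directions and quantifiers — specifically that the LoI order $\eq{Q_1} \sqsubseteq \eq{Q_2}$ means $\eq{Q_2}$ is the \emph{more refined} (more informative) relation, so that indistinguishability under $\eq{Q_2}$ forces indistinguishability under $\eq{Q_1}$, matching the direction in which $Q_2$ determines $Q_1$. Making this correspondence explicit is the substance of the proof.
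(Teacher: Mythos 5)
Your proposal is correct and follows essentially the same route as the paper's proof: both arguments are purely definitional, unfolding the DL ordering to the determinacy condition $Q_2 \twoheadrightarrow Q_1$ and the LoI ordering of $\eq{Q_1}, \eq{Q_2}$ to the very same quantified statement about query results agreeing on pairs of database states. If anything, your explicit care with the direction of refinement (that $\eq{Q_1} \sqsubseteq \eq{Q_2}$ means indistinguishability under $\eq{Q_2}$ forces indistinguishability under $\eq{Q_1}$) is tidier than the paper's intermediate steps, which state the unfolded implication with the relata swapped relative to the paper's own LoI definition before arriving at the same final conclusion.
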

\begin{proof}
	The definition of the ordering relation of the LoI (Section~\ref{sec:preliminaries}) and $\eq{Q_1} \sqsubseteq \eq{Q_2}$ would give us:
	\begin{align*}
		\eq{Q_1} \sqsubseteq ~&\eq{Q_2} \rightarrow \\
		& \forall db,db' \in \Omega_D \ \ (db \ \eq{Q_1} \ db' \Rightarrow db \ \eq{Q_2} \ db') \tag{1}
	\end{align*}
	
	By the definition of equivalence relations for query sets ($\eq{Q}$), for all $db,db' \in \Omega_D $ we have:
	\begin{align*}
		&(db \ \eq{Q_1} \ db' \Rightarrow db \ \eq{Q_2} \ db') \rightarrow \\
		&\Big( (\eval{q_2}{db} = \eval{q_2}{db'} \forall q_2 \in Q_2) \Rightarrow (\eval{q_1}{db} = \eval{q_1}{db'} \forall q_1 \in Q_1) \Big)\tag{2}
	\end{align*}
	(1) and (2) would give us:
	\begin{align*}
		&\eq{Q_1} \sqsubseteq \eq{Q_2} \rightarrow \forall db,db' \in \Omega_D \\ 
		&\Big( (\eval{q_2}{db} = \eval{q_2}{db'} \forall q_2 \in Q_2) \Rightarrow (\eval{q_1}{db} = \eval{q_1}{db'} \forall q_1 \in Q_1) \Big)\tag{3}
	\end{align*}
	
	On the other hand, by the definition of the Determinacy Lattice~\ref{def:determinacy_lattice}, we have $\cl{Q_1} \sqsubseteq \cl{Q_2} \leftrightarrow Q_1 \preceq Q_2$. From the definition of determinacy ordering, $Q_1 \preceq Q_2$ means $Q_2 \twoheadrightarrow Q_1$. 
	By the definition of query determinacy (Def.~\ref{def:query_determinacy}) we know that $Q_2 \twoheadrightarrow Q_1$ if:
	\begin{align*}
		&\forall db,db' \in \Omega_D \\
		&\Big( (\eval{q_2}{db} = \eval{q_2}{db'} \forall q_2 \in Q_2) \Rightarrow (\eval{q_1}{db} = \eval{q_1}{db'} \forall q_1 \in Q_1) \Big) \tag{4}
	\end{align*}
	
	It is evident from (3) and (4) that $\cl{Q_1} \sqsubseteq \cl{Q_2} \rightarrow \eq{Q_1} \sqsubseteq \eq{Q_2}$ holds.
\end{proof}

Relying on Def.~\ref{def:set_query_to_set_eq} to establish the set of equivalence relations derived from a set of sets of queries, we propose following lemma:

\begin{globallemma}\label{lemma:app:join_loi_determinacy_lattice}
	For any set of sets of queries $\Q \subseteq DL(\queries)$, the join of $\Q$ on the DL implies the join of $\EQ{\Q}$ on the LoI defined on $\{\eq{Q} \mid Q \in DL(\queries)\}$:
	\begin{align*}
		\bigsqcup \Q \rightarrow \bigsqcup \EQ{\Q}
	\end{align*}
\end{globallemma}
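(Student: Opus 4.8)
The statement asserts that the map $\eq{\cdot}$ carries the DL join of $\Q$ to the LoI join of $\EQ{\Q}$; concretely, the plan is to prove the equality
\begin{align*}
	\eq{\bigsqcup \Q} = \bigsqcup \EQ{\Q},
\end{align*}
which, together with the monotonicity already established in Lemma~\ref{lemma:app:ordering_loi_determinacy_lattice}, is exactly what upgrades $\eq{\cdot}$ to a join-preserving map and hence a complete lattice homomorphism. I would prove this equality directly by unfolding both joins into membership conditions on pairs of database states, after first isolating one auxiliary fact about the closure operator.

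The key preliminary step is to observe that the determinacy closure does not change the induced equivalence relation, i.e. $\eq{\cl{Q}} = \eq{Q}$ for every $Q \subseteq \queries$. The inclusion $\eq{\cl{Q}} \subseteq \eq{Q}$ (as sets of pairs) is immediate from $Q \subseteq \cl{Q}$, since agreement on a larger query set is a stronger requirement. For the reverse, I would take any pair $db, db' \in \Omega_D$ with $db \eq{Q} db'$ and any $q \in \cl{Q}$; by the definition of $\cl{Q}$ we have $\{q\} \preceq Q$, that is $Q \twoheadrightarrow \{q\}$, so by Def.~\ref{def:query_determinacy} agreement of $db, db'$ on all of $Q$ forces $\eval{q}{db} = \eval{q}{db'}$. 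Hence $db \eq{\cl{Q}} db'$, giving $\eq{Q} \subseteq \eq{\cl{Q}}$ and therefore equality. This is the one place where query determinacy genuinely enters, and it is the only mildly nontrivial step.

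With this in hand the computation is routine. By Def.~\ref{def:determinacy_lattice} the DL join is $\bigsqcup \Q = \cl{\bigcup_{Q \in \Q} Q}$, so the closure fact gives $\eq{\bigsqcup \Q} = \eq{\bigcup_{Q \in \Q} Q}$. Unfolding the definition of $\eq{\cdot}$ on this union, a pair $db, db' \in \Omega_D$ is related iff $\eval{q}{db} = \eval{q}{db'}$ for every $q$ lying in some member of $\Q$, which is the same as saying $db \eq{Q} db'$ for every $Q \in \Q$. On the other side, the LoI join of a family is its pointwise intersection (the least upper bound being the most refined relation below all of them), so by the LoI join definition of Section~\ref{sec:LoI} the pair is $\bigsqcup \EQ{\Q}$-related iff it is $\eq{Q}$-related for every $Q \in \Q$. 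The two membership conditions coincide, establishing $\eq{\bigsqcup \Q} = \bigsqcup \EQ{\Q}$.

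I expect no genuine obstacle here: once the closure-invariance fact is in place, both joins reduce to the identical quantifier over the members of $\Q$ and the queries they contain. The only point to state carefully is that the LoI in question is taken on the carrier $\{\eq{Q} \mid Q \in DL(\queries)\}$, which the equality itself shows to be closed under joins (since $\bigsqcup \Q$ is again an element of $DL(\queries)$), so the join used on the right-hand side is indeed the ambient LoI join rather than a relativized one.
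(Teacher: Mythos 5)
Your proposal is correct and follows essentially the same route as the paper's proof: your closure-invariance fact $\eq{\cl{Q}} = \eq{Q}$ is precisely what the paper establishes by showing mutual determinacy $\cl{\bigcup \Q} \twoheadrightarrow \bigcup \Q$ and $\bigcup \Q \twoheadrightarrow \cl{\bigcup \Q}$, after which both arguments unfold the DL join and the LoI join into the identical condition of agreement on every query in $\bigcup \Q$. Your write-up is somewhat more careful than the paper's (isolating the closure fact as a lemma, and noting that the carrier $\{\eq{Q} \mid Q \in DL(\queries)\}$ is closed under the ambient LoI join), but the mathematical content is the same.
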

\begin{proof}
	Assume there is a set of queries $R \in DL(\queries)$ such that $R = \bigsqcup \Q$.
	
	By the definition of the Determinacy Lattice~\ref{sec:determinacy_lattice}, we have $\bigsqcup \Q = \cl{\bigcup \Q}$ which would give us $R = \cl{\bigcup \Q}$.
	By the definitions of $\clEmpty$ and query determinacy(Def.~\ref{def:query_determinacy}), it is straightforward to see $(\bigcup \Q) \twoheadrightarrow \cl{\bigcup \Q}$ and $\cl{\bigcup \Q} \twoheadrightarrow (\bigcup \Q)$.
	Replacing $\cl{\bigcup \Q}$ with $R$, by the definition of query determinacy (Def.~\ref{def:query_determinacy}) we have $R \twoheadrightarrow (\bigcup \Q)$:
	\begin{align*}
		&\forall db,db' \in \Omega_D \\
		&\Big( \forall r \in R. \ \eval{r}{db} = \eval{r}{db'} \rightarrow \forall p \in \bigcup \Q. \ \eval{p}{db} = \eval{p}{db'} \Big) \tag{1}
	\end{align*}
	\noindent and $(\bigcup \Q) \twoheadrightarrow R$:
	\begin{align*}
		&\forall db,db' \in \Omega_D \\
		&\Big(\forall p \in \bigcup \Q. \ \eval{p}{db} = \eval{p}{db'} \rightarrow \forall r \in R. \ \eval{r}{db} = \eval{r}{db'} \Big) \tag{2}
	\end{align*}
	(1) and (2) would give us:
	\begin{align*}
		&\forall db,db' \in \Omega_D \\
		&(\forall r \in R. \ \eval{r}{db} = \eval{r}{db'} \leftrightarrow \forall p \in \bigcup \Q. \ \eval{p}{db} = \eval{p}{db'}) \tag{3}
	\end{align*}

	Assume $\bigsqcup \EQ{\Q}$ is an equivalence relation $R'$. By the definition of the join of the LoI (Section~\ref{sec:preliminaries}):
	\begin{align*}
		\bigsqcup \EQ{\Q} = \forall db,db' \in \Omega_D \ (db \ \eq{R'} \ db' \leftrightarrow \forall Q \in \Q. \ db \ \eq{Q} \ db')
	\end{align*}
	and by the definition of equivalence relations for query sets, for all $db,db' \in \Omega_D $ we have:
	\begin{align*}
		&\bigsqcup \EQ{\Q} = \\
		&(\forall r \in R'. \ \eval{r}{db} = \eval{r}{db'} \leftrightarrow \forall Q \in \Q. \ \forall q \in Q. \ \eval{q}{db} = \eval{q}{db'}) = \\
		&(\forall r \in R'. \ \eval{r}{db} = \eval{r}{db'} \leftrightarrow \forall p \in \bigcup \Q. \ \eval{p}{db} = \eval{p}{db'}) \tag{4}
	\end{align*}
	
	(3) and (4) would allow us to conclude $R = R'$, hence $\bigsqcup \Q \rightarrow \bigsqcup \EQ{\Q}$.
\end{proof}

\determinacyLoILattice*
\begin{proof}
	To prove this homomorphism, we need to show that the Determinacy Lattice's ordering and join, as well as the top and bottom elements imply their LoI counterparts. Lemmas~\ref{lemma:app:ordering_loi_determinacy_lattice} and \ref{lemma:app:join_loi_determinacy_lattice} provide the proofs of ordering and join. 
	The proof for top and bottom elements: 
	\begin{itemize}
		\item $\cl{\queries} \rightarrow \eq{(\cl{\queries})}$
		\item $\cl{\varnothing} \rightarrow \eq{(\cl{\varnothing})}$
	\end{itemize}
	follows trivially from the definition of $\clEmpty$ and $\eq{}$.
\end{proof}

\section{Determinacy Quantale Axioms}\label{sec:app:proof:determinacyQuantale}
We follow the approach of \cite{hunt2021quantale} to prove that our definition of the Determinacy Quantale is indeed a quantale. We begin by defining what is a quantale.
\begin{definition}\label{def:quantale_axioms}
	A quantale is a structure $\tuple{\mathcal{L}, \sqsubseteq, \vee, \otimes, 1}$ such that:
	\begin{enumerate}
		\item $\tuple{\mathcal{L}, \sqsubseteq, \vee}$ is a complete join-semilattice
		\item $\tuple{\mathcal{L}, \otimes, 1}$ is monoid, that is $\otimes$ is associative and $\forall x \in \mathcal{L}, x \otimes 1 = x = 1 \otimes x$
		\item $\otimes$ distributes over $\vee$.
	\end{enumerate}
\end{definition}
\noindent A quantale is called commutative when its $\otimes$ operator is commutative~\cite{hunt2021quantale}.

Next, we prove some lemmas that are later used in the proof of Theorem~\ref{thrm:determinacy_quantale_proof}.
\begin{globallemma}\label{lemma:app:tc_is_closure}
	Both $\mix$ and $\tc$ are closure operators.
\end{globallemma}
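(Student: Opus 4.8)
The plan is to verify the three defining properties of a closure operator --- extensivity, monotonicity, and idempotence --- for each of $\mix$ and $\tc$, viewed as operators on the powerset of $DL(\queries)$ ordered by inclusion (which, per Def.~\ref{def:determinacy_quantale}, is the quantale order $\sqsubseteq$).

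For $\mix$ all three properties are almost immediate from the definition. Extensivity $\Q \subseteq \mix(\Q)$ holds because, for any $Q \in \Q$, every cell $x \in [\eq{Q}]$ is witnessed as a cell of $Q \in \Q$ itself. Monotonicity follows since enlarging $\Q$ only adds disjuncts that may witness cells. For idempotence, the nontrivial inclusion $\mix(\mix(\Q)) \subseteq \mix(\Q)$ uses that a cell of some $P \in \mix(\mix(\Q))$ is a cell of some $P' \in \mix(\Q)$, which in turn is a cell of some $Q \in \Q$; because the witness is literally a cell, and not merely a union of cells, at each stage, it transfers directly.

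For $\tc = {\Downarrow}\circ\mix$, extensivity and monotonicity are inherited from those of ${\Downarrow}$ and $\mix$ under composition. The only real content is idempotence. I would obtain it from the standard fact that a composite $c_2 \circ c_1$ of closure operators is again a closure operator as soon as the two commute in the weak sense $c_1 \circ c_2 \sqsubseteq c_2 \circ c_1$; instantiated here, it suffices to prove the single inclusion $\mix({\Downarrow}\S) \subseteq {\Downarrow}\mix(\S)$ for every $\S \subseteq DL(\queries)$, after first checking that ${\Downarrow}$ is itself a closure operator (routine for a downward closure under the preorder $\sqsubseteq$, using reflexivity for extensivity and transitivity for idempotence).

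The inclusion $\mix({\Downarrow}\S) \subseteq {\Downarrow}\mix(\S)$ is the crux and the place I expect the main difficulty. Given $P \in \mix({\Downarrow}\S)$, every cell $c$ of $P$ is a cell of some $R \sqsubseteq S$ with $S \in \S$, hence a union of cells of $S$; refining $P$ by replacing each such $c$ with the $S$-cells it contains yields a tiling $P^\ast$ all of whose cells are drawn directly from $\S$, with $P \sqsubseteq P^\ast$. If $P^\ast \in \mix(\S)$, then $P \in {\Downarrow}\mix(\S)$ and we are done. The subtlety --- absent in the LoI/QoI development of Hunt and Sands, where every partition is a legal element --- is that $P^\ast$ must be realizable as $\eq{Q^\ast}$ for some query set $Q^\ast \subseteq \queries$, i.e. it must genuinely lie in $DL(\queries)$. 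The argument I would give is that, because $\queries$ is the full domain relational calculus (closed under $\vee$ and $\neg$), the state-dependent, case-split partition $P^\ast$ can be encoded by a query set, exactly as the program of Fig.~\ref{fig:mix_example} realizes a tiling of two incomparable relations. Making this realizability precise is the only step specific to the determinacy setting rather than a transcription of the quantale-of-information proof.
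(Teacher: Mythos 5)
Your overall strategy is exactly what the paper intends: the paper's own proof of this lemma consists only of restating the three closure-operator axioms (extensivity, monotonicity, idempotence) and then asserting that ``it is straightforward to show'' that $\mix$ and $\tc$ satisfy them, so your write-up supplies the work the paper omits. Your verification for $\mix$ is complete and correct (the key observation that witnesses are literally cells, so they transfer through two applications of $\mix$, is precisely why idempotence is easy there), and your reduction of the $\tc$ case to (i) ${\Downarrow}$ being a closure operator and (ii) the weak-commutation inclusion $\mix({\Downarrow}\S) \subseteq {\Downarrow}\mix(\S)$ is a clean and valid way to organize the only nontrivial part.

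The realizability step you flag, however, is a genuine gap, and your proposed resolution does not close it. Closure of DRC under $\vee$ and $\neg$ is not by itself enough: to encode the refined partition $P^\ast$ as $\eq{Q^\ast}$ for an actual query set, each conditional query must express ``the database lies in cell $c$ of $P$'' inside its defining formula, and a cell of an arbitrary element of $DL(\queries)$ is cut out by the simultaneous values of a determinacy-closed, generally infinite and not finitely generated, set of queries --- an infinite conjunction of conditions of the form ``$p$ evaluates to exactly $v_p$,'' each of which additionally needs domain constants to state. The disjunctive encoding you appeal to (which appears not in Fig.~\ref{fig:mix_example}, an LoI example, but in the DQ example with the queries $p1$ and $p2$) works there only because the case split is on a first-order-definable, finite-domain condition ($r = \mathrm{Intern}$ vs.\ $r = \mathrm{CEO}$); nothing in the paper shows this generalizes to arbitrary elements of $DL(\queries)$. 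To be clear, this is a defect you share with the paper rather than one it avoids: the paper's one-line proof silently assumes that every tiling-of-coarsenings is dominated by a \emph{realizable} tiling, the very point where the Hunt--Sands argument (where every equivalence relation is a legal LoI element) does not transcribe, and the same unaddressed issue recurs in Lemma~\ref{lemma:app:cup_week_commute_tc}, which with $\P = \Q$ is exactly idempotence of $\tc$ and whose proof is likewise declared to ``follow easily.'' So your proposal is the more honest document: it is correct up to an explicitly named assumption, whereas the paper asserts routineness where an assumption (cell definability, or some restriction on the admissible elements of $DL(\queries)$) is actually needed.
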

\begin{proof}
	A closure operator is a function $f :\powerset{A} \rightarrow \powerset{A}$ from the power set of domain $A$ to itself that satisfies the following properties for all sets $X,Y \subseteq A$:
	\begin{itemize}
		\item $ f $ is extensive: $X\subseteq \operatorname f(X)$
		\item $ f $ is increasing: $X\subseteq Y\Rightarrow  f(X)\subseteq f(Y)$
		\item $ f $ is idempotent: $f(f(X))= f(X)$
	\end{itemize}
	It is straightforward to show that both $\mix$ and $\tc$ satisfy these conditions.
\end{proof}

\begin{definition}
	For a closure operator $\clEmpty$ defined on the domain $A$, and a function $F : A \rightarrow A$, say that $F$ weakly commutes with $\clEmpty$ if $F(cl(X)) \subseteq cl(F(X))$ for all $X \subseteq A$.
\end{definition}

\begin{globallemma}\label{lemma:app:function_weak_commute}
	Let $\clEmpty : A \rightarrow A$ be a closure operator and let $X, Y \subseteq A$. Suppose that $F : A \rightarrow A$ weakly commutes with $\clEmpty$ and that $G : A \times A \rightarrow A$ weakly commutes with $\clEmpty$ in each argument. Then:
	\begin{enumerate}
		\item $\clEmpty(F (\clEmpty(X))) = \clEmpty(F (X))$
		\item $\clEmpty(G(\clEmpty(X) \times \clEmpty(Y ))) = \clEmpty(G(X \times Y ))$
	\end{enumerate}
\end{globallemma}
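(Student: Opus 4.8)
The plan is to prove each of the two equalities by establishing the two inclusions separately, relying only on the three defining properties of a closure operator recorded in the proof of Lemma~\ref{lemma:app:tc_is_closure}: extensivity ($X \subseteq \clEmpty(X)$), monotonicity ($X \subseteq Y \Rightarrow \clEmpty(X) \subseteq \clEmpty(Y)$), and idempotence ($\clEmpty(\clEmpty(X)) = \clEmpty(X)$). I read $F(X)$ as the direct image $\{F(x) \mid x \in X\}$ and $G(X \times Y)$ as $\{G(x,y) \mid x \in X, y \in Y\}$; both of these set-level liftings are monotone with respect to $\subseteq$, a fact I use freely alongside the closure properties.

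For part (1), the inclusion $\clEmpty(F(X)) \subseteq \clEmpty(F(\clEmpty(X)))$ is the easy direction: extensivity gives $X \subseteq \clEmpty(X)$, monotonicity of the image gives $F(X) \subseteq F(\clEmpty(X))$, and one more application of monotonicity of $\clEmpty$ finishes it. The reverse inclusion is where the hypothesis is spent: weak commutation gives $F(\clEmpty(X)) \subseteq \clEmpty(F(X))$, so applying $\clEmpty$ to both sides and using idempotence yields
\[
	\clEmpty(F(\clEmpty(X))) \subseteq \clEmpty(\clEmpty(F(X))) = \clEmpty(F(X)).
\]
Antisymmetry of $\subseteq$ then gives the claimed equality.

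For part (2), the inclusion $\clEmpty(G(X \times Y)) \subseteq \clEmpty(G(\clEmpty(X) \times \clEmpty(Y)))$ again follows from extensivity in both coordinates, monotonicity of the image, and monotonicity of $\clEmpty$. The key step is the reverse inclusion, which I obtain by peeling off one argument at a time. Weak commutation in the first argument (with the second held fixed at $\clEmpty(Y)$) gives $G(\clEmpty(X) \times \clEmpty(Y)) \subseteq \clEmpty(G(X \times \clEmpty(Y)))$; weak commutation in the second argument (with the first held fixed at $X$) gives $G(X \times \clEmpty(Y)) \subseteq \clEmpty(G(X \times Y))$, whence, after applying $\clEmpty$ and using idempotence, $\clEmpty(G(X \times \clEmpty(Y))) \subseteq \clEmpty(G(X \times Y))$. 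Chaining these and applying $\clEmpty$ once more with idempotence yields $\clEmpty(G(\clEmpty(X) \times \clEmpty(Y))) \subseteq \clEmpty(G(X \times Y))$, which together with the easy inclusion gives the equality.

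I do not expect a genuine obstacle here; the content is routine closure-operator bookkeeping. The only point requiring care is the ordering of the two applications of weak commutation in part (2): one must fix the already-closed argument while discharging the other, and then invoke idempotence to collapse the nested closures. It is also worth flagging that the easy inclusions rely on the set-level liftings $F$ and $G$ being monotone, which holds because they are direct images; were $F$ or $G$ understood as arbitrary set operators, this monotonicity would have to be added as a standing assumption.
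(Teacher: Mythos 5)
Your proof is correct, and it is precisely the argument the paper has in mind: the paper's own proof of this lemma is the single line ``Routine, following the properties of closure operator,'' and your two-inclusion bookkeeping (extensivity and image monotonicity for the easy direction; weak commutation plus idempotence for the hard direction, peeling one argument at a time in part (2)) is the standard way to discharge it. Your closing remark that the set-level liftings of $F$ and $G$ must be monotone is a fair observation, and it holds in the paper's setting since the operators there ($\mathrm{mix}$-based joins and unions) are indeed monotone in this sense.
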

\begin{proof}
	Routine, following the properties of closure operator.
\end{proof}

\begin{globallemma}\label{lemma:app:cup_week_commute_tc}
	Let $\P, \Q \subseteq DL(\queries)$, the union operator $\cup$ weakly commutes with $\tc$:
	\begin{align*}
		\tc(\tc(\P) \cup \tc(\Q)) = \tc(\P \cup \Q)
	\end{align*}
\end{globallemma}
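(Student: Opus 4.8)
The plan is to recognize this identity as a direct instance of Lemma~\ref{lemma:app:function_weak_commute}(2), applied to the closure operator $\tc$ (which is a closure operator by Lemma~\ref{lemma:app:tc_is_closure}) and the binary function $G = \cup$, taking $X = \P$ and $Y = \Q$. The only hypothesis of that lemma that must be discharged is that $\cup$ weakly commutes with $\tc$ in each argument; once this is established, the lemma yields $\tc(\tc(\P) \cup \tc(\Q)) = \tc(\P \cup \Q)$ immediately.

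To verify the weak-commutation hypothesis for the first argument, I would show $\tc(\P) \cup \Q \subseteq \tc(\P \cup \Q)$. This follows purely from the closure-operator axioms: monotonicity of $\tc$ applied to $\P \subseteq \P \cup \Q$ gives $\tc(\P) \subseteq \tc(\P \cup \Q)$, while extensivity gives $\Q \subseteq \P \cup \Q \subseteq \tc(\P \cup \Q)$; taking the union of these two inclusions yields the claim. The second argument is symmetric, giving $\P \cup \tc(\Q) \subseteq \tc(\P \cup \Q)$.

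Alternatively, and perhaps more transparently, the identity can be proved directly by double inclusion using only that $\tc$ is extensive, monotone, and idempotent. For the $\supseteq$ direction, extensivity gives $\P \cup \Q \subseteq \tc(\P) \cup \tc(\Q)$, and applying the monotone $\tc$ yields $\tc(\P \cup \Q) \subseteq \tc(\tc(\P) \cup \tc(\Q))$. For the $\subseteq$ direction, monotonicity gives both $\tc(\P) \subseteq \tc(\P \cup \Q)$ and $\tc(\Q) \subseteq \tc(\P \cup \Q)$, hence $\tc(\P) \cup \tc(\Q) \subseteq \tc(\P \cup \Q)$; applying $\tc$ and then idempotency gives $\tc(\tc(\P) \cup \tc(\Q)) \subseteq \tc(\tc(\P \cup \Q)) = \tc(\P \cup \Q)$.

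There is essentially no serious obstacle here: the result is a formal consequence of $\tc$ being a closure operator, and the substantive content is already packaged in Lemma~\ref{lemma:app:function_weak_commute}. The one point requiring a moment's care is confirming that the generic binary function $\cup$ satisfies the ``weakly commutes in each argument'' precondition of that lemma, but as shown above this reduces to a one-line application of monotonicity together with extensivity. For self-containedness I would likely present the direct double-inclusion as the actual proof, since it avoids invoking even the machinery of Lemma~\ref{lemma:app:function_weak_commute} and exposes exactly which closure properties are used.
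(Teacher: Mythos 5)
Your proof is correct, and it takes a cleaner route than the paper's own. The paper's proof is a single line: it reduces the identity to showing $\R \in \tc(\tc(\P) \cup \tc(\Q))$ iff $\R \in \tc(\P \cup \Q)$ and asserts that this ``follows easily from the definitions of $\cup$ and $\tc$'' --- i.e., it gestures at unfolding the concrete definition $\tc(\cdot) = {\Downarrow}\mix(\cdot)$. You never open up $\tc$ at all: both of your variants use only the abstract closure-operator axioms (extensivity, monotonicity, idempotence) supplied by Lemma~\ref{lemma:app:tc_is_closure}, either routed through the paper's generic Lemma~\ref{lemma:app:function_weak_commute}(2) --- after correctly discharging its weak-commutation hypothesis for $\cup$ --- or as a direct double inclusion. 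It is worth noting that the paper states Lemma~\ref{lemma:app:function_weak_commute} but never actually invokes it in this proof, so your first variant is arguably the argument that the appendix's machinery was set up to deliver. What your route buys is generality and checkability: the identity holds for any closure operator on any powerset, so nothing about determinacy, tilings, or downward closure is relevant, and each step is a one-line consequence of an axiom; the paper's route, expanded honestly, would drag in the structure of $\mix$ only to discard it. Your double-inclusion version is the right one to present: $\supseteq$ is extensivity of $\tc$ followed by monotonicity, and $\subseteq$ is monotonicity (giving $\tc(\P) \cup \tc(\Q) \subseteq \tc(\P \cup \Q)$) followed by idempotence, exactly as you wrote.
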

\begin{proof}
	It suffices to show $\R \in \tc(\tc(\P) \cup \tc(\Q))$ iff $\R \in \tc(\P \cup \Q)$, which follows easily from the definitions of $\cup$ and $\tc$.
\end{proof}

\begin{globallemma}\label{lemma:app:DL_join_week_commute_tc}
	The join operator of DL weakly commutes with $\tc$ in each argument
\end{globallemma}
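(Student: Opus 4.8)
The plan is to verify the two inclusions required by weak commutation directly from the definition of $\mix$ as a tiling operator. For $\P, \Q \subseteq DL(\queries)$ write $J(\P,\Q) = \{ P \sqcup Q \mid P \in \P,\, Q \in \Q\}$ for the DL join lifted to sets of sets (this is the set appearing inside $\tc$ in the tensor). Since the DL join $\sqcup$ is commutative, $J(\P,\Q) = J(\Q,\P)$, so it suffices to establish weak commutation in the first argument, namely $J(\tc(\P), \Q) \subseteq \tc(J(\P,\Q))$; the second inclusion $J(\P, \tc(\Q)) \subseteq \tc(J(\P,\Q))$ then follows by symmetry.

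First I would fix an element $R = P \sqcup Q$ of $J(\tc(\P), \Q)$, with $P \in \tc(\P)$ and $Q \in \Q$, and aim to show $R \in \tc(J(\P,\Q)) = {\Downarrow}\mix(J(\P,\Q))$. Unfolding $\tc(\P) = {\Downarrow}\mix(\P)$ gives a tiling $M \in \mix(\P)$ with $P \sqsubseteq M$ in the DL. As a lattice join $\sqcup$ is monotone, so $R = P \sqcup Q \sqsubseteq M \sqcup Q$; because the target is downward closed, it then suffices to prove the single membership $M \sqcup Q \in \mix(J(\P,\Q))$.

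The core is a cell-factoring argument. By the homomorphism of Lemma~\ref{lemma:DL_imply_LoI}, the DL join corresponds to the LoI join of the induced equivalence relations, and since $\eq{\cl{S}} = \eq{S}$ we have $\eq{M \sqcup Q} = \eq{M} \sqcup \eq{Q}$, whose cells are exactly the nonempty intersections of an $\eq{M}$-cell with an $\eq{Q}$-cell. Hence any $c \in [\eq{M \sqcup Q}]$ factors as $c = c_M \cap c_Q$ with $c_M \in [\eq{M}]$ and $c_Q \in [\eq{Q}]$. Since $M$ tiles $\P$, the cell $c_M$ is already a cell of some $P_i \in \P$, i.e.\ $c_M \in [\eq{P_i}]$; therefore $c = c_M \cap c_Q$ is a nonempty intersection of a $\eq{P_i}$-cell and a $\eq{Q}$-cell and so is a cell of $\eq{P_i \sqcup Q}$, where $P_i \sqcup Q \in J(\P,\Q)$ as $P_i \in \P$ and $Q \in \Q$. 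Thus every cell of $M \sqcup Q$ lies in some element of $J(\P,\Q)$, which is precisely $M \sqcup Q \in \mix(J(\P,\Q))$; combined with $R \sqsubseteq M \sqcup Q$ this yields $R \in {\Downarrow}\mix(J(\P,\Q)) = \tc(J(\P,\Q))$.

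I expect the cell-factoring step to be the main obstacle, as it is the only place where the interaction between the tiling condition (phrased over the classes $[\eq{\cdot}]$) and the DL join must be handled carefully. The key facts to pin down are that closing a query set under determinacy leaves its induced equivalence relation unchanged and that the DL join is sent to the LoI join by Lemma~\ref{lemma:DL_imply_LoI}; once these are recorded, the identification of the cells of $\eq{M \sqcup Q}$ as intersections and the verification of the tiling condition are routine.
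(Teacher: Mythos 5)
Your proof is correct and follows essentially the same route as the paper's: the core of both is the cell-factoring fact $[\eq{P} \sqcup \eq{Q}] = \{ A \cap B \mid A \in [\eq{P}], B \in [\eq{Q}] \} \setminus \varnothing$ together with Lemma~\ref{lemma:DL_imply_LoI}/\ref{lemma:app:join_loi_determinacy_lattice} to transfer a tiling of one argument into a tiling by the pairwise joins. The only difference is that you spell out the parts the paper dismisses as routine, in particular the descent through the downward closure via monotonicity of $\sqcup$, which is a welcome completion rather than a new idea.
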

\begin{proof}
	Let $P, Q \in DL(\queries)$, and let $\S \subseteq DL(\queries)$. If $\eq{Q}$ is tiled by $\EQ{\S}$ then $\eq{P} \sqcup \eq{Q}$ is tiled by $\{ \eq{P} \sqcup R \mid R \in \EQ{\S} \}$. This follows easily from the definition of the equivalence relation induced by a query (\ie $\eq{}$), $\mix$, Lemma~\ref{lemma:app:join_loi_determinacy_lattice} and the fact that $[\eq{P} \sqcup \eq{Q}] = \{ A \cap B \mid A \in [\eq{P}], B \in [\eq{Q}] \} \setminus \varnothing$.
\end{proof}

\begin{globallemma}\label{lemma:app:tensor_week_commute_tc}
	Given two sets of sets of queries $\Q, \P \subseteq DL(\queries)$ it holds that:
	\begin{align*}
		\tc(\Q) \otimes \tc(\P) = \Q \otimes \P
	\end{align*}
\end{globallemma}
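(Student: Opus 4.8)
The plan is to recognize the claimed identity as a direct instance of the general closure-operator fact already recorded in Lemma~\ref{lemma:app:function_weak_commute}(2). The key point is that the tensor of the DQ is, by Definition~\ref{def:determinacy_quantale}, nothing but the tiling closure $\tc$ applied to the set of all pairwise DL-joins of its two arguments; so an equation that pushes $\tc$ through both operands of $\otimes$ is exactly the ``weak commutation through a binary operation'' situation that Lemma~\ref{lemma:app:function_weak_commute} abstracts. First I would fix the base set to be $A = DL(\queries)$, so that a set of sets of queries $\P \subseteq DL(\queries)$ is precisely a subset of $A$, and $\tc : \powerset{A} \to \powerset{A}$ is a map on subsets of $A$.

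With this identification, I would introduce the pointwise DL-join $G(\P,\Q) = \{ P \sqcup Q \mid P \in \P,\ Q \in \Q \}$, so that Definition~\ref{def:determinacy_quantale} reads $\P \otimes \Q = \tc(G(\P,\Q))$. The two sides of the lemma then unfold to
\begin{align*}
	\tc(\Q) \otimes \tc(\P) &= \tc\big(G(\tc(\Q), \tc(\P))\big), & \Q \otimes \P &= \tc\big(G(\Q,\P)\big),
\end{align*}
which is exactly the shape $\clEmpty(G(\clEmpty(X) \times \clEmpty(Y))) = \clEmpty(G(X \times Y))$ of Lemma~\ref{lemma:app:function_weak_commute}(2), instantiated with $\clEmpty := \tc$ and $G := {\sqcup}$ lifted pointwise.

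It then remains only to discharge the hypotheses of that lemma. The closure requirement is Lemma~\ref{lemma:app:tc_is_closure}, which states that $\tc$ is a closure operator. The requirement that $G$ weakly commutes with $\tc$ in each argument is precisely Lemma~\ref{lemma:app:DL_join_week_commute_tc}, which asserts that the DL join weakly commutes with $\tc$. Feeding these two facts into Lemma~\ref{lemma:app:function_weak_commute}(2) yields the identity immediately; commutativity of $\sqcup$, and hence symmetry of $G$, absorbs the fact that the statement is written with operands in the order $\Q,\P$ rather than $\P,\Q$.

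I expect essentially no difficulty in the present lemma itself: once the definitions are unfolded it is a one-line instantiation, and the genuine content lives in the auxiliary lemmas I am invoking. In particular, the real work sits in Lemma~\ref{lemma:app:DL_join_week_commute_tc}, whose proof rests on the observation that a tiling of $\eq{Q}$ drawn from $\EQ{\S}$ can be refined tile by tile by intersecting with the cells of $\eq{P}$, so that $\eq{P} \sqcup \eq{Q}$ is tiled by $\{\eq{P} \sqcup R \mid R \in \EQ{\S}\}$. The only thing to be careful about in the present step is the bookkeeping of domains: one must read sets of sets of queries as \emph{subsets} of the DL rather than as DL elements themselves, so that the generic statement of Lemma~\ref{lemma:app:function_weak_commute} applies with $A = DL(\queries)$.
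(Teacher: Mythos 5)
Your proposal is correct and matches the paper's own proof in all essentials: the paper likewise unfolds the definition of $\otimes$, invokes Lemma~\ref{lemma:app:DL_join_week_commute_tc} for weak commutation of the DL join with $\tc$, and collapses the inner closures exactly as Lemma~\ref{lemma:app:function_weak_commute}(2) prescribes. The only difference is presentational: you cite Lemma~\ref{lemma:app:tc_is_closure} and Lemma~\ref{lemma:app:function_weak_commute}(2) explicitly, whereas the paper leaves that closure-operator scaffolding implicit in its chain of equalities.
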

\begin{proof}
	By Lemma~\ref{lemma:app:DL_join_week_commute_tc} we know that the join operator of DL weakly commutes with $\tc$ in each argument. We apply this lemma to the definition of $\otimes$ operator:
	\begin{align*}
		&\tc(\Q) \otimes \tc(\P) = \\
		&\tc( \bigcup_{Q \in \tc(\Q), P \in \tc(\P)} (Q \sqcup P) ) = \\
		&\tc( \bigcup_{Q \in \Q, P \in \P} (Q \sqcup P) ) = \\
		&\Q \otimes \P
	\end{align*}
\end{proof}

\begin{globallemma}\label{lemma:app:DL_join_commutativity}
	Given two sets of sets of queries $\Q, \P \subseteq DL(\queries)$ it holds that:
	\begin{align*}
		\Q \vee \P = \P \vee \Q
	\end{align*}
\end{globallemma}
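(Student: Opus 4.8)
The plan is to unfold the definition of the quantale join and reduce the claim to the commutativity of set union. By Def.~\ref{def:determinacy_quantale}, the join of two elements is given by $\Q \vee \P = \tc(\Q \cup \P)$, where $\cup$ is ordinary set union on subsets of $DL(\queries)$ and $\tc$ is the tiling closure. So the entire statement will follow once I observe that the argument passed to $\tc$ is symmetric in $\P$ and $\Q$.

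First I would recall $\Q \vee \P = \tc(\Q \cup \P)$ and, symmetrically, $\P \vee \Q = \tc(\P \cup \Q)$, both by the definition of $\bigvee$ in Def.~\ref{def:determinacy_quantale} specialized to two arguments. Next I would note that set union is commutative, i.e. $\Q \cup \P = \P \cup \Q$ as subsets of $DL(\queries)$. Applying $\tc$ to both sides of this equality then yields $\tc(\Q \cup \P) = \tc(\P \cup \Q)$, and substituting back the two displayed identities gives $\Q \vee \P = \P \vee \Q$, as required.

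I do not expect any genuine obstacle here: the result is purely definitional, inheriting commutativity directly from the underlying $\cup$ operator, and no properties of $\tc$ beyond its being a well-defined function on sets (established in Lemma~\ref{lemma:app:tc_is_closure}) are needed. The only point worth stating carefully is that $\vee$ is defined as $\tc$ applied to a union, so that commutativity of the join is not an additional hypothesis but an immediate consequence of the commutativity of $\cup$; this is why the argument can be carried out in a single line without appeal to the finer structure of tilings or the determinacy order.
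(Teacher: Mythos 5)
Your proposal is correct and follows essentially the same route as the paper: unfold the join as a closure applied to a set union, invoke commutativity of $\cup$, and conclude. If anything, your version is slightly more careful than the paper's, which writes the closure as $\cl{(\Q \cup \P)}$ (the DL closure) even though the statement concerns sets of sets of queries, where the join of Def.~\ref{def:determinacy_quantale} is $\tc(\Q \cup \P)$ as you correctly use.
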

\begin{proof}
	Follows directly from the definition of $\vee$ in the DL and the commutativity of union operator $\cup$.
	\begin{align*}
		&\Q \vee \P = \\
		&\cl{(\Q \cup \P)} = \\
		&\cl{(\P \cup \Q)} = \\
		&\P \vee \Q
	\end{align*}
\end{proof}

Now, we show that DQ in Def.~\ref{def:determinacy_quantale} is a quantale.
\begin{globaltheorem}\label{thrm:determinacy_quantale_proof}
	The Determinacy Quantale is a commutative quantale.
\end{globaltheorem}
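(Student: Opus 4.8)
The plan is to verify, for the structure $\langle \mathcal{I}, \sqsubseteq, \bigvee, \otimes, 1\rangle$ of Def.~\ref{def:determinacy_quantale}, the three defining conditions of a quantale from Def.~\ref{def:quantale_axioms} together with commutativity of $\otimes$, drawing throughout on the closure and weak-commutation lemmas already established. Since $\sqsubseteq$ is literally set inclusion on a family of concrete sets, antisymmetry and transitivity come for free, so the only substantive facts are that $\bigvee$, $\otimes$, and the unit behave as required.

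First I would dispatch condition (1), that $\langle \mathcal{I}, \sqsubseteq, \bigvee\rangle$ is a complete join-semilattice. This is the standard fact that the fixed points of a closure operator form a complete lattice: because $\tc$ is a closure operator (Lemma~\ref{lemma:app:tc_is_closure}), extensiveness shows $\tc(\bigcup_i \P_i)$ is an upper bound of each $\P_i$, while monotonicity together with idempotence shows it lies below any other upper bound, so $\bigvee_i \P_i = \tc(\bigcup_i \P_i)$ is genuinely the least upper bound in $\mathcal{I}$. Lemma~\ref{lemma:app:cup_week_commute_tc} guarantees that this join stays inside $\mathcal{I}$ and is computed consistently on closed arguments, and commutativity of $\bigvee$ is Lemma~\ref{lemma:app:DL_join_commutativity}.

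The heart of the proof is condition (2), that $\langle \mathcal{I}, \otimes, 1\rangle$ is a monoid, and this is where the weak-commutation lemmas do the real work. For associativity I would expand $(\P \otimes \Q) \otimes \R$ and $\P \otimes (\Q \otimes \R)$ and use Lemma~\ref{lemma:app:tensor_week_commute_tc} (equivalently, the weak commutation of the DL-join with $\tc$ from Lemma~\ref{lemma:app:DL_join_week_commute_tc}, fed through Lemma~\ref{lemma:app:function_weak_commute}) to strip the \emph{inner} tiling closures; both parenthesisations then collapse to $\tc$ of the set of all threefold joins $\{P \sqcup Q \sqcup R \mid P\in\P, Q\in\Q, R\in\R\}$, and associativity of the DL-join $\sqcup$ finishes the argument. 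For the unit law $\P \otimes 1 = \P = 1 \otimes \P$ I would use that the bottom element of the DL is neutral for $\sqcup$, so tensoring by the tiling closure of that element maps each $P \in \P$ to itself, and idempotence of $\tc$ on the already-closed $\P$ returns $\P$. Commutativity of $\otimes$ then reduces to commutativity of $\sqcup$ and of set union, exactly as in Lemma~\ref{lemma:app:DL_join_commutativity}.

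Finally, for condition (3), distributivity of $\otimes$ over $\bigvee$, I would again expand both sides and use the weak-commutation lemmas to remove the intermediate closures: Lemma~\ref{lemma:app:cup_week_commute_tc} lets me merge the two disjuncts on the right, while Lemma~\ref{lemma:app:tensor_week_commute_tc} lets me drop the closure produced by the inner join on the left, after which both sides reduce to $\tc$ of the same set of pairwise joins. I expect the main obstacle to lie in the bookkeeping of the associativity and distributivity steps: the only non-trivial content is that intermediate tiling closures may be freely inserted or deleted, and while the weak-commutation lemmas supply exactly this, one must apply them in the correct argument and confirm that no information is silently dropped when passing between the set-of-sets and the query-set levels.
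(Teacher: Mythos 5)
Your proposal follows the paper's own proof almost step for step: condition (1) from the fact that $\tc$ is a closure operator (Lemma~\ref{lemma:app:tc_is_closure}), associativity and distributivity by using the weak-commutation lemmas (Lemmas~\ref{lemma:app:DL_join_week_commute_tc}, \ref{lemma:app:tensor_week_commute_tc}, \ref{lemma:app:cup_week_commute_tc}) to strip intermediate tiling closures until both sides collapse to $\tc$ of the same set of joins, and commutativity of $\otimes$ from commutativity of $\sqcup$ and $\cup$ (Lemma~\ref{lemma:app:DL_join_commutativity}).

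The one genuine divergence is the unit law. The paper takes $1 = \varnothing$ literally, as stated in Def.~\ref{def:determinacy_quantale}, and computes $\Q \otimes \varnothing = \tc\bigl(\bigcup_{Q \in \Q} Q\bigr) = \Q$; read strictly, however, the index set of the union in the definition of $\otimes$ is then empty, so the definition yields $\Q \otimes \varnothing = \tc(\varnothing) = \varnothing$, and the paper's displayed equation only holds under the tacit convention that the missing factor contributes the neutral element of $\sqcup$. You instead take the unit to be the tiling closure of the DL bottom, $\tc(\{\cl{\varnothing}\})$, and your argument is then airtight: every $Q$ in that closure induces the trivial equivalence relation, so each of its queries is determined by the empty query set and hence by any $P$, giving $P \sqcup Q = \cl{(P \cup Q)} = P$ for every closed $P$, whence $\P \otimes 1 = \tc(\P) = \P$. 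This is the mathematically cleaner resolution, but note it establishes the monoid law for a different element than the $1 = \varnothing$ of Def.~\ref{def:determinacy_quantale}; to splice your proof in, one must either amend that definition to $1 = \tc(\{\cl{\varnothing}\})$ or state the paper's empty-join convention explicitly. Since the paper's own treatment of the unit has exactly this soft spot (resolved by fiat rather than by argument), this is a defect you have repaired rather than introduced; everything else in your proposal matches the paper's proof.
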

\begin{proof}
	We have to show that our definition of Determinacy Quantale respects the quantale axioms of Def.~\ref{def:quantale_axioms}.
	\begin{enumerate}
		\item Showing $\tuple{\mathcal{I}, \sqsubseteq, \vee}$ is a complete join-semilattice is straightforward following Lemma~\ref{lemma:app:tc_is_closure} and the fact that $\tc$ is a closure operator.
		
		\item We should show that $\otimes$ is associative and $1$ is a unit:
		\begin{enumerate}
			\item[a.] For the associativity of $\otimes$ we need to show that $\P \otimes (\Q \otimes \R) = (\P \otimes \Q) \otimes \R$. 
			Here we rely on Lemmas~\ref{lemma:app:DL_join_week_commute_tc} and~\ref{lemma:app:tensor_week_commute_tc} to eliminate the nested uses of $\tc$ and the basic properties of $\cup$ operator to show that both sides of $\P \otimes (\Q \otimes \R) = (\P \otimes \Q) \otimes \R$ can be reduced to identical expressions. \\
			Left side:
			\begin{align*}
				&\P \otimes (\Q \otimes \R) = \\
				&\P \otimes \tc( \bigcup_{Q \in \Q, R \in \R} (Q \sqcup R) ) = \\
				&\P \otimes ( \bigcup_{Q \in \Q, R \in \R} (Q \sqcup R) ) = \\
				&\tc( \bigcup_{P \in \P, T \in ( \bigcup_{Q \in \Q, R \in \R} (Q \sqcup R) )} (P \sqcup T)  )  = \\
				&\tc( \bigcup_{P \in \P, Q \in \Q, R \in \R)} (P \sqcup Q \sqcup R)  ) \tag{1}
			\end{align*}
			Right Side:
			\begin{align*}
				&(\P \otimes \Q) \otimes \R = \\
				&\tc( \bigcup_{P \in \P, Q \in \Q} (P \sqcup Q))  \otimes \R  = \\
				&(\bigcup_{P \in \P, Q \in \Q} (P \sqcup Q))  \otimes \R  = \\
				&\tc( \bigcup_{T \in ( \bigcup_{P \in \P, Q \in \Q} (P \sqcup Q) ), R \in \R} (T \sqcup R)  )  = \\
				&\tc( \bigcup_{P \in \P, Q \in \Q, R \in \R)} (P \sqcup Q \sqcup R)  ) \tag{2}
			\end{align*}
			By (1) and (2) we can conclude that $\P \otimes (\Q \otimes \R) = (\P \otimes \Q) \otimes \R$.
			
			\item[b.] To show that $1 = \varnothing$ is a unit for $\otimes$ we need to show that $\forall x \in \mathcal{I}, x \otimes 1 = x = 1 \otimes x$. Using $\varnothing$ as the unit, and applying the definition of $\otimes$ will give us:
			\begin{align*}
				\Q \otimes \varnothing = \tc( \bigcup_{Q \in \Q} (Q) ) = \tc(\Q) = \Q
			\end{align*}
			which following the associativity of $\otimes$ gives us $\forall x \in \mathcal{I}, x \otimes \varnothing = x = \varnothing \otimes x$.
		\end{enumerate}
		
		\item To establish distributivity we need to show that $\P \otimes (\Q \vee \R) = (\P \otimes \Q) \vee (\P \otimes \R)$.
		We again rely on Lemmas~\ref{lemma:app:DL_join_week_commute_tc} and~\ref{lemma:app:tensor_week_commute_tc} and basic properties of $\cup$ to show:
		\begin{align*}
			&\P \otimes (\Q \vee \R) = \\
			&\P \otimes \tc(\Q \cup \R) = \\
			&\P \otimes (\Q \cup \R) = \\
			&\tc( \bigcup_{P \in \P, T \in (\Q \cup \R)} (P \sqcup T) ) = \\
			&\tc( \bigcup_{P \in \P, Q \in \Q} (P \sqcup Q) \cup \bigcup_{P \in \P, R \in \R} (P \sqcup R)) = \\
			&\tc( (\P \otimes \Q) \cup (\P \otimes \R)) = \\
			&(\P \otimes \Q) \vee (\P \otimes \R)
		\end{align*}
		
		\item Commutativity of $\otimes$ is inherited directly from Lemma~\ref{lemma:app:DL_join_commutativity} and the commutativity of $\sqcup$ in DL.
		\begin{align*}
			&\P \otimes \Q = \\
			&\tc( \bigcup_{P \in \P, Q \in \Q} (P \sqcup Q) ) = \\
			&\tc( \bigcup_{P \in \P, Q \in \Q} (Q \sqcup P) ) = \\
			&\Q \otimes \P
		\end{align*}
	\end{enumerate}
\end{proof}

\section{Relation Between DQ and QoI}\label{sec:app:proof:DQ_imply_QoI}
We first provide some auxiliary lemmas, and then proceed to prove Lemma~\ref{lemma:DQ_imply_QoI}.

\begin{globallemma}\label{lemma:app:ordering_DQ_QoI}
	Given sets of sets of queries $\Q, \P \subseteq DL(\queries)$, $\tc(\Q) \subseteq \tc(\P)$ on the DQ implies $\EQ{\tc(\Q)} \subseteq \EQ{\tc(\P)}$ on the QoI defined on $\{\EQ{\Q} \mid \Q \subseteq DL(\queries)\}$:
\end{globallemma}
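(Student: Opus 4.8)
The plan is to observe that this statement reduces entirely to the monotonicity of the image-set map $\EQ{\cdot}$ of Definition~\ref{def:set_query_to_set_eq} under set inclusion, and that no structural property of the tiling closure $\tc$ is actually needed. The apparent complexity introduced by the quantale machinery is a red herring here, since both the hypothesis and the conclusion are already phrased in terms of pre-computed closures $\tc(\Q)$ and $\tc(\P)$.

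First I would unfold the two orderings. On the DQ side, by Definition~\ref{def:determinacy_quantale} the order $\sqsubseteq$ is plain set inclusion, so the hypothesis is exactly $\tc(\Q) \subseteq \tc(\P)$, an inclusion between sets of sets of queries. On the target QoI side, the order is again set inclusion (Section~\ref{sec:QoI}), so the goal $\EQ{\tc(\Q)} \subseteq \EQ{\tc(\P)}$ is likewise an inclusion, now between sets of equivalence relations over $\Omega_D$. Thus the lemma is the single assertion that $\EQ{\cdot}$ carries inclusions to inclusions.

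Next I would invoke this monotonicity directly. By Definition~\ref{def:set_query_to_set_eq}, $\EQ{\mathcal{A}} = \{\eq{Q} \mid Q \in \mathcal{A}\}$ for any set of sets of queries $\mathcal{A}$, so every element of $\EQ{\tc(\Q)}$ has the form $\eq{Q}$ for some $Q \in \tc(\Q)$. By the hypothesis $\tc(\Q) \subseteq \tc(\P)$, that same $Q$ lies in $\tc(\P)$, whence $\eq{Q} \in \EQ{\tc(\P)}$. Since this holds for an arbitrary element, we obtain $\EQ{\tc(\Q)} \subseteq \EQ{\tc(\P)}$, which is the desired conclusion.

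The only point that warrants a moment's care is that $\EQ{\cdot}$ need not be injective: distinct query sets $Q \neq Q'$ may induce the same equivalence relation $\eq{Q} = \eq{Q'}$, so applying $\EQ{\cdot}$ can collapse elements and strictly decrease cardinality. However, such a collapse only identifies image points and can never introduce an element of $\EQ{\tc(\Q)}$ that fails to appear in $\EQ{\tc(\P)}$, so it cannot break the inclusion. Consequently there is no genuinely hard step: the lemma is a one-line consequence of the definition of $\EQ{\cdot}$ together with the fact that both quantale orders are set inclusion.
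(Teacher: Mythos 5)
Your proof is correct and matches the paper's own argument, which simply states that the claim is ``Trivial from Def.~\ref{def:set_query_to_set_eq}''; your write-up is exactly the unfolding of that triviality, namely that the image map $\EQ{\cdot}$ preserves set inclusion. The remark about non-injectivity of $\EQ{\cdot}$ is a reasonable sanity check but adds nothing beyond what the paper implicitly relies on.
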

\begin{proof}
	Trivial from the Def.~\ref{def:set_query_to_set_eq}.
\end{proof}

\begin{globallemma}\label{lemma:app:join_DQ_QoI}
	$\bigvee_i \P_i$ on the DQ implies $\bigvee_i \EQ{\P_i}$ on the QoI defined on $\{\EQ{\Q} \mid \Q \subseteq DL(\queries)\}$.
\end{globallemma}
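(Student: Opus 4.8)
The plan is to reduce the claim to a single commutation identity between the map $\EQ{\cdot}$ and the tiling closure, and then to derive that identity from two correspondences that are essentially already in hand. Writing $\R = \bigcup_i \P_i$, the definitions of $\bigvee$ in the DQ and in the QoI give $\bigvee_i \P_i = \tc(\R)$ and $\bigvee_i \EQ{\P_i} = \tc(\bigcup_i \EQ{\P_i})$. Since $\EQ{\cdot}$ is a pointwise image it trivially commutes with union, $\bigcup_i \EQ{\P_i} = \EQ{\R}$, so the statement follows once I show
\begin{align*}
    \EQ{\tc(\R)} = \tc(\EQ{\R}) \qquad \text{for every } \R \subseteq DL(\queries).
\end{align*}

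First I would record the \emph{mix correspondence}: for any $P \in DL(\queries)$,
\begin{align*}
    P \in \mix(\R) \iff \eq{P} \in \mix(\EQ{\R}).
\end{align*}
This is immediate by unfolding the two definitions of $\mix$, since the DQ side condition ``$x \in [\eq{P}] \Rightarrow \exists Q \in \R.\, x \in [\eq{Q}]$'' is verbatim the QoI side condition applied to the relation $\eq{P}$ against the family $\EQ{\R} = \{\eq{Q} \mid Q \in \R\}$ of Def.~\ref{def:set_query_to_set_eq}. Second, I would record the \emph{order correspondence} $\cl{P_1} \sqsubseteq \cl{P_2} \iff \eq{P_1} \sqsubseteq \eq{P_2}$: the forward direction is exactly Lemma~\ref{lemma:app:ordering_loi_determinacy_lattice}, and the converse is immediate, as $\eq{P_1} \sqsubseteq \eq{P_2}$ in the LoI unfolds by Def.~\ref{def:query_determinacy} into $P_2 \twoheadrightarrow P_1$, i.e. $P_1 \preceq P_2$.

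With these two facts the identity follows by chasing an element through $\tc = {\Downarrow}\,\mix$ in both directions. For $\subseteq$: given $P \in \tc(\R)$, pick $P' \in \mix(\R)$ with $\cl{P} \sqsubseteq \cl{P'}$; the mix correspondence gives $\eq{P'} \in \mix(\EQ{\R})$ and the order correspondence gives $\eq{P} \sqsubseteq \eq{P'}$, so $\eq{P} \in {\Downarrow}\,\mix(\EQ{\R}) = \tc(\EQ{\R})$. For $\supseteq$: any element of $\tc(\EQ{\R})$ lies below some $s \in \mix(\EQ{\R})$, and since the codomain QoI is the one over the query-induced relations $\{\eq{Q} \mid Q \in DL(\queries)\}$, both that element and $s$ have the form $\eq{P}$ and $\eq{P'}$; the two correspondences then pull this back to $P \in {\Downarrow}\,\mix(\R) = \tc(\R)$. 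Chaining the conclusions yields $\bigvee_i \EQ{\P_i} = \tc(\EQ{\R}) = \EQ{\tc(\R)} = \EQ{\bigvee_i \P_i}$, and the fact that $\tc$ is a closure operator (Lemma~\ref{lemma:app:tc_is_closure}) is what lets me absorb any repeated closures cleanly.

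I expect the $\supseteq$ direction of the commutation identity to be the main obstacle, since it relies on every relation appearing in $\mix(\EQ{\R})$ and in its downward closure being itself query-induced. A priori a tiling assembled from classes of query-induced relations, or a coarsening thereof, need not be realized by any single query set; the argument goes through precisely because the target QoI is restricted to $\{\eq{Q} \mid Q \in DL(\queries)\}$, so the downward closure is taken within that sub-structure and the biconditional order correspondence applies throughout.
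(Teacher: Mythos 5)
Your proof is correct, but it is worth knowing that the paper itself offers no argument here at all: its entire proof of this lemma reads ``Trivial from the Def.~\ref{def:set_query_to_set_eq}.'' What you have done is supply the content that the paper waves away, and your decomposition is the natural one: reduce the join-preservation claim to the commutation identity $\EQ{\tc(\R)} = \tc(\EQ{\R})$ (since $\EQ{\cdot}$ obviously commutes with unions and both joins are tiling closures of unions), then derive that identity from the mix correspondence ($P \in \mix(\R) \iff \eq{P} \in \mix(\EQ{\R})$, a verbatim unfolding of the two definitions of $\mix$) together with the biconditional order correspondence (forward direction being Lemma~\ref{lemma:app:ordering_loi_determinacy_lattice}, converse by reading the same unfolding backwards through Def.~\ref{def:query_determinacy}). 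Your closing observation is the most valuable part and is exactly what the paper's ``trivial'' conceals: the $\supseteq$ direction of the commutation identity is \emph{not} free, because a tiling or a coarsening in an unrestricted LoI need not be induced by any query set; the argument goes through only because the codomain QoI is, per the lemma statement, built over $\{\EQ{\Q} \mid \Q \subseteq DL(\queries)\}$, so that $\mix$ and ${\Downarrow}$ range over query-induced relations and every element can be pulled back along the two correspondences. In short: the paper buys brevity at the cost of hiding a genuine hypothesis-dependence; your proof makes that dependence explicit, which is the more defensible presentation.
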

\begin{proof}
	Trivial from the Def.~\ref{def:set_query_to_set_eq}.
\end{proof}

\begin{globallemma}\label{lemma:app:tensor_DQ_QoI}
	Given sets of sets of queries $\Q, \P \subseteq DL(\queries)$, $\tc(\Q) \otimes \tc(\P)$ on the DQ implies $\EQ{\tc(\Q)} \otimes \EQ{\tc(\P)}$ on the QoI defined on $\{\EQ{\Q} \mid \Q \subseteq DL(\queries)\}$.
\end{globallemma}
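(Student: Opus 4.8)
The goal is to show that the map $\EQ{\cdot}$ preserves the tensor, i.e.\ that $\EQ{\tc(\Q)\otimes\tc(\P)} = \EQ{\tc(\Q)}\otimes\EQ{\tc(\P)}$, where the left-hand tensor is taken in the DQ and the right-hand one in the QoI. The plan is to expand both tensors by their definitions and reduce each to the common normal form $\tc(\{\eq{Q'}\sqcup\eq{P'}\mid Q'\in\Q,\,P'\in\P\})$ computed in the target QoI. Three ingredients drive this reduction: (i) that $\EQ{\cdot}$ commutes with the tiling closure, in the sense that $\EQ{\tc(\mathcal R)}=\tc(\EQ{\mathcal R})$ for every $\mathcal R\subseteq DL(\queries)$ (with the left $\tc$ in the DL and the right $\tc$ in the LoI); (ii) that $\EQ{\cdot}$ carries the DL join to the LoI join, $\eq{Q'\sqcup P'}=\eq{Q'}\sqcup\eq{P'}$, which is the two-element instance of Lemma~\ref{lemma:app:join_loi_determinacy_lattice}; and (iii) the weak commutation of the tensor with $\tc$, available on the DQ side as Lemma~\ref{lemma:app:tensor_week_commute_tc} and on the QoI side from Hunt and Sands~\cite{hunt2021quantale}.

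The heart of the argument is ingredient (i), the bridge $\EQ{\tc(\mathcal R)}=\tc(\EQ{\mathcal R})$. First I would record the two structural facts that make it go through. The DL mix is, by its very definition (Section~\ref{sec:determinacy_quantale}), the pullback of the LoI mix along $\eq{\cdot}$: its membership condition $x\in[\eq{P}]\Rightarrow(\exists Q\in\mathcal R.\,x\in[\eq{Q}])$ says exactly that $\eq{P}$ is tiled by $\EQ{\mathcal R}$, so $P\in\mix(\mathcal R)$ iff $\eq{P}\in\mix(\EQ{\mathcal R})$. Moreover, the determinacy order coincides with the LoI refinement order on query-induced relations, since $Q_2\twoheadrightarrow Q_1$ is by Definition~\ref{def:query_determinacy} literally the statement $\eq{Q_1}\sqsubseteq\eq{Q_2}$; this sharpens Lemma~\ref{lemma:app:ordering_loi_determinacy_lattice} to an equivalence. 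Combining these, a query set $Q$ lies in $\tc(\mathcal R)={\Downarrow}\mix(\mathcal R)$ iff $\eq{Q}\sqsubseteq\eq{M}$ for some $M\in\mix(\mathcal R)$, iff $\eq{Q}\sqsubseteq m$ for some $m\in\mix(\EQ{\mathcal R})$, iff $\eq{Q}\in\tc(\EQ{\mathcal R})$; this gives both inclusions of the bridge.

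With the bridge in hand, the final chase is mechanical. On the DQ side, Lemma~\ref{lemma:app:tensor_week_commute_tc} lets me drop the outer closures, so $\tc(\Q)\otimes\tc(\P)=\tc\big(\bigcup_{Q'\in\Q,P'\in\P}(Q'\sqcup P')\big)$; applying the bridge and then the join correspondence (ii) rewrites $\EQ{\cdot}$ of this as $\tc(\{\eq{Q'}\sqcup\eq{P'}\mid Q'\in\Q,\,P'\in\P\})$ in the LoI. On the QoI side, the bridge turns $\EQ{\tc(\Q)}$ and $\EQ{\tc(\P)}$ into $\tc(\EQ{\Q})$ and $\tc(\EQ{\P})$; the QoI weak commutation (iii) removes the outer closures, and expanding the QoI tensor yields the same normal form. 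Equating the two completes the proof. I expect the main obstacle to be the $\supseteq$ direction of the bridge: it requires that every equivalence relation sitting below a mix be itself of the form $\eq{Q}$ for some query set $Q$, which is exactly why the target QoI must be taken over the sublattice $\{\eq{Q}\mid Q\in DL(\queries)\}$ of query-induced relations rather than all of $LoI(\Omega_D)$; keeping this restriction explicit, and checking that $\mix$ and ${\Downarrow}$ stay within it, is the delicate bookkeeping step.
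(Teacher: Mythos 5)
Your proposal is correct, and it rests on the same two ingredients the paper itself cites --- Def.~\ref{def:set_query_to_set_eq} and the join correspondence of Lemma~\ref{lemma:app:join_loi_determinacy_lattice} --- but where the paper's entire proof is the sentence ``follows trivially from'' these, you actually supply the content that makes the claim true. The identity your argument is organized around, the bridge $\EQ{\tc(\mathcal{R})}=\tc(\EQ{\mathcal{R}})$, appears nowhere in the paper, and it is the real substance here: your derivation of it is sound, since the DL $\mix$ is by construction the pullback of the LoI $\mix$ along $Q\mapsto\eq{Q}$, and unfolding Def.~\ref{def:query_determinacy} shows that $Q_1\preceq Q_2$ holds iff $\eq{Q_1}\sqsubseteq\eq{Q_2}$, which (as you note) sharpens Lemma~\ref{lemma:app:ordering_loi_determinacy_lattice}, stated in the paper only as an implication, to an equivalence. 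Your use of Lemma~\ref{lemma:app:tensor_week_commute_tc} on the DQ side and the analogous Hunt--Sands fact on the QoI side to strip the outer closures, reducing both tensors to the common normal form $\tc(\{\eq{Q'}\sqcup\eq{P'}\mid Q'\in\Q,\ P'\in\P\})$, is exactly the computation the paper leaves implicit. Finally, your closing caveat --- that the $\supseteq$ direction of the bridge only holds because the target QoI is taken over the query-induced sublattice $\{\eq{Q}\mid Q\in DL(\queries)\}$, so that $\mix$ and ${\Downarrow}$ must be read as restricted to that sublattice --- is precisely the point the paper glosses over; making it explicit, and checking that the restricted operators stay within the sublattice, is what turns the paper's assertion of triviality into an actual proof. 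In short: same route, but your version is the one a referee could verify.
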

\begin{proof}
	Follows trivially from Def.~\ref{def:set_query_to_set_eq} and Lemma~\ref{lemma:app:join_loi_determinacy_lattice}.
\end{proof}

\DQimplyQoI*
\begin{proof}
	To prove this homomorphism, we need to show that the Determinacy Quantale's ordering, join and tensor, as well as the top and bottom elements imply their QoI counterparts. Lemmas~\ref{lemma:app:ordering_DQ_QoI}, \ref{lemma:app:join_DQ_QoI}, and \ref{lemma:app:tensor_DQ_QoI} provide the proofs of ordering, join and tensor, respectively. 
	The proof of the top element: 
	\begin{itemize}
		\item $DL(\queries) \rightarrow LoI(\EQ{\queries})$
	\end{itemize}
	follows from Def.~\ref{def:set_query_to_set_eq} and Lemma~\ref{lemma:DL_imply_LoI}, and the proof of the bottom element:
	\begin{itemize}
		\item $\varnothing \rightarrow \varnothing$
	\end{itemize}
	is trivial.
\end{proof}

\section{Correctness of Dependency Analysis} \label{app:dep_analysis}
\def\mix{\mathrm{mix}}
\def\QL{\mathrm{QL}}
\def\interp#1{[{#1}_\sim]}
To show that the diagram in Fig.~\ref{fig:proof_steps} commutes, we aim to show commutativity for each cell in it.
In this section, we establish this for the bottommost cell of it. To that end, we need to establish that the QoI point $\llbracket \mathrm{QL}_u \rrbracket$ that corresponds 
to the query list $\mathrm{QL}_u=\{Q_1,\ldots, Q_n\}$ extracted from a program $\prg$ by the dependency analysis 
is an upper bound on the knowledge relation $\llbracket \prg_u \rrbracket$ induced
by $\prg$. 

The basic outline of the argument rests on identifying a
particular \emph{single} equivalence relation $k(\mathrm{QL}, \prg)\in \mix(\interp{Q_1},\ldots,\interp{Q_n})$,
which satisfies $\llbracket \prg \rrbracket \sqsubseteq k(\mathrm{QL}, \prg)$. Intuitively, this relation captures how much information the program could leak at most if it output the full result of every query that its output depends on. As long as the analysis is sound, this is an instantiation of the disjunction represented by QL, with each disjunct selected precisely for those starting configurations where the program's output turns out to depend on the queries enumerated in that disjunct.

For a fixed program $\prg$ and user $u$, we assume the existence of a function $Q=Q_{\prg,u}$ from databases $db\in \Omega_D$ to sets of queries, which returns the set of those queries performed when executing $\prg$ on database $db$
whose result taints some output to the user $u$. 
We formally define the function $Q$ by relying on a taint analysis.
\def\reset{\texttt{set } pc \texttt{ to }\Delta(pc)}
\def\resetd{\texttt{set } pc \texttt{ to }\delta}
\newcommand{\doubledownharpoon}{\mathrel{\downharpoonleft\mkern-5.7mu\downharpoonleft}}
\newcommand{\projB}[2]{#1\negthickspace\doubledownharpoon_{#2}}

\tightpar{Taint analysis}%
The semantics of the taint analysis enriches the normal operational semantics of the language in the sense that it has transitions whenever the operational one does, and acts the same on those components of a configuration that exist in the operational one; so runs in it can be put in one-to-one correspondence to operational ones.

\begin{figure*}[h]
	{
		\setstretch{1.2}
		\footnotesize
		\centering
		$
		\inferrule*[before=\textsc{TA-Skip}]
		{
			\\
		}
		{
			\tuple{\Delta, \texttt{skip}, m, db} \xrightarrow{\epsilon} \tuple{\Delta, \epsilon, m, db}
		}
		$
		
		\nextrule
		
		$
		\inferrule*[before=\textsc{TA-Assign}]
		{
			\tuple{e, m, db}  \downarrow \vl \\
			m' = m[x \mapsto \vl] \\
			\Delta' = \Delta[x \mapsto \Delta(pc) \cup \textstyle \bigcup_{x \in \fv(e)} \Delta(x)]
		}
		{
			\tuple{\Delta, x := e, m, db} \xrightarrow{\epsilon} \tuple{\Delta', \epsilon, m', db}
		}
		$
		
		\nextrule
		
		$
		\inferrule*[before=\textsc{TA-QueryEval}]
		{
			\vl = \eval{q}{db} \\
			m' = m[x \mapsto \vl] \\
			\Delta' = \Delta[x \mapsto \Delta(pc) \cup q]
		}
		{
			\tuple{\Delta, x \leftarrow q, m, db} \xrightarrow{\epsilon} \tuple{\Delta', \epsilon, m', db}
		}
		$
		
		\nextrule
		
		$
		\inferrule*[before=\textsc{TA-IfTrue}]
		{
			\tuple{e, m, db} \downarrow n \\
			n \not= 0 \\\\
			c'_1 = c_1 ; \reset \\\\
			\Delta' = \Delta[pc \mapsto \Delta(pc) \cup \textstyle \bigcup_{x \in \fv(e)} \Delta(x)]
		}
		{
			\tuple{\Delta, \texttt{if} \ e \ \texttt{then} \ c_1 \ \texttt{else} \ c_2, m, db} \xrightarrow{\epsilon} \tuple{\Delta', c'_1, m, db}
		}
		$
		\nextrule
		$
		\inferrule*[before=\textsc{TA-IfFalse}]
		{
			\tuple{e, m, db} \downarrow n \\
			n = 0 \\\\
			c'_2 = c_2 ; \reset \\\\
			\Delta' = \Delta[pc \mapsto \Delta(pc) \cup \textstyle \bigcup_{x \in \fv(e)} \Delta(x)]
		}
		{
			\tuple{\Delta, \texttt{if} \ e \ \texttt{then} \ c_1 \ \texttt{else} \ c_2, m, db} \xrightarrow{\epsilon} \tuple{\Delta', c'_2, m, db}
		}
		$
		
		\nextrule
		
		$
		\inferrule*[before=\textsc{TA-WhileTrue}]
		{
			\tuple{e, m, db}  \downarrow n \\
			n \not= 0 \\\\
			c' = c;\texttt{while} \ e \ \texttt{do} \ c ; \reset \\\\
			\Delta' = \Delta[pc \mapsto \Delta(pc) \cup \textstyle\bigcup_{x \in \fv(e)} \Delta(x)]
		}
		{
			\tuple{\Delta, \texttt{while} \ e \ \texttt{do} \ c, m, db} \xrightarrow{\epsilon} \tuple{\Delta', c', m, db}
		}
		$		
		\nextrule
		$
		\inferrule*[before=\textsc{TA-WhileFalse}]
		{
			\tuple{e, m, db} \downarrow n \\
			n = 0 \\\\
			c' = \reset \\\\
			\Delta' = \Delta[pc \mapsto \Delta(pc) \cup \textstyle \bigcup_{x \in \fv(e)} \Delta(x)]
		}
		{
			\tuple{\Delta, \texttt{while} \ e \ \texttt{do} \ c, m, db} \xrightarrow{\epsilon} \tuple{\Delta', \epsilon, m, db}
		}
		$	
		
		\nextrule
		
		$
		\inferrule*[before=\textsc{TA-Seq}]
		{
			\tuple{\Delta, c_1, m, db} \xrightarrow{\alpha} \tuple{\Delta', c_{1}', m', db'} \\
		}
		{
			\tuple{\Delta, c_1;c_2, m, db} \xrightarrow{\alpha} \tuple{\Delta', c_{1}';c_2, m', db'}
		}
		$
		\nextrule
		$
		\inferrule*[before=\textsc{TA-SeqEmpty}]
		{
			\\
		}
		{
			\tuple{\Delta, \epsilon;c, m, db} \xrightarrow{\epsilon} \tuple{\Delta, c, m, db}
		}
		$
		
		\nextrule
		
		$
		\inferrule*[before=\textsc{TA-Output}]
		{
			\tuple{e, m, db}  \downarrow \vl \\
			\beta = \Delta(pc) \cup \textstyle \bigcup_{x \in \fv(e)} \Delta(x)
		}
		{
			\tuple{\Delta, \texttt{out}(e,u), m, db} \xrightarrow{\tuple{\vl, u, \beta}} \tuple{\Delta, \epsilon, m, db}
		}
		$
		\nextrule
		$
		\inferrule*[before=\textsc{TA-SetPC}]
		{
			\Delta' = \Delta[pc \mapsto \delta]
		}
		{
			\tuple{\Delta, \resetd, m, db} \xrightarrow{\epsilon} \tuple{\Delta', \epsilon, m, db}
		}
		$
		
	}
	\caption{Taint analysis rules}
	\label{fig:taint_analysis_rules}
\end{figure*}

The rules of the taint analysis presented in Fig.~\ref{fig:taint_analysis_rules} are fairly straightforward. We use mapping $\Delta$ to map each variable to a set of dependencies of variables and queries.

The rules for \texttt{if} rely on auxiliary command $\resetd$ to restore the dependency set of $pc$ to its previous state ($\Delta(pc)$) upon exiting the \texttt{if} branch. We sequentially composite this command with the body of \texttt{if} to ensure its execution after leaving the \texttt{if} branch's body. The rules for \texttt{while} use $\resetd$ in a similar manner.

The rule \textsc{TA-Output} uses $\fv(e)$ to extract all the variables of expression $e$, and relies on the union of the $\Delta$s of those variables to calculate $\beta$, which is the set of dependencies the execution up to this output, depended on. 

We extend the definition of trace $\tau$ to a sequence of observations of the form $\tuple{\vl, u, \beta}$, and use the notation $\projB{\tau}{u}$ to denote the sequence of all $\beta$s in $\tau$ that $u$ can observe. We use this notation to define function $Q$ as follows:
\begin{definition}\label{def:function_Q}
	Given a database state $db$ and user $u$, such that $\tuple{c, \initM, db} \mulStepEval{\tau}{u}$, $Q(db)$ is defined as $\{ \beta \mid \beta \in \projB{\tau}{u}\}$
\end{definition}

A proof of Lemma \ref{lemma:type_analysis_soundness_proof} can then proceed by a straightforward induction on the semantics.

In Def.~\ref{def:function_Q} we formally define the function $Q$.
This function satisfies a closure property that informally states that if on
two given databases the output depended on different sets of queries,
then the choice of the set of dependencies itself must have been due
to the outcome of a query which is among the dependencies in both databases
and evaluates to a different result. 
\begin{globallemma}\label{lem:taintclosed}
 For all $db,db'\in \Omega_D$, if $Q(db)\neq Q(db')$, then there exists
a particular query $q\in Q(db)\cap Q(db')$ such that $\eval{q}{db} \neq \eval{q}{db'}$.
\end{globallemma}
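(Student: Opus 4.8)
The plan is to compare the two taint-augmented executions of $\prg$ on $db$ and on $db'$, both started from the shared initial memory $\initM$ and an empty initial taint state, and to exploit the fact that the taint component is value-independent. The key invariant I would record first is that, as long as the two runs apply the same semantic rules (take the same branches and perform the same number of loop iterations), their current command, their taint map $\Delta$, and their sequence of $u$-observations $\tuple{\vl,u,\beta}$ coincide: none of the rules in Fig.~\ref{fig:taint_analysis_rules} inspects the runtime values $\vl$ or the query results $\eval{q}{db}$, only the program text, $\fv(e)$, and the issued query identifiers. In particular two runs that never diverge in control flow produce identical $\beta$-sequences, hence $Q(db)=Q(db')$.

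Second, I would establish a taint-soundness sublemma by induction over a common control-flow prefix: for every variable $x$, if $\eval{q}{db}=\eval{q}{db'}$ for all $q\in\Delta(x)\cap\queries$, then $m(x)=m'(x)$, and consequently any guard $e$ with $\eval{q}{db}=\eval{q}{db'}$ for all $q\in\bigcup_{x\in\fv(e)}(\Delta(x)\cap\queries)$ evaluates to the same truth value in both runs. The base case holds because $\initM$ is shared and carries no query taint, and the inductive step is immediate for \textsc{TA-Assign} and \textsc{TA-QueryEval}, which extend $\Delta(x)$ to cover exactly the queries on which the new value of $x$ can depend.

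Now assume $Q(db)\neq Q(db')$. By the first observation the runs cannot follow identical control flow, so there is a \emph{first} point of divergence: a conditional or loop guard $e$ evaluated under a common taint state $\Delta$ but yielding different truth values. By the contrapositive of the sublemma there is a query $q\in\bigcup_{x\in\fv(e)}(\Delta(x)\cap\queries)$ with $\eval{q}{db}\neq\eval{q}{db'}$; moreover both the true- and the false-rules update the context taint by the \emph{same} assignment $\Delta(pc)\mapsto\Delta(pc)\cup\bigcup_{x\in\fv(e)}\Delta(x)$, so immediately after the divergence both runs carry $q\in\Delta(pc)$. What remains is to show $q\in Q(db)\cap Q(db')$, i.e.\ that $q$ is charged to some $u$-output in each of the two runs.

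The hard part will be exactly this last step, and it is where the intersection (rather than a mere union) in the statement makes the claim delicate: the context taint is restored to its pre-branch value by the reset command appended on exiting a branch, so $q$ may be popped from $\Delta(pc)$ before any output is emitted. I would therefore argue that the divergence on $e$ is precisely what forces $Q(db)$ and $Q(db')$ to differ, by tracking both runs from the divergence until they re-synchronise on a common continuation and showing that every resulting discrepancy in the $u$-output sequence must be emitted while $q$ is still recorded — either directly inside the controlled branch, where $q\in\Delta(pc)$ is copied into the output's $\beta$, or via a variable whose taint was enlarged by $q$ within that branch — so that an output carrying $q$ occurs on \emph{both} sides. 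This reset bookkeeping (guaranteeing that the deciding query has been deposited into an output's dependency set in each run before it leaves the context taint) is the crux of the argument; the remaining cases for loop guards and nested branches then follow the same pattern by induction on the nesting depth of the construct on which the runs first diverge.
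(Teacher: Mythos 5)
The paper states this lemma \emph{without} proof (only the informal gloss preceding it), so there is no reference argument to compare yours against; it must be judged on its own merits. Your first three steps are sound and are the natural formalization of that gloss: taint propagation is value-independent, so identical control flow forces $Q(db)=Q(db')$; hence the runs must diverge at some first guard; and by the contrapositive of taint soundness (your sublemma, essentially Lemma~\ref{lem:taintsoutputs}) some query $q$ in the guard's taint satisfies $\eval{q}{db}\neq\eval{q}{db'}$ and is deposited into $\Delta(pc)$ by both branch rules. The problem is the step you yourself flag as the crux: no amount of ``reset bookkeeping'' can establish $q\in Q(db)\cap Q(db')$, because the lemma as stated is false. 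Take the program $x \leftarrow q;\ \texttt{if}\ x\ \texttt{then}\ (y \leftarrow q_1;\ \texttt{out}(y,u))\ \texttt{else}\ \texttt{skip}$ (with $q$ boolean), and $db,db'\in\Omega_D$ making the guard true on $db$ and false on $db'$. The run on $db$ emits a single output whose dependency set $\beta$ contains $\{q,q_1\}$, so $Q(db)\neq\varnothing$; the run on $db'$ emits \emph{no output at all}, so $Q(db')=\varnothing$. The hypothesis $Q(db)\neq Q(db')$ holds, yet $Q(db)\cap Q(db')=\varnothing$, so the conclusion is unsatisfiable. Your key claim --- that every discrepancy in the $u$-output sequences is emitted while $q$ is still recorded ``on \emph{both} sides'' --- fails exactly here: a discrepancy can consist of one run emitting outputs that the other run never matches with any output whatsoever.

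What your argument can deliver is the weaker conclusion $q\in Q(db)\cup Q(db')$: any output that exists in only one run, or carries a differing $\beta$, is emitted either inside the diverged region (where $\Delta(pc)\ni q$) or depends on a variable assigned there (whose taint then contains $q$), so the disagreeing query shows up in at least one of the two sets; your bisimulation-style plan, carried out by induction as you describe, proves that version. Be aware, however, that the union form cannot simply be substituted into the paper's pipeline: the proof of the final (unnamed) lemma of Appendix~\ref{app:dep_analysis} uses precisely the implication ``$(db,db')\in Q(db)_\sim$ implies $Q(db)=Q(db')$'', which needs the disagreeing query to lie in $Q(db)$ specifically. In the counterexample, taking $db$ from the \texttt{else} side gives $Q(db)=\varnothing$, so \emph{every} $db'$ satisfies $(db,db')\in Q(db)_\sim$ while $Q(db')$ may differ --- so that downstream step is broken by the same example. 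In short, the obstacle you ran into is a genuine defect of the statement (and of its use in the paper), not a gap that more careful bookkeeping on your side could close.
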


We say that database states $db$ and $db'$ are equivalent with respect to a dependency set $S$ (written as $db \approx_{S} db'$) iff $\eval{y}{db} = \eval{y}{db'}$ for all $y \in S$ where $y \in \queries$.
\begin{globallemma}\label{lem:taintsoutputs}
 For all states $db_1$ and $db_2$ and users $u$, if $\tuple{c, \initM, db_1} 
\mulStepEval{t_1}{u}$, %
 $\tuple{c, \initM, db_2} \mulStepEval{t_2}{u}$,
$Q\triangleq Q(db_1)=Q(db_2)$ and $db_1 \approx_{Q} db_2$, then $\proj{t_1}{u} = \proj{t_2}{u}$.
\end{globallemma}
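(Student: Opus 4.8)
The plan is to prove the statement by induction over the two taint-instrumented executions of $c$ on $db_1$ and $db_2$, exploiting the one-to-one correspondence between taint runs and operational runs noted above. First I would fix the set of \emph{agreeing} queries $A = \{ q \in \queries \mid \eval{q}{db_1} = \eval{q}{db_2}\}$ and record that the hypotheses give $Q \subseteq A$, since every query in $Q$ agrees by $db_1 \approx_{Q} db_2$. Call a variable (or the $pc$) \emph{low} in a configuration when every query in its $\Delta$-taint lies in $A$, and \emph{high} otherwise. The intended invariant, maintained along a lockstep run of the two executions, is that the remaining commands coincide, the taint environments $\Delta_1,\Delta_2$ coincide, and every low variable holds equal values in $m_1$ and $m_2$. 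A routine soundness-of-taint subargument (expression values depend only on their free variables, whose taints bound their dependence on query results) then ensures that whenever an output to $u$ fires, its value $\vl$ is determined by the results of the queries in its dependency set $\beta$, so that equal low data forces equal observations.

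The key step is a confinement lemma: while the $pc$ is high, no output to $u$ can occur, and this is exactly where the hypotheses are used. If such an output fired, then by rule \textsc{TA-Output} its dependency set satisfies $\beta \supseteq \Delta(pc)$, so some disagreeing query $q'' \in \Delta(pc)\setminus A$ would belong to $\beta$; but by the definition of $Q(db_i)$ every query tainting a $u$-output lies in $Q(db_i) = Q \subseteq A$, whence $q'' \in A$, a contradiction. Thus any portion of either run executed under a high $pc$ emits nothing observable to $u$.

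With confinement in hand, the induction proceeds by case analysis on the next operational step under a low $pc$. Skip, assignment and query evaluation preserve the invariant directly: low results are computed from low data, which agrees, and a value drawn from a disagreeing query only ever lands in a variable that thereby becomes high. For a conditional or loop whose guard is low, the guard evaluates equally in both runs, so both take the same branch and stay in lockstep. For a high guard the runs may diverge; here I would use the $\texttt{set } pc$ reset commands of the taint semantics to delimit the high region, invoke confinement to conclude that neither run produces a $u$-observation inside it, and re-establish the invariant at the common continuation, where $pc$ is reset to its (low) pre-branch taint and only high variables may now disagree.

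The main obstacle is divergence inside a high context: a high loop may run forever in one execution while the other exits, so the two runs cannot be matched strictly step-for-step and realignment must be argued through confinement rather than a naive lockstep bisimulation. This is the delicate case, and it is precisely the hypothesis $Q(db_1)=Q(db_2)$, together with the closure property of Lemma~\ref{lem:taintclosed} and $Q \subseteq A$, that controls the $u$-relevant control flow: since every observable output is tainted only by agreeing queries, the observable behaviours cannot be separated by any disagreement. I would therefore structure the argument as a ``bridge'' relation that tolerates differing commands while the $pc$ is high and collapses back to the lockstep invariant once the high region is exited, concluding $\proj{t_1}{u} = \proj{t_2}{u}$.
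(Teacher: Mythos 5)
The paper never actually proves Lemma~\ref{lem:taintsoutputs}: it is stated bare in Appendix~\ref{app:dep_analysis} and then used, so there is no official argument to compare yours against and your sketch has to stand on its own. Its best part is the confinement lemma, which is correct and is indeed the only place the hypotheses enter: any $u$-observation carries $\beta \supseteq \Delta(pc)$, every query occurring in such a $\beta$ lands in $Q(db_i)=Q$ (reading $Q(db_i)$ as the union of the dependency sets attached to $u$-observations), and $db_1 \approx_{Q} db_2$ makes all of $Q$ agreeing, so no $u$-output can fire under a $pc$ tainted by a disagreeing query. The gap is in your claim that after a high region the lockstep invariant is re-established with ``the taint environments coincide'' and ``only high variables may now disagree.'' Neither can be restored: the two runs execute \emph{different} branch bodies inside the high region, hence apply different taint updates, and because the semantics resets $pc$ on exit and never taints variables assigned only in the branch that was \emph{not} taken, a variable can leave the high region high in one run but low (with a different value) in the other. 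Your guard case analysis is therefore incomplete: besides ``guard low in both runs'' and ``guard high in both runs'' there is an asymmetric case in which one run's $pc$ stays low inside the conditional. Confinement then silences only the other run, and the low-$pc$ run can emit a $u$-observation that its partner never produces.

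This is not a repairable oversight in the write-up; the lemma as stated is false for the paper's taint semantics. Take a boolean query $q_s$ with $\eval{q_s}{db_1}=\mathsf{true}$, $\eval{q_s}{db_2}=\mathsf{false}$, and
\begin{align*}
c \;\equiv\;\; & w \leftarrow q_s;\;\; z := 1;\\
& \texttt{if}\ w\ \texttt{then}\ \texttt{skip}\ \texttt{else}\ z := 0;\\
& \texttt{if}\ z\ \texttt{then}\ \texttt{out}(5,u)\ \texttt{else}\ \texttt{skip};\\
& \texttt{out}(7,u).
\end{align*}
On $db_1$ the first conditional takes the \texttt{skip} branch, so $z$ keeps value $1$ and taint $\varnothing$; the second guard is low, and the run outputs $5$ then $7$, both with $\beta=\varnothing$. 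On $db_2$ the assignment $z:=0$ executes under $pc$-taint $\{q_s\}$, so $z$ has value $0$ and taint $\{q_s\}$; the second guard is high, its branch is \texttt{skip}, and only $7$ is output, again with $\beta=\varnothing$. Hence $Q(db_1)=Q(db_2)=\varnothing$ and $db_1 \approx_{\varnothing} db_2$ holds vacuously, yet $\proj{t_1}{u} = \tuple{5,u}\tuple{7,u}$ and $\proj{t_2}{u}=\tuple{7,u}$ differ --- and neither is a prefix of the other, so even the paper's termination-insensitive trace equivalence fails. The culprit is the classic blindness of dynamic taint tracking to implicit flows through the untaken branch (the static analysis of Section~\ref{sec:dep_analysis} analyses both branches precisely to avoid this, but the $Q(db)$ used here does not). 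Any correct proof must first repair the semantics (e.g., taint variables assigned in either branch of a high conditional) or weaken the statement; no refinement of your bridge relation alone can close this gap.
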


We then define $k(\QL_u, \prg)$ as the equivalence relation
$$ \{ (db,db') \in \Omega_D^2 \mid Q(db)=Q(db') \wedge (db,db') \in Q(db)_\sim \}, $$
that is, we partition each respective subset of databases $db$ that shares one set of queries $Q(db)$ into equivalence classes according to the knowledge relation induced by $Q(db)$.
\begin{globallemma} $\llbracket \prg \rrbracket_u \sqsubseteq k(\QL_u, \prg) \sqsubseteq \llbracket \QL_u \rrbracket$. 
\end{globallemma}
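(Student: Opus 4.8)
The plan is to prove the two inequalities separately, after observing that $k(\QL_u,\prg)$ is genuinely an equivalence relation: reflexivity, symmetry and transitivity all follow from those of each $\eq{Q(db)}$ together with the fact that the side condition $Q(db)=Q(db')$ is itself an equivalence on $\Omega_D$. Throughout I use that the LoI order $P \sqsubseteq R$ unfolds to the reverse set inclusion $R \subseteq P$ of relations, and that for a single equivalence relation $R$ the domination $R \sqsubseteq \llbracket\QL_u\rrbracket$ by the QoI point of $\QL_u$ holds as soon as $R \in \mix(\EQ{\QL_u})$, where $\EQ{\QL_u} = \{\eq{Q} \mid Q \in \QL_u\}$ (Def.~\ref{def:set_query_to_set_eq}); indeed then $R \in \tc(\EQ{\QL_u})$ and hence $\tc(\{R\}) \subseteq \tc(\EQ{\QL_u})$.

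For the left inequality $\llbracket\prg\rrbracket_u \sqsubseteq k(\QL_u,\prg)$, I would unfold it to $k(\QL_u,\prg) \subseteq \llbracket\prg\rrbracket_u$ and feed the defining condition of $k$ directly into Lemma~\ref{lem:taintsoutputs}. If $(db,db') \in k(\QL_u,\prg)$, then by definition $Q(db)=Q(db')$ and $(db,db') \in \eq{Q(db)}$, i.e.\ $db \approx_{Q(db)} db'$; these are precisely the hypotheses of Lemma~\ref{lem:taintsoutputs}, whose conclusion $\proj{t_1}{u} = \proj{t_2}{u}$ is exactly $(db,db') \in \llbracket\prg\rrbracket_u$. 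So this direction is essentially a restatement of the output lemma.

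For the right inequality $k(\QL_u,\prg) \sqsubseteq \llbracket\QL_u\rrbracket$, the goal is to establish $k(\QL_u,\prg) \in \mix(\EQ{\QL_u})$, i.e.\ that every equivalence class of $k$ coincides with an equivalence class of some $\eq{Q_i}$ with $Q_i \in \QL_u$. Fix a class $[db]_k$ and put $Q^\star = Q(db)$. By Lemma~\ref{lemma:type_analysis_soundness_proof} we have $Q^\star \in \QL_u$, hence $\eq{Q^\star} \in \EQ{\QL_u}$, and it remains to show $[db]_k = [db]_{\eq{Q^\star}}$. The inclusion $[db]_k \subseteq [db]_{\eq{Q^\star}}$ is immediate from the definition of $k$. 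For the converse, take $db' \in [db]_{\eq{Q^\star}}$, so $db$ and $db'$ agree on every query in $Q^\star = Q(db)$; were $Q(db') \neq Q(db)$, Lemma~\ref{lem:taintclosed} would supply a query $q \in Q(db) \cap Q(db')$ with $\eval{q}{db} \neq \eval{q}{db'}$, contradicting $q \in Q^\star$ and $db' \in [db]_{\eq{Q^\star}}$. Hence $Q(db') = Q^\star$ and thus $db' \in [db]_k$. This yields $[db]_k = [db]_{\eq{Q^\star}}$, so every cell of $k$ is a cell of some disjunct and $k \in \mix(\EQ{\QL_u})$.

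The routine half is the left inequality, which is little more than an invocation of Lemma~\ref{lem:taintsoutputs}. The \emph{main obstacle} is the converse inclusion $[db]_{\eq{Q^\star}} \subseteq [db]_k$ in the right inequality: a priori $k$ could split an $\eq{Q^\star}$-class into several pieces according to the value of $Q(\cdot)$, which would destroy the tiling property and leave $k \notin \mix$. The closure property recorded in Lemma~\ref{lem:taintclosed} is exactly what rules this out, guaranteeing that the set of queries on which the output depends can only change across a boundary where one of those very queries already distinguishes the two databases. Everything else is bookkeeping with the definitions of $\mix$, $\tc$ and the LoI order.
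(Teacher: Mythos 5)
Your proposal is correct and takes essentially the same approach as the paper's proof: the left inequality is a direct application of Lemma~\ref{lem:taintsoutputs}, and the right inequality is shown exactly as in the paper by proving $k(\mathrm{QL}_u,\prg) \in \mix$, i.e.\ that each class $[db]_k$ equals $[db]_{\eq{Q(db)}}$, using Lemma~\ref{lemma:type_analysis_soundness_proof} to place $Q(db)$ in $\mathrm{QL}_u$ and Lemma~\ref{lem:taintclosed} to rule out an $\eq{Q(db)}$-class being split by the value of $Q(\cdot)$. Your extra bookkeeping (checking that $k$ is an equivalence relation, and reducing domination by the QoI point to membership in the mix) merely makes explicit what the paper leaves implicit.
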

\begin{proof}
$k(\QL_u, \prg) \sqsubseteq \llbracket \QL_u \rrbracket$: Will in fact show
that $k(\QL_u, \prg) \in \mix(\interp{Q_1},\ldots,\interp{Q_n})$, where $\QL_u=\{Q_1,\ldots,Q_n\}$.
For that, it suffices to show that every equivalence class $x\in [k(\QL_u, \prg)]$ is also 
an equivalence class of one of the $Q_i$. Let $db\in x$ be arbitrary. Then claim that $x\in [Q(db)]$,
which suffices since by Lemma \ref{lemma:type_analysis_soundness_proof}, $Q(db)$ is one of the $Q_i$. To establish this, just need 
to show that $Q(db)=Q(db')$ for all $db'$ such that $(db,db')\in Q(db)_\sim$,
so that $(db,db')\in k(\QL_u, \prg)$ as well. But this follows from Lemma \ref{lem:taintclosed}: if
some $db'$ has $(db,db')\in Q(db)_\sim$, then $\eval{q}{db} = \eval{q}{db'}$ for all $q\in Q(db)$,
but then we must not have $Q(db)\neq Q(db')$.

$\llbracket \prg \rrbracket_u \sqsubseteq k(\QL_u, \prg)$: Straightforward application of Lemma \ref{lem:taintsoutputs}.
\end{proof}

\section{Query Analysis}\label{sec:app:proof:query_analysis_proofs}

\subsection{Symbolic Tuple Ordering}\label{sec:app:proof:symbolic_tuple_ordering}
To show that the symbolic tuples ordering of Def.~\ref{def:symbolic_tuples_ordering} induces a determinacy order and prove Lemma~\ref{lemma:symbolic_tuple_ordering_determinacy_proof} we first need to define the evaluation of a symbolic tuple in a database state.

\tightpar{Symbolic tuple evaluation}
The evaluation of a symbolic tuple $\tuple{T,\phi, \pi}$ in the database state $db$ written as $\eval{\tuple{T, \phi, \pi}}{db}$ is a $\pi$-projection on the set of $db$'s tuples defined on the join of tables in $T$ that satisfy the constraint $\phi$. Formally:

\begin{definition}\label{def:symbolic_tuples_evaluation}
	Given database state $db$ and symbolic tuple $\tuple{T,\phi, \pi}$, $\eval{\tuple{T,\phi, \pi}}{db}$ is defined as:
	\begin{align*}
		\{ \proj{\tp}{\pi} \mid \tp \in \prod_{\tb \in T} \evalTable{\tb}{db}, \tp \models \phi \}
	\end{align*}
	where $\proj{\tp}{\pi}$ is a tuple with its columns limited to those in $\pi$, and $\tp \models \phi$ means that tuple $\tp$ satisfies formula $\phi$.
\end{definition}

We proceed to prove Lemma~\ref{lemma:symbolic_tuple_ordering_determinacy_proof}.
\begin{restatable}{globallemma}{symbolicTupleOrderingDeterminacyProof}\label{lemma:symbolic_tuple_ordering_determinacy_proof}
	Given two sets of queries $Q_1$ and $Q_2$, if $\mathrm{sts}(Q_1) \stsubset \mathrm{sts}(Q_2)$ then $Q_1 \preceq Q_2$.
\end{restatable}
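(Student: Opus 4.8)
The plan is to reduce the determinacy goal to a statement about symbolic tuple evaluation (Def.~\ref{def:symbolic_tuples_evaluation}) and then exhibit an explicit reconstruction. First I would unfold both sides. Since $\mathrm{sts}$ sends a set of queries to a \emph{singleton} label, we have $\mathrm{sts}(Q_1) = \{\tuple{T,\phi,\pi}\}$ and $\mathrm{sts}(Q_2) = \{\tuple{S,\psi,\rho}\}$, where each component is the union of the per-query symbolic tuples. Because the right-hand label is a singleton, the only way to instantiate Def.~\ref{def:symbolic_tuples_ordering} is with $n=1$, so $\mathrm{sts}(Q_1)\stsubset\mathrm{sts}(Q_2)$ unfolds to $T\subseteq S$, $\phi\models\psi$, and $\dep(\phi)\cup\pi\subseteq\rho$, together with well-formedness $\dep(\psi)\subseteq\rho$ of the right-hand tuple. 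On the other side, $Q_1\preceq Q_2$ means $Q_2\twoheadrightarrow Q_1$ (Def.~\ref{def:query_determinacy}): any $db_1,db_2\in\Omega_D$ that agree on every query of $Q_2$ must agree on every query of $Q_1$.

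The bridge between queries and symbolic tuples is the faithfulness of $\mathrm{sts}$ on CQCs: for a single query $q$ we have $\eval{q}{db}=\eval{\mathrm{sts}(q)}{db}$, and joint agreement of $db_1,db_2$ on a query set is captured by agreement on the combined evaluation (using logical copies, so the combined evaluation is the relational product of the individual query results). I would therefore argue that agreement on all of $Q_2$ yields $\eval{\tuple{S,\psi,\rho}}{db_1}=\eval{\tuple{S,\psi,\rho}}{db_2}$ (the easy direction: equal factors give an equal product), and, conversely, that agreement on $\eval{\tuple{T,\phi,\pi}}{\cdot}$ forces agreement on each query of $Q_1$.

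The core step is then purely about symbolic tuples: assuming the four conditions above, $\eval{\tuple{S,\psi,\rho}}{db}$ determines $\eval{\tuple{T,\phi,\pi}}{db}$. The reconstruction I would use is $f(E) = \proj{\{\vec v\in E\mid \vec v\models\phi\}}{\pi}$, which is well defined because $\dep(\phi)\cup\pi\subseteq\rho$ guarantees that $\phi$ and the projection mention only columns surviving in $\rho$. Expanding Def.~\ref{def:symbolic_tuples_evaluation}, using $\pi\subseteq\rho$ and $\dep(\phi)\subseteq\rho$ to push the $\phi$-filter through the $\rho$-projection, and $\phi\models\psi$ to absorb the $\psi$-constraint, gives $f(\eval{\tuple{S,\psi,\rho}}{db}) = \{\proj{\vec w}{\pi}\mid \vec w\in\prod_{t\in S}\eval{t}{db},\ \vec w\models\phi\}$. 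Since $\phi$ and $\pi$ mention only columns of $T\subseteq S$, this matches $\eval{\tuple{T,\phi,\pi}}{db}$ whenever the remaining tables $S\setminus T$ are jointly nonempty.

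The main obstacle is precisely this last clause: a table used by $Q_2$ but not by $Q_1$ could be empty, collapsing the $S$-join while the $T$-join remains nonempty, so that $f$ returns $\varnothing$ where $\eval{\tuple{T,\phi,\pi}}{db}$ is nonempty. The delicate part of the proof is to show this cannot break determinacy: I would argue by cases on the emptiness of the $S\setminus T$ join in $db_1$ and $db_2$, using $\phi\models\psi$ and well-formedness $\dep(\psi)\subseteq\rho$ to show that any disagreement on $\eval{\tuple{T,\phi,\pi}}{\cdot}$ already forces a disagreement on $\eval{\tuple{S,\psi,\rho}}{\cdot}$, hence on some query of $Q_2$. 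The same empty-relation care is what makes the backward bridge (recovering each $q'\in Q_1$ from the combined evaluation) go through. Chaining the three implications then yields $Q_2\twoheadrightarrow Q_1$, that is $Q_1\preceq Q_2$.
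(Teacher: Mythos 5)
Your high-level skeleton matches the paper's: use faithfulness of $\mathrm{sts}$ on CQCs to move between query results and symbolic-tuple evaluations (Def.~\ref{def:symbolic_tuples_evaluation}), then reconstruct the left-hand evaluation from the right-hand one by filtering with $\phi$ and projecting to $\pi$, using $\phi\models\psi$, $\dep(\phi)\cup\pi\subseteq\rho$ and well-formedness. But two things go wrong, one of them fatally. First, your reading of $\mathrm{sts}(Q)$ as a \emph{singleton} label containing one component-wise merged tuple (forcing $n=1$ in Def.~\ref{def:symbolic_tuples_ordering}) is not the reading the paper's proof uses: there, $\mathrm{sts}(Q_i)$ is the set of per-query symbolic tuples, and each tuple of $\mathrm{sts}(Q_1)$ is covered by several \emph{pairwise disjoint} well-formed tuples drawn from $\mathrm{sts}(Q_2)$. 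That disjointness is what makes the intermediate tuple $\st_{\itr}=\tuple{T_1\cup\ldots\cup T_n,\ \phi_1\wedge\ldots\wedge\phi_n,\ \pi_1\cup\ldots\cup\pi_n}$ evaluate to the Cartesian product of the individual evaluations, so that agreement on the queries of $Q_2$ transfers to agreement on $\st_{\itr}$. Under your merged reading this factorization fails as soon as two queries of $Q_2$ mention the same table (the merged constraint then correlates both queries on a single row of that table), and the cross-query ``logical copies'' you invoke to repair it are not provided by $\mathrm{sts}$ or by Def.~\ref{def:symbolic_tuples_ordering}; so even your ``easy direction'' is not available as stated.

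Second, and decisively: the step you defer as the ``delicate part'' is not delicate, it is false, so no case analysis on emptiness can discharge it. Take $Q_1=\{q_A\}$, where $q_A$ returns all of table $A$, and $Q_2=\{q_{AB}\}$, where $q_{AB}$ returns all columns of the product of $A$ and $B$. Then $\mathrm{sts}(Q_1)\stsubset\mathrm{sts}(Q_2)$ holds under your reading and the paper's alike: $\{A\}\subseteq\{A,B\}$, $\mathrm{true}\models\mathrm{true}$, $\col(A)\subseteq\col(A)\cup\col(B)$, and well-formedness is trivial. Yet for $db_1,db_2$ that differ on $A$ and both have $\evalTable{B}{db_1}=\evalTable{B}{db_2}=\varnothing$, the evaluations of $q_{AB}$ and of $\tuple{S,\psi,\rho}$ agree (both are $\varnothing$) while $q_A$ and $\tuple{T,\phi,\pi}$ disagree. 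So a disagreement on $\eval{\tuple{T,\phi,\pi}}{\cdot}$ does \emph{not} force a disagreement on $\eval{\tuple{S,\psi,\rho}}{\cdot}$; neither $\phi\models\psi$ nor well-formedness can help, since every constraint in the example is trivially true, and the reconstruction function $f$ you posit simply does not exist. To your credit, you put your finger on exactly the point the paper's own proof glosses over: its ``Columns/Rows/Tables'' cases only show that projections of product tuples land in the target evaluation, never that every target tuple arises as such a projection, which fails when the join over the extra tables is empty. But the only repair is an additional hypothesis (e.g., that relations are non-empty, or that the covering tuples introduce no possibly-empty extra tables), not the argument you sketch; as written, your proposal hinges on a claim that the example above refutes, and cannot be completed.
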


\begin{proof}
	Assume $\ell_{Q_1} = \mathrm{sts}(Q_1)$ and $\ell_{Q_2} = \mathrm{sts}(Q_2)$. By Def.~\ref{def:symbolic_tuples_ordering} we want to show that if for all symbolic tuples $\tuple{T,\phi, \pi} \in \ell_{Q_1}$, there is a set of well-formed symbolic tuples $S = \tuple{T_1,\phi_1, \pi_1},...,\tuple{T_n,\phi_n, \pi_n}$ such that $S \subseteq \ell_{Q_2}$, $T_1,..., T_n$ are disjoint, $T \subseteq (T_1 \cup ... \cup T_n)$, $\phi \models (\phi_1 \wedge ... \wedge \phi_n)$, and $\dep(\phi) \cup \pi \subseteq (\pi_1 \cup ... \cup \pi_n)$, then $Q_1 \preceq Q_2$.
	
	We assume an intermediate symbolic tuple $\st_{\itr}$ and define it as $\tuple{T_1 \cup ... \cup T_n, \phi_1 \wedge ... \wedge \phi_n, \pi_1 \cup ... \cup \pi_n}$. $\st_{\itr}$ models the symbolic tuples created from the join of $\tuple{T_1,\phi_1, \pi_1},...,\tuple{T_n,\phi_n, \pi_n}$. 
	Additionally, $T_1,..., T_n$ are disjoint, which by the definition of symbolic tuples means that $\pi_1, ..., \pi_n$ and the dependencies of $\phi_1, ..., \phi_n$ are also disjoint, effectively making $\st_{\itr}$ the symbolic tuple of the Cartesian product of tuples $\tuple{T_1,\phi_1, \pi_1},...,\tuple{T_n,\phi_n, \pi_n}$.
	
	We want to show that the symbolic tuples in $S$ can determine $\st_{\itr}$:
	\begin{align*}
		&\forall db_1, db_2 \in \Omega_D. \\
		&\eval{\st}{db_1} = \eval{\st}{db_2} \ \forall \st \in S \rightarrow \eval{\st_{\itr}}{db_1} = \eval{\st_{\itr}}{db_2} \tag{1}
	\end{align*}
	
	For a specific database state $db$, $\eval{\st_{\itr}}{db}$ would give us all the tuples defined on $T_1,..., T_n$ satisfying $\phi_1 \wedge ... \wedge \phi_n$ and projected on the columns in $\pi_1 \cup ... \cup \pi_n$.
		
	Assume there is a pair of databases $db_1, db_2 \in \Omega_D$ such that $\eval{\st}{db_1} = \eval{\st}{db_2} \ \forall \st \in S$ holds but $\eval{\st_{\itr}}{db_1} \not= \eval{\st_{\itr}}{db_2}$. 
	By the assumption $\eval{\st}{db_1} = \eval{\st}{db_2} \ \forall \st \in S$ we know that for all $\st \in S$, if tuple $\tp$ is in $\eval{\st}{db_1}$ it is also in $\eval{\st}{db_2}$, and vice versa.
	
	For $\eval{\st_{\itr}}{db_1} \not= \eval{\st_{\itr}}{db_2}$ to hold, we have to consider two cases:
	\begin{enumerate}
		\item There is a tuple $\tp \in \eval{\st_{\itr}}{db_1}$ such that $\tp$ cannot be constructed from the tuples in set $\{ \tp' \in \eval{\st}{db_1} \mid \st \in S\}$
		\begin{itemize}
			\item[-] All of the symbolic tuples $\st \in S$ are well-formed and $T_1,..., T_n$ are disjoint, which makes $\st_{\itr}$ the symbolic tuple of the Cartesian product of $S$. This means that tuple $\tp \in \eval{\st_{\itr}}{db_1}$ is defined on the product of tables $T_1,..., T_n$, satisfies $\phi_1 \wedge ... \wedge \phi_n$, and projected on $\pi_1 \cup ... \cup \pi_n$. Which means that each tuple $\tp \in \eval{\st_{\itr}}{db_1}$ is constructed from the merge of tuples $\tp_1, ..., \tp_n$ where $\tp_i \in \eval{\tuple{T_i,\phi_i, \pi_i}}{db_1}$ for $i = 1,...,n$. 
			Thus, this case is not possible.
		\end{itemize}
		
		\item There is a tuple $\eval{\st_{\itr}}{db_2}$ such that $\tp$ cannot be constructed from the tuple set $\{ \tp' \in \eval{\st}{db_2} \mid \st \in S\}$
		\begin{itemize}
			\item[-] Similar to the first case.
		\end{itemize}
	\end{enumerate}
	
	Next, we need to show that $\st_{\itr}$ determines $\tuple{T,\phi, \pi}$: 
	\begin{align*}
		&\forall db_1, db_2 \in \Omega_D \\
		&\eval{\st_{\itr}}{db_1} = \eval{\st_{\itr}}{db_2} \rightarrow \eval{\tuple{T,\phi, \pi}}{db_1} = \eval{\tuple{T,\phi, \pi}}{db_2} \tag{2}
	\end{align*}
	
	By $\eval{\st_{\itr}}{db_1} = \eval{\st_{\itr}}{db_2}$ we know $\forall \tp_1 \in \eval{\st_{\itr}}{db_1}, \exists \tp_2 \in \eval{\st_{\itr}}{db_2}$ and $\tp_1 = \tp_2$, and $\forall \tp_2 \in \eval{\st_{\itr}}{db_2}, \exists \tp_1 \in \eval{\st_{\itr}}{db_1}$ and $\tp_2 = \tp_1$.
	
	Intuitively, for a given database $db$, $\eval{\st_{\itr}}{db}$ has has more columns and tuples than $\eval{\tuple{T,\phi, \pi}}{db}$. Symbolic tuple $\tuple{T,\phi, \pi}$ throws away some columns by limiting the resulting tuples to tables in $T$ which is a subset of $T_1 \cup ... \cup T_n$ and projecting on $\pi$ which is a subset of $\pi_1 \cup ... \cup \pi_n$. It also eliminate some rows by applying $\phi$ to the result set, which is stronger than $\phi_1 \wedge ... \wedge \phi_n$.

	We need to show that applying these limitations maintains query determinacy. We consider these cases separately:
	\begin{itemize}
		\item \textbf{Columns}: Projecting away some columns from the evaluation of $\st_{\itr}$ is going to maintain query determinacy.
		We denote by $\proj{\tp}{\pi}$, projecting tuple $\tp$ to only columns specified in $\pi$, additionally we use the notation $\col(T)$ to indicate the columns of $T$. We use the same notation for tuples and write $\col(\tp)$ to denote the set of columns of tuple $\tp$.
		For a tuple $\tp$ such that $\tp \in \eval{\st_{\itr}}{db_1}$ and $\tp \in \eval{\st_{\itr}}{db_2}$, by projecting away some columns from $\tp$ we end up with a new tuple $\tp' = \proj{\tp}{\pi}$ such that $\col(\tp') \subseteq \col(\tp)$. 
		Since $\tp$ is in both $\eval{\st_{\itr}}{db_1}$ and $\eval{\st_{\itr}}{db_2}$, and by the definition of ordering $\pi \subseteq \pi_1 \cup ... \cup \pi_n$, we can conclude $\tp'$ will also be in both $\eval{\tuple{T_1 \cup ... \cup T_n,\phi_1 \wedge ... \wedge \phi_n, \pi}}{db_1}$ and $\eval{\tuple{T_1 \cup ... \cup T_n,\phi_1 \wedge ... \wedge \phi_n, \pi}}{db_2}$, this follows easily from Def.~\ref{def:symbolic_tuples_evaluation}.
		
		\item \textbf{Rows}: Removing some rows from the last step is going to maintain query determinacy.
		By the definition of ordering we know that $\dep(\phi) \cup \pi \subseteq \pi_1 \cup ... \cup \pi_n$ and that $\tuple{T_1,\phi_1, \pi_1},...,\tuple{T_n,\phi_n, \pi_n}$ are well-formed, which means that $\phi$ only applies to the columns that were retrieved by the intermediate tuple (projected to $\pi$). 
		Since $\phi$ is a stronger condition than $\phi_1 \wedge ... \wedge \phi_n$, for a tuple $\tp$ such that $\tp \in \eval{\tuple{T_1 \cup ... \cup T_n,\phi_1 \wedge ... \wedge \phi_n, \pi}}{db_1}$ and $\tp \in \eval{\tuple{T_1 \cup ... \cup T_n,\phi_1 \wedge ... \wedge \phi_n, \pi}}{db_2}$, if $\tp$ satisfies $\phi$ then $\tp$ would also be in both $\eval{\tuple{T_1 \cup ... \cup T_n,\phi, \pi}}{db_1}$ and $\eval{\tuple{T_1 \cup ... \cup T_n,\phi, \pi}}{db_2}$. Otherwise, if $\tp$ is not in one of then, it is not going to be in the other one either.
		
		\item \textbf{Tables}: %
		Similar to the first case, for a tuple $\tp$ such that $\tp \in \eval{\tuple{T_1 \cup ... \cup T_n,\phi, \pi}}{db_1}$ and $\tp \in \eval{\tuple{T_1 \cup ... \cup T_n,\phi, \pi}}{db_2}$, by projecting away the columns of some of the tables from $\tp$ we end up with a new tuple $\tp' = \proj{\tp}{\col(T)}$. 
		Since $\tp$ is in both $\eval{\tuple{T_1 \cup ... \cup T_n,\phi_1 \wedge ... \wedge \phi_n, \pi}}{db_1}$ and $\eval{\tuple{T_1 \cup ... \cup T_n,\phi_1 \wedge ... \wedge \phi_n, \pi}}{db_2}$, and by Def.~\ref{def:symbolic_tuples_ordering} $T \subseteq T_1 \cup ... \cup T_n$, we can conclude $t'$ will also be in both $\eval{\tuple{T,\phi, \pi}}{db_1}$ and $\eval{\tuple{T,\phi, \pi}}{db_2}$.
	\end{itemize}
	
	(1) and (2) would give us:
	\begin{align*}
		&\forall db_1, db_2 \in \Omega_D \\
		&\eval{\st}{db_1} = \eval{\st}{db_2} \ \forall \st \in S \rightarrow 
		\eval{\tuple{T,\phi, \pi}}{db_1} = \eval{\tuple{T,\phi, \pi}}{db_2}
	\end{align*}
	which allows us to conclude $\tuple{T_1,\phi_1, \pi_1} ... \tuple{T_n,\phi_n, \pi_n}$ determines $\tuple{T,\phi, \pi}$.
	
	Repeating this process for all of the symbolic tuples in $\ell_{Q_1}$ would give us $Q_2 \twoheadrightarrow Q_1$ which means $Q_1 \preceq Q_2$.
\end{proof}

\subsection{Symbolic Tuple and DQ Ordering}\label{sec:app:proof:symbolic_tuple_DQ_ordering}
We present the proof of Lemma~\ref{lemma:unfolding_and_quantale}.

\unfoldingAndQuantaleProof*

\begin{proof}
	Assume $\sigma_{\st}(\{Q_1, ..., Q_n\}) = \{\ell_{Q_1}, ..., \ell_{Q_n}\}$ and $\sigma_{\st}(\{P_1, ..., P_m\}) = \{\ell_{P_1}, ..., \ell_{P_m}\}$. We have $\{\ell_{Q_1}, ..., \ell_{Q_n}\}$ $\sqsubseteq_* \{\ell_{P_1}, ..., \ell_{P_m}\}$. 
	
	By the definition of $\sqsubseteq_*$ and Lemma~\ref{lemma:symbolic_tuple_ordering_determinacy_proof}, we know that for each $Q_i$ in $\{Q_1, ..., Q_n\}$ there is at least one $P_j$ in $\{P_1, ..., P_m\}$ such that $Q_i \preceq P_j$.
	
	We apply $\tc$ to $Q_i$ and $P_j$ which would give us $\tc(Q_i) \subseteq \tc(P_j)$. 
	By applying $\tc$ to every element of $\{P_1, ..., P_m\}$, using the basic properties of $\cup$ we will have $\tc(Q_i) \subseteq \tc(P_1) \cup ... \cup \tc(P_m)$ for all $i \in \{1, ..., n\}$.
	
	Since the tiling closure of each $Q_i$ is individually less than $\tc(P_1) \cup ... \cup \tc(P_m)$, their union would still be less that $\tc(P_1) \cup ... \cup \tc(P_m)$ which gives us:
	\begin{align}
		\tc(Q_1) \cup ... \cup \tc(Q_n) \subseteq \tc(P_1) \cup ... \cup \tc(P_m) \tag{1}
	\end{align}
	We apply the tiling closure to both sides of (1) and rely on Lemma~\ref{lemma:app:cup_week_commute_tc} to remove the nested uses of $\tc$, which would give us:
	\begin{align*}
		\tc(Q_1 \cup ... \cup Q_n) \subseteq \tc(P_1 \cup ... \cup P_m)
	\end{align*}
	which by the definition of $\vee$ in the DQ would mean $(Q_1 \vee ... \vee Q_n) \sqsubseteq (P_1 \vee ... \vee P_m)$
\end{proof}

\newpage
\else
\fi

\end{document}